\def \Vec#1{{\bm #1}}
\def \be {\begin{equation}}
\def \ee {\end{equation}}
\newcommand{\Exp}[1]{\,\mathrm{e}^{\mbox{\footnotesize$#1$}}}
\newcommand{\I}{\mathrm{i}}
\newcommand{\ket}[1]{|#1\rangle}
\newcommand{\bra}[1]{\langle#1|}
\renewcommand{\tr}[1]{\mathrm{tr}\left(#1\right)}
\renewcommand{\Tr}[1]{\mathrm{Tr}\left\{#1\right\}}
\newcommand{\expectn}[1]{\langle#1\rangle}
\def \ds{\displaystyle}
\def \Re{\mathrm{Re}\,}
\def \Im{\mathrm{Im}\,}
\newcommand{\vin}[2]{\langle#1,#2\rangle}
\def \del{\partial}
\def \lra{\Leftrightarrow}
\def \argmin{\mathop{\rm argmin}}
\def \cB{{\cal B}}
\def \cM{{\cal M}}
\def \cH{{\cal H}}
\def \cX{{\cal X}}
\def \sofh{{\cal S}({\cal H})}
\def \bbr{{\mathbb R}}
\def \bbc{{\mathbb C}}
\def \sofc2{{\cal S}({\mathbb C}^2)}
\def \lofh{{\cal L}({\cal H})}
\def \lofhh{{\cal L}_h({\cal H})}
\def \cD#1{{\cal D}_{#1}}
\def \Eof#1{{\mathrm{E}}_{{\bm{\theta}}}^{(n)}[#1]}
\def \dI{d_{\mathrm{I}}}
\def \dN{d_{\mathrm{N}}}
\newcommand{\sldin}[2]{\langle#1,#2\rangle_{\rho_{{\bm{\theta}}}}^{\mathrm{S}}}
\newcommand{\rldin}[2]{\langle#1,#2\rangle_{\rho_{{\bm{\theta}}}}^{\mathrm{R}}}
\def \sldQFI{J_{\bm{\theta}}^{\mathrm{S}}}
\def \rldQFI{J_{\bm{\theta}}^{\mathrm{R}}}
\def \sldQFIinv#1{J_{\bm{\theta}}^{\mathrm{S};#1}}
\def \rldQFIinv#1{J_{\bm{\theta}}^{\mathrm{R};#1}}
\def \sldqfi#1{J_{#1}^{\mathrm{S}}}
\def \rldqfi#1{J_{#1}^{\mathrm{R}}}
\def \SLD#1{L_{{\bm{\theta}};#1}^{\mathrm{S}}}
\def \RLD#1{L_{{\bm{\theta}};#1}^{\mathrm{R}}}
\def \SLDdual#1{L_{{\bm{\theta}}}^{\mathrm{S};#1}}
\def \RLDdual#1{L_{{\bm{\theta}}}^{\mathrm{R};#1}}
\newtheorem{theorem}{Theorem}[section]
\newtheorem{lemma}[theorem]{Lemma}
\newtheorem{proposition}[theorem]{Proposition}
\newtheorem{corollary}[theorem]{Corollary}
\newtheorem{remark}[theorem]{Remark}
\newenvironment{proof}[1][Proof:]{\begin{trivlist}
\item[\hskip \labelsep {\bfseries #1}]}{\end{trivlist}
\hfill$\square$}
\newenvironment{definition}[1][Definition]{\begin{trivlist}
\item[\hskip \labelsep {\bfseries #1}]}{\end{trivlist}}
\newcommand{\qed}{\nobreak \ifvmode \relax \else
      \ifdim\lastskip<1.5em \hskip-\lastskip
      \hskip1.5em plus0em minus0.5em \fi \nobreak
      \vrule height0.75em width0.5em depth0.25em\fi}
\begin{document}

\title{Quantum state estimation with nuisance parameters}
\date{\today}
\author{Jun Suzuki}
\ead{junsuzuki@uec.ac.jp}
\address{Graduate School of Informatics and Engineering, The University of Electro-Communications, Tokyo, 182-8585 Japan}

\author{Yuxiang Yang}
\ead{yangyu@phys.ethz.ch}
\address{Institute for Theoretical Physics, ETH Z\"urich, 8093 Z\"urich, Switzerland}

\author{Masahito Hayashi}
\ead{masahito@math.nagoya-u.ac.jp}
\address{Graduate School of Mathematics, Nagoya University, Nagoya, 464-8602, Japan}
\address{Shenzhen Institute for Quantum Science and Engineering,
	Southern University of Science and Technology,
	Shenzhen,
	518055, China}
\address{Center for Quantum Computing, Peng Cheng Laboratory, Shenzhen 518000, China}
\address{Centre for Quantum Technologies, National University of Singapore, 117542, Singapore}

\begin{abstract}
In parameter estimation, nuisance parameters refer to parameters that are not of interest but nevertheless affect the precision of estimating other parameters of interest. For instance, the strength of noises in a probe can be regarded as a nuisance parameter. Despite its long history in classical statistics, the nuisance parameter problem in quantum estimation remains largely unexplored.
The goal of this article is to provide a systematic review of quantum estimation in the presence of nuisance parameters, and to supply those who work in quantum tomography and quantum metrology with tools to tackle relevant problems. 
After an introduction to the nuisance parameter and quantum estimation theory,
we explicitly formulate the problem of quantum state estimation with nuisance parameters. We extend quantum Cram\'{e}r-Rao bounds to the nuisance parameter case and provide a parameter orthogonalization tool to separate the nuisance parameters from the parameters of interest.  In particular, we put more focus on the case of one-parameter estimation in the presence of nuisance parameters, as it is most frequently encountered in practice.
\end{abstract}
 
\noindent{\it Keywords\/}: quantum state estimation, nuisance parameter, quantum Cram\'{e}r-Rao bounds

\submitto{J. Phys.\ A: Math.\ Theor.}

\maketitle 
%=====================================================================

%%%%%%%%%%%%%%%%%%%%%%%%%%%%%%%%%%%%%%%%%%
\section{Introduction}

The nuisance parameter problem, first pointed out by Fisher \cite{fisher35}, is one of the 
practical issues when dealing with parameter estimation problems. 
A parametric family of probability distributions is usually specified by multiple parameters, 
yet one  might be interested in only some of them.  A typical example is 
when one cares only about the expectation value and the variance of a random variable. 
Nuisance parameters are those that appear in the model but are not of interest. 
In principle, one can always try to estimate all parameters, including the nuisance parameters. 
However, in practice, this may be expensive or even impossible sometimes. 
One then wishes to explore more efficient strategies to estimate the parameters of interest 
by suppressing effects of the nuisance parameters. 
In  classical statistics, studies on the nuisance parameters problem have a long history; see, for example, books \cite{amari,lc,bnc,ANbook} 
and some relevant papers \cite{basu77,ka84,rc87,ak88,bs94,zr94}.
On the other hand, few studies on the nuisance parameter problem have been carried out so far in the quantum estimation theory.

Nuisance parameters are not merely a statistical concept. In fact, they persist in many physically relevant tasks of quantum estimation. 
Consider, as a simple example, the task of estimating a time parameter $t$ using identical copies of a two-level atom with Hamiltonian $\sigma_z/2=-(1/2)(\ket{0}\bra{0}-\ket{1}\bra{1})$ in a Ramsey interferometry. Ideally, each of the atoms would be in the pure qubit state $\ket{\psi_t}:=(1/\sqrt{2})(\ket{0}+e^{-it}\ket{1})$ at time $t$. Nevertheless, the atom's evolution is often affected by noise, and thus its state becomes mixed. A typical type of noise is dephasing, which causes the qubit to evolve under the master equation \cite{gardiner2004quantum,huelga1997improvement,ych17} 
$\partial\rho/\partial t=(i/2)[\sigma_z,\rho]+(\gamma/2)(\sigma_z\rho\sigma_z-\rho)$
where $\sigma_z:=\ket{0}\bra{0}-\ket{1}\bra{1}$ is the Pauli matrix and $\gamma\ge 0$ is the decay rate. For instance, $\gamma$ corresponds to the inverse of the relaxation time $T_2$ in Nuclear Magnetic Resonance (NMR), which can be pinpointed to a narrow interval $\mathcal{I}_\gamma\subset\mathbb{R}$ via benchmark tests. Under the dephasing evolution, the qubit state at time $t$ will be in the state
\begin{align}
\rho_{t,\gamma}=e^{-\gamma t}\ket{\psi_t}\bra{\psi_t}+\left(1-e^{-\gamma t}\right)\frac{I}{2}.
\end{align}
For the state $\rho_{t,\gamma}$, the symmetric logarithmic derivative (SLD) quantum Fisher information of the parameter $t$ can be calculated as 
\begin{align}
J^{\rm S}_t=e^{-2\gamma t}+\frac{\gamma^2}{e^{2\gamma t}-1}.
\end{align}
Naively, one might expect that the minimum estimation error $V_t$, quantified by the mean square error (MSE)  of the optimal unbiased estimator, would be $(J^{\rm S}_t)^{-1}$, as predicted by the well-known SLD quantum Cram\'{e}r-Rao (CR) bound $V_t\ge (J^{\rm S}_t)^{-1}$. However, this is not true. As we will show later in Section\ \ref{subsec-example-qubit}, the optimal estimation has an error
\begin{align}
\min\ V_{t}=e^{2\gamma t},
\end{align}
which is strictly larger than $(J^{\rm S}_t)^{-1}$.

The above example showcases that the single-parameter CR bound $V_t\ge (J^{\rm S}_t)^{-1}$ may not be tight even if there is only a single parameter $t$ of interest. The reason behind is that the state is determined not only by  $t$ but also by the unknown noise parameter $\gamma$. As a consequence, $\gamma$, which is not a parameter of interest but nonetheless affects the precision of estimating $t$, should be treated as a nuisance parameter. 
One can see from the example that nuisance parameters arise naturally in estimation problems concerning multiple parameters. As multiparameter quantum metrology \cite{Demkowicz-Dobrzanski2020,Albarelli2020} are prospering, 
the demand for a theory that treats the nuisance parameter problem in the quantum regime is also increasing.
The main purpose of this review is to provide a systematic overview of nuisance parameters in quantum state estimation and tools of determining ultimate precision limits in situations like this example.

In this review article, we provide a systematic overview of quantum estimation in the presence of nuisance parameters. 
Our primary aim is to review some facts in the nuisance parameter problem in classical statistics and then to provide a full survey on this problem in the quantum estimation theory. We stress that there are still many open problems on quantum estimation with nuisance parameters, and we list some of them at the end of this review.

We begin with a brief introduction of nuisance parameter in classical estimation theory (Section \ref{sec:Cnui}), guiding the readers through essential concepts like parameter orthogonalization. We also quickly review quantum estimation theory (Section \ref{sec:Qnui1}), focusing on the multiparameter case since nuisance parameters appear only when the model contains more than one parameter. We then proceed to discuss the nuisance parameter problem in quantum estimation (Section \ref{sec:Qnui2}). We explicitly formulate the problem and extend precision bounds to the nuisance-parameter case. We  provide a parameter orthogonalization tool to separate the nuisance parameters from the parameters we want to estimate (the parameters of interest) (Sections \ref{sec:QpoLocal} and  \ref{sec:QpoGlobal}). Since it is the fundamental and most frequently considered case, we put more focus on the case when there is only one parameter to estimate (Section \ref{sec:1para}). We illustrate the results for nuisance parameters with a couple of examples (Section \ref{sec:example}). 
Finally, we conclude by listing some open questions (Section \ref{sec:conclusion}).

%\newpage

\section{Nuisance parameter problem in classical statistics}\label{sec:Cnui}
This section summarizes the nuisance parameter problem in classical statistics. 
More details can be found in books \cite{amari,lc,bnc,ANbook} 
and relevant papers for this subject \cite{basu77,ka84,rc87,ak88,bs94,zr94}.

\subsection{Cram\'er-Rao inequality in the presence of nuisance parameters}
Let $p_{{\bm{\theta}}}(x)$ be a $d$-parameter family of probability distributions on a real-valued set $\cX$, 
where the $d$-dimensional real vector ${\bm{\theta}}=(\theta_1,\theta_2,\dots,\theta_d)$ 
takes values in an open subset of $d$-dimensional Euclidean space $\Theta\subset\bbr^d$. 
The $d$-parameter family: 
\be \label{c-npara-model}
\cM:=\{p_{\bm{\theta}}\,|\,{\bm{\theta}}\in\Theta\}
\ee
is called a {\it statistical model} or a {\it model} in short. 
To avoid mathematical difficulties, in this review we only consider regular models\footnote{
A statistical model is called regular when 
it satisfies the smoothness of the model (i.e.\ regarding differentiability) and
its Fisher information matrix is strictly positive.}, 
requiring that the mapping ${\bm{\theta}}\mapsto p_{\bm{\theta}}$ is one-to-one and  $p_{\bm{\theta}}$ can be differentiated sufficiently many times. 
This is because regularity conditions simplify several mathematical derivations. For example, 
the variations about different parameters are assumed to be linearly independent such that the Fisher information matrix is not singular. 
More technical regularity conditions can be found in the standard textbook \cite{lc} and also the book on this subject \cite{at95}. 

The standard problem of classical statistics is to find a good estimator that minimizes a given cost (risk) function under a certain condition.
An estimator is a mapping from the data set of   sample size $n$ to the parameter set. 
Let $\hat{{\bm{\theta}}}:\ \cX^n\to\Theta$ be an estimator, and assume that a given $n$-sequence of the observed data 
$x^n=x_1x_2\dots x_n$ is drawn according to an independently and identically distributed (i.i.d.) 
source $p^{(n)}_{{\bm{\theta}}}(x^n)=\Pi_{t=1}^n p_{{\bm{\theta}}}(x_t)$. 
An estimator is called {\it unbiased} if 
\be
\Eof{{\hat{\theta}}_i(X^n)}:=\sum_{x^n\in\cX^n}p_{\bm{\theta}}^{(n)}(x^n) {\hat{\theta}}_i(x^n)=\theta_i \quad(\forall i=1,2,\dots,d), 
\ee
holds for all parameter values ${\bm{\theta}}\in\Theta$. 
It is known that this condition of unbiasedness is often too strong and 
there may not be such an estimator. 
To relax the condition, we consider the Taylor expansion of the above unbiasedness condition. 
An estimator is called {\it locally unbiased} at ${\bm{\theta}}$ if 
\begin{align*}
\Eof{{\hat{\theta}}_i(X^n)}&=\sum_{x^n\in\cX^n}p_{\bm{\theta}}^{(n)}(x^n) {\hat{\theta}}_i(x^n)=\theta_i, \\
\del_j\Eof{{\hat{\theta}}_i(X^n)}&=\sum_{x^n\in\cX^n}\del_j p_{\bm{\theta}}^{(n)}(x^n) {\hat{\theta}}_i(x^n)=\delta_{i,j}, 
\end{align*}
holds for all $i,j=1,2,\dots,d$. Here $\del_j=\del/\del\theta_j$ denotes the $j$th partial derivative 
and $\delta_{i,j}$ is the Kronecker delta. The local unbiasedness condition requires 
the above conditions at the true parameter value ${\bm{\theta}}$. 
Clearly, if $\hat{{\bm{\theta}}}$ is unbiased, then $\hat{{\bm{\theta}}}$ is locally unbiased at any point. 
The converse statement is also true. 

The estimation error is quantified by the mean square error (MSE) matrix, defined as
\begin{equation} \nonumber
V_{{\bm{\theta}}}^{(n)}[\hat{{\bm{\theta}}}]
=\left[ E_{\bm{\theta}}^{(n)}\big[(\hat{\theta}_i(X^n)-\theta_i)(\hat{\theta}_j(X^n)-\theta_j) \big] \right]. 
\end{equation}
It is well-known that the following Cram\'er-Rao (CR) inequality holds for any locally unbiased estimator:
\be\label{crineqfull}
V_{{\bm{\theta}}}^{(n)}[\hat{{\bm{\theta}}}]\ge \frac{1}{n} \big( J_{\bm{\theta}}\big)^{-1}. 
\ee
Here, $J_{\bm{\theta}}=\big[J_{{\bm{\theta}};i,j}\big]$ denotes the Fisher information matrix about the model $\cM$, whose the $(i,j)$ component is defined by
\begin{align}
J_{{\bm{\theta}};i,j}=\sum_{x\in\cX}  p_{\bm{\theta}}(x) \frac{\del \ell_{\bm{\theta}}(x)}{\del\theta_i}\frac{\del \ell_{\bm{\theta}}(x)}{\del\theta_j} %\nonumber\\
=E_{\bm{\theta}}\Big[   \frac{\del \ell_{\bm{\theta}}(X)}{\del\theta_i}\frac{\del \ell_{\bm{\theta}}(X)}{\del\theta_j}\Big], \label{Def:CFI}
\end{align}
with $\ell_{\bm{\theta}}(x):=\log p_{\bm{\theta}}(x)$ being the logarithmic likelihood function and 
$E_{\bm{\theta}}[f(X)]$ being the expectation value of a random variable $f(X)$ with respect to $p_{\bm{\theta}}$. 
(See also the generalized CR inequality in Appendix \ref{sec:AppCstat1}.)

Suppose we are interested in estimating the values of a certain subset of parameters 
$\bm{\theta}_{\mathrm{I}}=(\theta_1,\theta_2,\dots,\theta_{\dI})$ ($\dI<d$), whereas the remaining set of $\dN=d-\dI$ parameters 
$\bm{\theta}_{\mathrm{N}}=(\theta_{\dI+1},\theta_{\dI+2},\dots,\theta_d)$ are not of interest. 
This kind of situation often occurs in various statistical inference problems 
and is of great importance for applications of statistics.  
We denote this partition as ${\bm{\theta}}=(\bm{\theta}_{\mathrm{I}},\bm{\theta}_{\mathrm{N}})$ and assume the similar partition 
for the parameter set $\Theta=\Theta_{\mathrm{I}}\times\Theta_{\mathrm{N}}$. 
In statistics \footnote{There exist several terminologies in statistics. In this paper, we only consider statistical models parametrized by a fixed number of parameters. A nuisance parameter is a certain subset of these parameters of no interest. In some literature, an {\it incident parameter} is also used as synonym.  See, for example, a review \cite{lancaster}.}, the parameters in $\bm{\theta}_{\mathrm{I}}$ are called the {\it parameters of interest} 
and the parameters in $\bm{\theta}_{\mathrm{N}}$ are referred to as the {\it nuisance parameters}.  
Here, an estimator for the parameters of interest returns a parameter value $\bm{\theta}_{\mathrm{I}}$ when given an $n$-sequence of the observed data 
$x^n=x_1x_2\dots x_n$, which is drawn according to i.\,i.\,d.\ source $p^{(n)}_{{\bm{\theta}}}(x^n)=\Pi_{t=1}^n p_{{\bm{\theta}}}(x_t)$. 
Mathematically, it is a map from $\cX^n$ to $\Theta_{\mathrm{I}}$. 
Let $\hat{{\bm{\theta}}}_{\mathrm{I}}=(\hat{{\bm{\theta}}}_1,\hat{{\bm{\theta}}}_2,\dots,\hat{{\bm{\theta}}}_{\dI})$ be an estimator, 
and the MSE matrix for the parameters of interest is defined by
\begin{align} \nonumber
V_{{\bm{\theta}}}^{(n)}[\hat{{\bm{\theta}}}_{\mathrm{I}}]
&=\left[ \sum_{x^n\in\cX^n} p^{(n)}_{{\bm{\theta}}}(x^n)(\hat{\theta}_i(x^n)-\theta_i)(\hat{\theta}_j(x^n)-\theta_j)  \right]\\
&=\left[ E_{\bm{\theta}}^{(n)}\big[(\hat{\theta}_i(X^n)-\theta_i)(\hat{\theta}_j(X^n)-\theta_j) \big] \right], 
\end{align}
where the matrix index takes values in the index set of parameter of interest, i.e., $i,j\in\{1,2,\dots,\dI\}$. 
By definition, the MSE matrix is a $\dI\times \dI$ real positive semidefinite matrix. 

It is important to find a precision bound for the parameter of interest. 
There are two different scenarios: one is when the nuisance parameters $\bm{\theta}_{\mathrm{N}}$ 
are completely known, and hence, $\bm{\theta}_{\mathrm{N}}$ are fixed parameters.  
The other is when we do not have prior knowledge on $\bm{\theta}_{\mathrm{N}}$, yet they appear in the statistical model.  
The former is a $\dI$-parameter problem whose model is 
\be \label{c-mpara-model}
\cM':=\{p_{\bm{\theta}_{\mathrm{I}}}|\bm{\theta}_{\mathrm{I}}\in\Theta_{\mathrm{I}}\}, 
\ee
and the latter is a $d$-parameter problem; $\cM$. 
The well-established result in classical statistics proves the following CR inequality: 
\begin{numcases}{V_{\bm{\theta}}^{(n)}
[\hat{{\bm{\theta}}}_{\mathrm{I}}]
\ge \frac{1}{n}  \!\times\!}
 \ds(J_{\bm{\theta}; \mathrm{I} ,\mathrm{I}} )^{-1}&\hspace{-12pt} ($\bm{\theta}_{\mathrm{N}}$ is known) \label{ccrineq1} \\[1ex]
 \ds J_{\bm{\theta}}^{\mathrm{I},\mathrm{I}}& \hspace{-12pt} ($\bm{\theta}_{\mathrm{N}}$ is not known)\label{ccrineq2}
\end{numcases}
 for estimators satisfying suitable local unbiasedness conditions, which we will discuss in more details in \eqref{lu_cond_class}.

In the above formula, two matrices $J_{\bm{\theta}; \mathrm{I},\mathrm{I}}$ and $J_{\bm{\theta}}^{\mathrm{I},\mathrm{I}}$ 
are defined as the block matrices of the Fisher information matrix $J_{\bm{\theta}}$ and the inverse Fisher information matrix $J_{\bm{\theta}}^{-1}$ 
according to the partition ${\bm{\theta}}=(\bm{\theta}_{\mathrm{I}},\bm{\theta}_{\mathrm{N}})$;
\be\label{fisherblock} 
J_{\bm{\theta}}=\left(\begin{array}{cc}
J_{\bm{\theta};\mathrm{I},\mathrm{I}} 
 & J_{\bm{\theta};\mathrm{I},\mathrm{N}}  \\[0.1ex] 
J_{\bm{\theta};\mathrm{N},\mathrm{I}} 
& 
J_{\bm{\theta};\mathrm{N},\mathrm{N}} 
\end{array}\right), \quad 
J_{\bm{\theta}}^{-1}=\left(\begin{array}{cc}
J_{\bm{\theta}}^{\mathrm{I},\mathrm{I}}& 
J_{\bm{\theta}}^{\mathrm{I},\mathrm{N}}
\\[0.1ex] 
J_{\bm{\theta}}^{\mathrm{N},\mathrm{I}}
&
J_{\bm{\theta}}^{\mathrm{N},\mathrm{N}}
\end{array}\right).   
\ee
We will make frequent use of this notation throughout the review.

The sub-block matrix in inequality \eqref{ccrineq2}, 
\be\label{cpFI}
(J_{\bm{\theta}}^{\mathrm{I},\mathrm{I}})^{-1}
=J_{\bm{\theta};{\mathrm{I}},{\mathrm{I}}}
-J_{\bm{\theta};{\mathrm{I}},{\mathrm{N}}}
(J_{\bm{\theta};{\mathrm{N}},{\mathrm{N}}})^{-1}
J_{\bm{\theta};{\mathrm{N}},{\mathrm{I}}}=:
J_{\bm{\theta}}({\mathrm{I}}|{\mathrm{N}}),
\ee
is known as the {\it partial Fisher information matrix} for the parameters of interest $\bm{\theta}_{\mathrm{I}}$, 
and it accounts the amount of information for $\bm{\theta}_{\mathrm{I}}$ that can be extracted from a given datum. 
Note that equality \eqref{cpFI} follows from well-known Schur's complements in matrix analysis; see, for example, \cite{bhatia}. 

%%%%
Here, we have four remarks concerning the classical CR inequalities \eqref{ccrineq1} and \eqref{ccrineq2}. 
First, the nuisance parameters here are treated as non-random variables. 
When they are random as in the Bayesian setting, the above CR inequality in the presence of the nuisance parameters needs to be 
replaced by the Bayesian version; see, for example, \cite{rm97}.  

Second, the nuisance parameters are defined up to an arbitrary reparameterization, since they are of no interest. 
It will be shown that a different representation of the nuisance parameters does not affect the CR bound for the parameter of interest. 
Consider the following change of parameters  
\be\label{nui_change0}
{\bm{\theta}}=(\bm{\theta}_{\mathrm{I}},\bm{\theta}_{\mathrm{N}})\mapsto{\bm{\xi}}=({\bm{\xi}}_{\mathrm{I}},{\bm{\xi}}_{\mathrm{N}})\mbox{ s.t. }{\bm{\xi}}_{\mathrm{I}}=\bm{\theta}_{\mathrm{I}}. 
\ee 
The condition ${\bm{\xi}}_{\mathrm{I}}=({\xi}_1,{\xi}_2,\dots,{\xi}_{\dI})=\bm{\theta}_{\mathrm{I}}$ ensures that 
the parameters of interest are unchanged while the nuisance parameters can be changed arbitrary. 
Details accounting on this additional degree of freedom will be discussed in Section \ref{sec:NuiPO}. 

Third, the case when the number of nuisance parameters are more than allowed by the regularity condition. 
When a statistical model is defined on the finite set $\cX$ with $|\cX|=D$, the total number of parameters should 
be at most $D-1$. Otherwise, the model is not regular. Now, suppose we have (possibly infinitely) many nuisance parameters violating 
this condition. (The number of parameters of interest $\dI$ should be less than $D-1$.) 
Even in this case, we can still derive the CR inequality for the parameters of interest using the concept of the partial Fisher information matrix \eqref{cpFI} \cite{ka84,ak88}. The detailed exposition of this procedure 
is postponed to at the end of section \ref{sec:AppCstat2-geo}, since we need additional definitions.  

Last, we adapt the unbiasedness conditions to the case when there are nuisance parameters.  
An estimator $\hat{{\bm{\theta}}}_{\mathrm{I}}$ for the parameter of interest is called {\it unbiased} for $\bm{\theta}_{\mathrm{I}}$, if the condition 
\be
E_{\bm{\theta}}^{(n)}[{\hat{\theta}_i}(X^{(n)})]=\theta_i \quad (\forall i=1,2,\dots,\dI),
\ee
holds for all ${\bm{\theta}}\in\Theta$. 
We next introduce the concept of locally unbiasedness for the parameter of interest as follows\footnote{Unbiased estimators are commonly discussed in standard textbooks, but locally unbiased estimators are not touched in introductory textbooks. 
We find the latter concept important when discussing the nuisance parameter problem in quantum estimation theory. 
To our knowledge, the concept of locally unbiasedness for the parameter of interest was introduced in \cite{jsNuipaper}.}. 
\begin{definition}%\label{def:lucond_thetaI}
An estimator $\hat{{\bm{\theta}}}_{\mathrm{I}}$ for the parameter of interest is called {\it locally unbiased for the parameters of interest} at ${\bm{\theta}}$, 
if, for $\forall i\in \{1,\dots,\dI\}$ and $\forall j\in \{1,\dots,d\}$, 
\be \label{lu_cond_class}
E_{\bm{\theta}}^{(n)}[{\hat{\theta}_i}(X^{(n)})]=\theta_i\ \mathrm{and}\  \frac{\del}{\del\theta_j}E_{\bm{\theta}}^{(n)}[{\hat{\theta}_i}(X^{(n)})]=\delta_{i,j}
\ee
are satisfied at a given point ${\bm{\theta}}$. 
\end{definition}
What is important here is an additional requirement that
$\frac{\del}{\del\theta_j}E_{\bm{\theta}}^{(n)}[{\hat{\theta}_i}(X^{(n)})]=0$ for $i=1,2,\dots,\dI$ and $j=\dI+1,\dI+2,\dots,d$. 
This condition can be trivially satisfied if a probability distribution is independent of the nuisance parameters. 
It is clear that if the estimator $\hat{{\bm{\theta}}}_{\mathrm{I}}$ is unbiased for the parameters interest, 
it is locally unbiased for the parameters of interest at any point. 

At first sight, the above definition \eqref{lu_cond_class} depends on the nuisance parameters explicitly.  
One might then expect that the concept of locally unbiased estimator for $\bm{\theta}_{\mathrm{I}}$ 
is not invariant under reparametrization of the nuisance parameters of the form \eqref{nui_change0}. 
The following lemma shows the above definition, in fact, does not depend on parametrization of the nuisance parameters. 
Its proof is given in Appendix \ref{sec:AppPr1}.
\begin{lemma}\label{lem_lucond}
If an estimator $\hat{{\bm{\theta}}}_{\mathrm{I}}$ is locally unbiased for $\bm{\theta}_{\mathrm{I}}$ at ${\bm{\theta}}$, 
then it is also locally unbiased for the new parametrization defined by an arbitrary transformation of the form \eqref{nui_change0}. 
That is, if two conditions \eqref{lu_cond_class} are satisfied, then the following conditions also hold. 
\be \label{lu_cond2} 
E_{\bm{\xi}}^{(n)}[{\hat{\theta}_i}(X^{(n)})]={\xi}_i\ \mathrm{and}\  \frac{\del}{\del{\xi}_j}E_{\bm{\xi}}^{(n)}[{\hat{\theta}_i}(X^{(n)})]=\delta_{i,j},
\ee
for $\forall i\in \{1,\dots,\dI\}$ and $\forall j\in \{1,\dots,d\}$. 
\end{lemma}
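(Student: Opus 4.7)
The plan is to prove the lemma by a direct chain-rule computation, exploiting the special structure of the reparametrization \eqref{nui_change0} that fixes the components of interest.

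First I would observe that the estimator $\hat{\bm{\theta}}_{\mathrm{I}}$ is, by definition, a map from the data space $\cX^n$ to $\Theta_{\mathrm{I}}$; it does not depend on how we coordinatize the parameter manifold. Hence $E_{\bm{\xi}}^{(n)}[\hat{\theta}_i(X^{(n)})]=E_{\bm{\theta}}^{(n)}[\hat{\theta}_i(X^{(n)})]$ whenever ${\bm{\xi}}$ and ${\bm{\theta}}$ label the same distribution. Combining this with the hypothesis and the defining property $\xi_i=\theta_i$ for $i\le\dI$ immediately yields the expectation part of \eqref{lu_cond2}.

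Next I would handle the derivative part. Writing ${\bm{\theta}}$ as a function of ${\bm{\xi}}$ and applying the chain rule gives
\be
\frac{\del}{\del\xi_j}E_{\bm{\xi}}^{(n)}[\hat{\theta}_i(X^{(n)})]
=\sum_{k=1}^{d}\frac{\del\theta_k}{\del\xi_j}\,\frac{\del}{\del\theta_k}E_{\bm{\theta}}^{(n)}[\hat{\theta}_i(X^{(n)})].
\ee
By the original local unbiasedness \eqref{lu_cond_class}, the $k$-th factor on the right equals $\delta_{i,k}$ for every $k\in\{1,\dots,d\}$, so the sum collapses to $\del\theta_i/\del\xi_j$. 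The decisive observation is that the constraint $\bm{\xi}_{\mathrm{I}}=\bm{\theta}_{\mathrm{I}}$ means $\theta_i=\xi_i$ identically for $i\le\dI$, independently of $\xi_{\dI+1},\dots,\xi_d$. Therefore $\del\theta_i/\del\xi_j=\delta_{i,j}$ holds for all $i\in\{1,\dots,\dI\}$ and all $j\in\{1,\dots,d\}$, which is exactly the derivative part of \eqref{lu_cond2}.

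There is no real obstacle here; the content of the lemma is essentially that the directional derivatives appearing in \eqref{lu_cond_class} constitute the $i$-th row of the Jacobian of $\bm{\theta}\mapsto E_{\bm{\theta}}^{(n)}[\hat{\bm{\theta}}_{\mathrm{I}}]$, and rows of a Jacobian transform covariantly under reparametrization. The only point requiring mild care is to confirm that the reparametrization is admissible and that the inverse Jacobian $\del\bm{\theta}/\del\bm{\xi}$ exists and inherits the block-triangular shape with identity upper-left block; this follows because the Jacobian $\del{\bm{\xi}}/\del{\bm{\theta}}$ is block-lower-triangular with identity in the upper-left block, so its invertibility reduces to invertibility of $\del\bm{\xi}_{\mathrm{N}}/\del\bm{\theta}_{\mathrm{N}}$, guaranteed by the reparametrization being a diffeomorphism.
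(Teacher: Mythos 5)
Your proof is correct and follows essentially the same route as the paper's: a chain-rule computation in which the hypothesis $\del E_{\bm{\theta}}^{(n)}[\hat{\theta}_i]/\del\theta_k=\delta_{i,k}$ kills all contributions except $\del\theta_i/\del\xi_j$, which equals $\delta_{i,j}$ because the reparametrization \eqref{nui_change0} fixes the parameters of interest. The paper organizes the bookkeeping slightly differently (it writes out the transformed partial derivatives and notes that the nuisance-direction terms vanish by local unbiasedness), but the content is the same.
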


Here we present a sketched proof of inequalities \eqref{ccrineq1} and \eqref{ccrineq2}, leaving the detailed derivation to  Appendix \ref{sec:AppCstat1}.
When $\bm{\theta}_{\mathrm{N}}$ is known, the model $\cM$ is reduced 
to a $\dI$-dimensional model $\cM'$ without any nuisance parameter. 
Hence, we can apply the standard CR inequality to get inequality \eqref{ccrineq1}. 
When $\bm{\theta}_{\mathrm{N}}$ is not completely known, on the other hand, 
the model is $d$-dimensional.  
Consider an estimator ${\hat{{\bm{\theta}}}}$ for the all parameters ${\bm{\theta}}=(\bm{\theta}_{\mathrm{I}},\bm{\theta}_{\mathrm{N}})$ 
and denote its MSE matrix by $V^{(n)}_{\bm{\theta}}[{\hat{{\bm{\theta}}}}]$, 
then the CR inequality \eqref{crineqfull} for this $d$-parameter model holds for any locally unbiased estimator 
${\hat{{\bm{\theta}}}}=(\hat{{\bm{\theta}}}_{\mathrm{I}},\hat{{\bm{\theta}}}_{\mathrm{N}}):\ \cX^n\rightarrow \Theta=\Theta_{\mathrm{I}}\times\Theta_{\mathrm{N}}$. 
Let us decompose the MSE matrix as 
\be\label{Cmseblock} 
V^{(n)}_{\bm{\theta}}[{\hat{{\bm{\theta}}}}]=
\left(\begin{array}{cc}
V^{(n)}_{\bm{\theta};\mathrm{I},\mathrm{I}} & 
V^{(n)}_{\bm{\theta};\mathrm{I},\mathrm{N}} \\[0.1ex] 
V^{(n)}_{\bm{\theta};\mathrm{N},\mathrm{I}} 
& 
V^{(n)}_{\bm{\theta};\mathrm{N},\mathrm{N}} 
\end{array}\right). 
\ee
Then, applying the projection onto the subspace $\bm{\theta}_{\mathrm{I}}$ to the above matrix inequality, we obtain the desired result \eqref{ccrineq2}.

\subsection{Discussions on the classical Cram\'er-Rao inequality}
We discuss the above result concerning the CR inequalities \eqref{ccrineq1} and \eqref{ccrineq2} in detail. 
First, it is important to emphasize that two different scenarios 
deal with two different statistical models. In the presence of nuisance parameters, 
the best we can do is to estimate all parameters and hence the precision 
bound is set by the standard CR inequality for the $d$-parameter model. 

Second, when there exist nuisance parameters, the precision bound 
$J_{\bm{\theta}}^{\mathrm{I},{\mathrm{I}}}$ still depends on the unknown values of $\bm{\theta}_{\mathrm{N}}$. 
It is then necessary to eliminate the nuisance parameter from this expression. 
There are several strategies known in classical statistics; see, for example, \cite{basu77}.  
The simplest one is to marginalize the effect of nuisance parameter by 
taking expectation value of $J_{\bm{\theta}}^{\mathrm{I},{\mathrm{I}}}$ with respect to 
some prior distribution for the nuisance parameter $\bm{\theta}_{\mathrm{N}}$. 
The other is to adopt the worst case by calculating 
$\max_{\bm{\theta}_{\mathrm{N}}\in\Theta_\mathrm{N}}
J_{\bm{\theta}}^{\mathrm{I},{\mathrm{I}}}$. 

Third, the existence of a sequence of estimators attaining the 
equality in the asymptotic limit follows from the standard argument. 
When no nuisance parameter exists, the maximum likelihood estimator (MLE) 
for the parameter of interest $\bm{\theta}_{\mathrm{I}}$ saturates the bound. 
If we have some nuisance parameters in the model, we can 
also apply the MLE for the all parameters ${\bm{\theta}}=(\bm{\theta}_{\mathrm{I}},\bm{\theta}_{\mathrm{N}})$. 
This asymptotically saturates the CR inequality \eqref{crineqfull} as well as inequality \eqref{ccrineq2}. 

Fourth, we have the (asymptotically) achievable precision bound for the MSE matrix given by \eqref{ccrineq2}, 
but this bound is not practically useful in general. This is because one has to 
estimate {\it all parameters} in order to achieve it asymptotically by using MLE. 
It is usually very expensive to solve the likelihood equation in general. 
In particular, this is the case when the number of nuisance parameters are large compared to that of parameters of interest. 
Thus, there remain many problems to find efficient estimators in the presence of nuisance parameters. 
For example, \cite{basu77} lists ten different methods of dealing with this problem. 

Fifth, there exist several different derivations of the CR inequality \eqref{ccrineq2} in the presence of nuisance parameters. 
Based on each individual proof, we can give different interpretations of this result.  
In Appendix \ref{sec:AppCstat2}, we give two alternative proofs. A nontrivial part of this fact is that 
all three different methods lead to the same precision bound. 

Last, it is well known that the following matrix inequality holds. 
\begin{align}%\nonumber
J_{\bm{\theta}}^{\mathrm{I},\mathrm{I}}
= \left(J_{\bm{\theta};\mathrm{I},\mathrm{I}}
-J_{\bm{\theta};\mathrm{I},\mathrm{N}}
(J_{\bm{\theta};{\mathrm{N}},{\mathrm{N}}})^{-1}
J_{\bm{\theta};{\mathrm{N}},{\mathrm{I}}}\right)^{-1}
\ge ( J_{\bm{\theta};{\mathrm{I}},{\mathrm{I}}})^{-1}. \label{nuiineq}
\end{align}
Here the equality holds if and only if the off-diagonal block matrix vanishes, i.e., $J_{\bm{\theta};{\mathrm{I}},{\mathrm{N}}} =0$.  
When $J_{\bm{\theta};{\mathrm{I}},{\mathrm{N}}} =0$ holds at ${\bm{\theta}}$, we say that two sets of parameters $\bm{\theta}_{\mathrm{I}}$ and $\bm{\theta}_{\mathrm{N}}$ are {\it locally orthogonal with respect 
to the Fisher information matrix} at ${\bm{\theta}}$ or simply $\bm{\theta}_{\mathrm{I}}$ and $\bm{\theta}_{\mathrm{N}}$ are orthogonal at ${\bm{\theta}}$. 
When $J_{\bm{\theta};{\mathrm{I}},{\mathrm{N}}} =0$ holds for all ${\bm{\theta}} \in\Theta$, $\bm{\theta}_{\mathrm{I}}$ and $\bm{\theta}_{\mathrm{N}}$ are 
called {\it globally orthogonal}. In the next subsection, we discuss some of the consequences of parameter orthogonality. 

In summary, 
the MSE becomes worse in the presence of nuisance parameters when compared 
with the case of no nuisance parameters. We can regard the difference of two the bounds 
as the loss of information due to nuisance parameters. 
This quantity is defined by 
\be \label{losscinfo}
\Delta J_{\bm{\theta}}^{-1}:= 
J_{\bm{\theta}}^{\mathrm{I},\mathrm{I}}- 
( J_{\bm{\theta}; \mathrm{I},\mathrm{I}})^{-1}. 
\ee
When the values of $\Delta J_{\bm{\theta}}^{-1}$ is large (in the sense of matrix inequality), 
the effect of nuisance parameters is more noticeable. From the above mathematical 
fact, we have that no loss of information is possible if and only if two sets of 
parameters are globally orthogonal, i.e.,
\be
\Delta J_{\bm{\theta}}^{-1}=0 \ \Leftrightarrow\ 
J_{\bm{\theta};{\mathrm{I}},{\mathrm{N}}} =0
\ee
for all values of ${\bm{\theta}}=(\bm{\theta}_{\mathrm{I}},\bm{\theta}_{\mathrm{N}})\in\Theta$. 

\subsection{Parameter orthogonalization}\label{sec:NuiPO}
\subsubsection{Local orthogonalization}
For a given statistical model with nuisance parameter(s), 
the form of the precision bound appears different when the parameter of interest and 
the nuisance parameter are not orthogonal to each other, i.e., $\Delta J_{\bm{\theta}}^{-1}\neq 0$. 
Therefore, this orthogonality condition is a key ingredient 
when discussing parameter estimation problems with nuisance parameters. 
This was pointed out in the seminal paper by Cox and Reid whose result is 
briefly summarized below \cite{rc87,amari_comment}. 

Denote the $i$th partial derivative of the logarithmic likelihood function $\ell_{\bm{\theta}}(x)=\log p_{\bm{\theta}}(x)$, 
which is known as the {\it score function}, by 
\be\label{cscorefn}
u_{{\bm{\theta}};i}(x):=\frac{\del}{\del\theta_i}\ell_{\bm{\theta}}(x).
\ee 
Here after, we set the sample size $n=1$ to simplify notation. 
Then, the ($i,j$) component of the Fisher information matrix can be expressed as
$J_{{\bm \theta};i,j}=E_{\bm{\theta}}[u_{{\bm{\theta}};i}(X)u_{{\bm{\theta}};j}(X)]$.  
The local orthogonality condition 
$J_{{\bm \theta};i,j}=0$ for $i=1, \ldots, \dI$ and $j=\dI+1,\ldots, d$ 
is equivalent to 
the statistical independence of the two sets of random variables 
$u_{{\bm{\theta}};\mathrm{I}} (X)=(u_{{\bm{\theta}};1} (X),u_{{\bm{\theta}};2} (X),\dots,u_{{\bm{\theta}};\dI} (X))$ and 
$u_{{\bm{\theta}};\mathrm{N}} (X)=(u_{{\bm{\theta}};\dI+1} (X),u_{{\bm{\theta}};\dI+2} (X),\dots,u_{{\bm{\theta}};d} (X))$. As an example, consider a two-parameter model 
with $\theta_2$ a nuisance parameter. 
When $\theta_1$ and $\theta_2$ are orthogonal, two MLEs
$\hat\theta_1$ and $\hat{\theta}_2$ become independent when the experiment is repeated for $n\to\infty$ times.\  
As a consequence, the asymptotic error for $\theta_1$ becomes independent of knowing the true value of $\theta_2$ or not. 
A familiar example of this phenomenon is  the problem of estimating the mean value of a normal distribution 
without knowing its variance \cite{amari,lc,bnc,ANbook}. 

It is well known that any two sets of parameters can be made orthogonal at each point locally 
by an appropriate smooth invertible map from a given parametrization to the new parametrization: 
\be\nonumber%\label{nui_change}
{\bm{\theta}}=(\bm{\theta}_{\mathrm{I}},\bm{\theta}_{\mathrm{N}})\mapsto{\bm{\xi}}=({\bm{\xi}}_{\mathrm{I}},{\bm{\xi}}_{\mathrm{N}})\mbox{ s.t. }{\bm{\xi}}_{\mathrm{I}}=\bm{\theta}_{\mathrm{I}}. 
\ee 
Here, we stress that although the equation ${\bm{\xi}}_{\mathrm{I}}=\bm{\theta}_{\mathrm{I}}$ holds,
$u_{{\bm{\theta}};i}$ does not necessarily equal to $u_{{\bm{\xi}};i}$ even for $i=1, \ldots, \dI$. 
 That is, 
the partial derivative $\frac{\partial}{\partial \theta_i}|_{\bm{\theta}_{\mathrm{N}}}$
does not necessarily equal 
the partial derivative $\frac{\partial}{\partial \xi_i}|_{\bm{\xi}_{\mathrm{N}}}$
for $i=1, \ldots, \dI$. (As an example, see the transformation law for the partial derivatives \eqref{localortho_pd} below.) 
This statement about the local orthogonalization holds for an arbitrary model with any number of parameters \cite{amari}. 
For example, consider the following new parametrization for the nuisance parameters $\bm{\theta}_{\mathrm{N}}\mapsto{\bm{\xi}}_{\mathrm{N}}$: 
\begin{align} \label{localortho}
{\bm{\xi}}_{\mathrm{I}}&=\bm{\theta}_{\mathrm{I}},\\ \nonumber
{\bm{\xi}}_{\mathrm{N}}&=
\bm{\theta}_{\mathrm{N}}+
\left(J_{\bm{\theta}_0;{\mathrm{N}, \mathrm{N}}}\right)^{-1}
J_{\bm{\theta}_0;\mathrm{N},\mathrm{I}}
%\big|_{\bm{\theta}_{\mathrm{I}}=\bm{\theta}_{\mathrm{I}}(0)}
(\bm{\theta}_{\mathrm{I}}-\bm{\theta}_{\mathrm{I},0}), 
\end{align}
 where $\bm{\theta}_0=(\bm{\theta}_{\mathrm{I},0},\bm{\theta}_{\mathrm{N},0})$  is an arbitrary reference point. 
Under this coordinate transformation, 
we can work out that the partial derivatives $\frac{\partial}{\partial \theta_i}$ are transformed as follows.
\begin{align} \label{localortho_pd}
\frac{\partial}{\partial \bm{\xi}_{\mathrm{I}}}&=\frac{\partial}{\partial \bm{\theta}_{\mathrm{I}}}
-J_{\bm{\theta};\mathrm{I},\mathrm{N}}\left(J_{\bm{\theta};{\mathrm{N}, \mathrm{N}}}\right)^{-1}
\frac{\partial}{\partial \bm{\theta}_{\mathrm{N}}},\\
\frac{\partial}{\partial \bm{\xi}_{\mathrm{N}}}&=\frac{\partial}{\partial \bm{\theta}_{\mathrm{N}}},
\end{align}
where $\frac{\partial}{\partial \bm{\theta}_{\mathrm{I}}}=(\frac{\partial}{\partial \theta_1},
\frac{\partial}{\partial \theta_2},\ldots,\frac{\partial}{\partial \theta_{\dI}})^{\mathrm T}$ 
and $\frac{\partial}{\partial \bm{\theta}_{\mathrm{N}}}=(\frac{\partial}{\partial \theta_{\dI+1}},\frac{\partial}{\partial \theta_{\dI+2}},\ldots,\frac{\partial}{\partial \theta_d})^{\mathrm T}$. 
$\frac{\partial}{\partial \bm{\xi}_{\mathrm{I}}}$ and $\frac{\partial}{\partial \bm{\xi}_{\mathrm{N}}}$ are defined similarly. 
With this new parametrization, $\bm{\xi}_{\mathrm{I}}$ and $\bm{\xi}_{\mathrm{N}}$  are orthogonal at this point, 
i.e., $E_{\bm{\xi}}[u_{{\bm{\xi}};i}(X)u_{{\bm{\xi}};j}(X)]=0$ holds for $i=1,2,\ldots,\dI$ and $j=\dI+1,\ldots, d$. 

\subsubsection{Geometrical picture}\label{sec:AppCstat2-geo} 
It is worth emphasizing a simple geometrical picture of this local orthogonalization procedure \cite{ka84,amari,ak88,zr94,ANbook}, 
since we can immediately extend it to the quantum case. 
We define the tangent space of a statistical manifold $\cM$ at $\theta$, spanned by the score functions, by
\be
T_{\bm{\theta}}(\cM):=\mathrm{span}\{ u_{{\bm{\theta}};i}\}_{i=1}^d.
\ee
We introduce an inner product for the elements of the tangent space by 
\be
{\vin{u}{v}}_{\bm{\theta}}:=E_{\bm{\theta}}[u(X)v(X)],\quad u,v\in T_{\bm{\theta}}(\cM). 
\ee
Naturally, the Fisher information matrix $J_{{\bm \theta};i,j}$ can be regarded as 
a metric tensor of a Riemannian metric on $\cM$, since 
$J_{{\bm \theta};i,j}={\vin{u_{{\bm{\theta}};i}}{u_{{\bm{\theta}};j}}}_{\bm \theta}$ holds. 
In fact, Chentsov proved that the Fisher information matrix is the only unique Riemannian metric, 
which is invariant under the Markov mapping. (See \cite{ANbook}.)  
In the following, we will denote the $(i,j)$ component of the inverse of the Fisher information matrix by $J_{{\bm{\theta}}}^{i,j}$. 

Consider the set of score functions $(u_{{\bm{\theta}};\mathrm{I}} (X),u_{{\bm{\theta}};\mathrm{N}} (X))$, 
and introduce the linear subspace spanned by the score functions for the nuisance parameters by
\be
T_{\bm{\theta};\rm{N}}(\cM):=\mathrm{span}\{ u_{{\bm{\theta}};i}\}_{i=\dI+1}^d\subset T_{\bm{\theta}}(\cM).
\ee
Let $u_{{\bm{\theta}};\rm{N}}^i:=\sum_{j=\dI+1}^d  (J_{{\bm \theta};\rm{N},\rm{N}}^{-1})_{j,i}  u_{{\bm{\theta}};j}$ 
($i=\dI+1,\dI+2,\ldots,d$) be the dual basis for the tangent subspace $T_{\bm{\theta};\rm{N}}(\cM)$. 
The canonical projection ${\cal P}$ onto the tangent space at $\rm{\theta}$ for the nuisance parameters is 
given by 
\[
u\mapsto {\cal P}(u)=\sum_{i=\dI+1}^{d} \vin{u_{{\bm{\theta}};\rm{N}}^i}{u}_{\bm{\theta}}\,  u_{{\bm{\theta}};i} .
\]
By definition, the projection onto the orthogonal complement of the tangent space of the nuisance parameters 
is expressed as $u\mapsto u-{\cal P}(u)$ for $u\in T_{\bm{\theta}}(\cM)$. 
Therefore, this orthogonal projection of the score functions for the parameters of interest is 
\be\label{eq:effective_score}
\tilde{u}_{{\bm{\theta}};i}= u_{{\bm{\theta}};i}-\sum_{j,k=\dI+1}^{d} J_{\bm{\theta};i,j}\left(\left(J_{{\bm \theta};\rm{N},\rm{N}}\right)^{-1}\right)_{j,k} u_{{\bm{\theta}};k} 
\quad (i=1,2,\ldots,d_{\dI}). 
\ee
This projected score functions $\tilde{u}_{{\bm{\theta}};\mathrm{I}}
=\big( \tilde{u}_{{\bm{\theta}};1},\tilde{u}_{{\bm{\theta}};2},\ldots,\tilde{u}_{{\bm{\theta}};\dI}  \big)$ 
can be regarded as the effective score functions for the parameters of interest. 
It is worth noting that $\tilde{u}_{{\bm{\theta}};\mathrm{I}}$ can also be calculated directly by the coordinate transformation \eqref{localortho} as \eqref{localortho_pd}. 

The partial Fisher information matrix is nothing but 
the Fisher information matrix calculated by this effective score functions about the parameters of interest:
\[
J_{\bm{\theta}}({\mathrm{I}}|{\mathrm{N}})
=\Big[{\vin{\tilde{u}_{\bm{\theta};i} }{\tilde{u}_{\bm{\theta};j} }}_{\bm{\theta}} \Big], 
\] 
for $i,j=1,2,\ldots,\dI$. As will be demonstrated in section \ref{sec:QpoLocal}, we can construct the partial quantum Fisher information matrix for the parameters of interest in the same procedure.  

As an important application of geometrical picture, we discuss the case when 
there are (possibly infinitely) many nuisance parameters \cite{ka84,ak88}. 
In this case, tangent vectors for the nuisance parameters $u_{{\bm{\theta}};\mathrm{N}} (X)$ are no longer linearly independent. 
Nevertheless, we can derive the CR inequality for the parameters of interest as follows. 
Note that the maximum number for the nuisance parameters is $|\cX|-\dI-1$ for a regular statistical model on $\cX$. 
Take any linearly independent $|\cX|-\dI-1$ tangent vectors so that they form a basis for 
the tangent space $T_{\bm{\theta};\rm{N}}(\cM)$ for the nuisance parameters. 
We next calculate a Fisher information matrix for the nuisance parameters by using only these linearly independent tangent vectors. 
Let us denote it by $\widetilde{J}_{{\bm \theta};\rm{N},\rm{N}}$. Likewise, we also define the matrix 
$\widetilde{J}_{{\bm \theta};\rm{I},\rm{N}}$. Note that expressions of $\widetilde{J}_{{\bm \theta};\rm{N},\rm{N}}$ and 
$\widetilde{J}_{{\bm \theta};\rm{I},\rm{N}}$ depend on a particular choice of a set of tangent vectors. 
Now, we can define the effective score function by the formula \eqref{eq:effective_score} 
with $\widetilde{J}_{{\bm \theta};\rm{N},\rm{N}}$ and $\widetilde{J}_{{\bm \theta};\rm{I},\rm{N}}$. 
This then leads to the partial Fisher information matrix for the parameters of interest. 
\[
\widetilde{J}_{\bm{\theta}}({\mathrm{I}}|{\mathrm{N}})
=\Big[{\vin{\tilde{u}_{\bm{\theta};i} }{\tilde{u}_{\bm{\theta};j} }}_{\bm{\theta}} \Big]. 
\]
Due to non-uniqueness of the choice of tangent vectors, and hence $\widetilde{J}_{{\bm \theta};\rm{N},\rm{N}}$ and 
$\widetilde{J}_{{\bm \theta};\rm{I},\rm{N}}$ are not uniquely detemined. 
However, the partial Fisher information matrix is uniquely defined, 
since the second term in \eqref{eq:effective_score} is also determined by the canonical projection 
on to the tangent space for the nuisance parameters.  
It is now immediate to derive the desired CR inequality as before by $\widetilde{J}_{\bm{\theta}}({\mathrm{I}}|{\mathrm{N}})$. 

\subsubsection{Global orthogonalization}\label{subsubsec-globalorth}
Although local orthogonalization is always possible as was demonstrated above, 
it is impossible to find a globally orthogonal parametrization in general unless the model satisfies some conditions. 
A well-known exceptional case for such a globally orthogonal parametrization is the case 
when the number of parameter of interest is one ($\dI=1$), and the other parameters are all nuisance, 
that is, $\bm{\theta}_{\mathrm{I}}=\theta_1, \bm{\theta}_{\mathrm{N}}=(\theta_2,\dots,\theta_d)$.   

Consider a model with $d$ parameters and introduce a new parametrization ${\bm{\xi}}=({\xi}_1,{\xi}_2,\dots,{\xi}_d)$ 
such that $\theta_1={\xi}_1, \theta_2=\theta_2({\xi}_1,{\xi}_2), \theta_3=\theta_3({\xi}_1,{\xi}_2,{\xi}_3),\dots,  \theta_d=\theta_d({\bm{\xi}})$. 
Then, the Fisher information matrix in the new parametrization is 
\be \nonumber
J_{\bm{\xi}}=T_{\bm{\xi}} J_{\bm{\theta}} T_{\bm{\xi}}^{\mathrm{T}}\mbox{ with\ }T_{\bm{\xi}}= \left[ \frac{\del \theta_j}{\del {\xi}_{\alpha}}\right]_{j,{\alpha}\in\{1,2,\dots,d\}},
\ee
where the greek index is used for the new parametrization ${\bm{\xi}}$. 
From our assumption, the transformation matrix $T_{\bm{\xi}}$ takes the form of the upper triangle matrix:  
\be \nonumber
T_{\bm{\xi}}=\left(\begin{array}{cccc}
1 & t^2_1& \dots & t^d_1 \\0 & t^2_2 & \dots & t^d_2 \\ \vdots &  & \ddots & \vdots \\ 0 & 0 & \dots & t^d_d
\end{array}\right),
\ee
with $t^i_{\alpha}= \del \theta_i/\del {\xi}_{\alpha}$. 
We impose the orthogonality condition between ${\xi}_1=\theta_1$ and the rest ${\bm{\xi}}_{\mathrm{N}}=({\xi}_2,{\xi}_3,\dots,{\xi}_d)$ by setting 
\be \nonumber
J_{{\bm{\xi}};1,{\alpha}}=\sum_{i,j=1}^d t^i_1 
J_{{\bm{\theta}};i,j}t^j_{\alpha}=0\mbox{ for all }{\alpha}=2,3,\dots,d . 
\ee
Owing to the assumption of $t^i_{\alpha}=0$ for $i<{\alpha}$ and 
the smooth one-to-one mapping between ${\bm{\theta}}$ and ${\bm{\xi}}$, this is met by solving the $d-1$ coupled differential equations: 
\be \label{diffeq}
J_{{\bm{\theta}};1,i}+\sum_{j=2}^d J_{{\bm{\theta}};i,j}\frac{\del \theta_j}{\del {\xi}_1}=0\mbox{ for all }i=2,3,\dots,d. 
\ee
These equations in turn determine the forms of ${\del \theta_j}/{\del {\xi}_1}$ as functions of ${\bm{\xi}}$ through 
the original Fisher information matrix $J_{{\bm{\theta}}}$. 
Although the solution is not uniquely determined in general, we can always find a 
new parametrization that leads to the orthogonality between $\theta_1$ and the rest. 
When the parameters of interest can be made orthogonal to the nuisance parameters globally 
with a suitable reparametrization of nuisance parameters, 
we call this procedure as a {\it parameter orthogonalization method}. 
From above discussion, it works with certainty when there is only a single parameter of interest, i.e., $\dI=1$. 

Global parameter orthogonalization in statistics is appreciated when dealing with the nuisance parameter problem. 
One of the main advantages is that this enables us to construct an efficient estimator for the parameters 
of interest using the method of conditional inference with an ancillary statistic. 
When a given model mets a certain condition, it can be shown that the maximum likelihood equations for 
the parameters of interest can be separated from those of the nuisance parameters. Hence, we can 
completely ignore estimating the nuisance parameters without loosing any information. 
We refer to the original paper \cite{rc87} for more detail discussion and examples. 
In the next section, we provide a simple example to demonstrate the advantage of the parameter orthogonalization method. 

\subsection{Example}\label{sec:AppCstat3}
Let us consider the model of a random dice with three outcomes $\Omega=\{1,2,3\}$. 
We examine the following parametrization of this model;
\be \label{cmodel2}
\cM=\{ p_{\bm{\theta}}=(\theta_1,\theta_2,1-\theta_1-\theta_2)\,|\,{\bm{\theta}}=(\theta_1,\theta_2)\in\Theta \}. 
\ee
For convenience of notations, we define $\theta_3:=1-\theta_1-\theta_2$. 
The parameter region $\Theta$ is any open subset of $\Theta_0:=\{(\theta_1,\theta_2)\,|\,\theta_1>0,\theta_2>0,\theta_1+\theta_2<1 \}$. 
We take $\theta_2$ to be the nuisance parameter of this model and $\theta_1$ is the one of our interest, 
i.e., ${\theta}_{\mathrm{I}}=\theta_1$, ${\theta}_{\mathrm{N}}=\theta_2$.  
In this case, one cannot use the bound \eqref{ccrineq1} derived from the Fisher information for $\theta_1$ 
as the achievable bound. The correct one is the bound \eqref{ccrineq2} instead. 

When both parameters $\theta_1,\theta_2$ are unknown, we deal with the 
two-parameter model. If, on the other hand, the value of $\theta_2$ is known, 
the model is reduced to a single parameter model as 
\be \label{cmodel1}
\cM'=\{ p_{\bm{\theta}}=(\theta_1,\theta_2,1-\theta_1-\theta_2)\,|\,\theta_1\in\Theta_1 \subset\bbr\}. 
\ee
The Fisher information matrix and its inverse for the two-parameter model \eqref{cmodel2} are 
\begin{align*}
J_{\bm{\theta}}&=\frac{1}{\theta_1\theta_2\theta_3}\left(\begin{array}{cc}\theta_2(1-\theta_2) & \theta_1{ \theta_2} \\[1ex]
\theta_1{\theta_2} & {\theta_1}(1-{\theta_1})\end{array}\right)
,\\
J_{\bm{\theta}}^{-1}&=\left(\begin{array}{cc}\theta_1(1-\theta_1) & -\theta_1{ \theta_2} \\[1ex]
-\theta_1{\theta_2} & {\theta_2}(1-{\theta_2})\end{array}\right). 
\end{align*}
Therefore, the CR bound for estimating $\theta_1$ in the presence of the nuisance parameter $\theta_2$ is 
\be \label{ccrnui}
J_{\theta}^{\mathrm{I},\mathrm{I}}=\left( (J_{\bf \theta}^{-1}) \right)_{1,1}
=\theta_1(1-\theta_1), 
\ee
whereas, when $\theta_2$ is fixed, it becomes 
\be\label{ccrnonui}
(J_{\bm{\theta};{\mathrm{I}},{\mathrm{I}}})^{-1}
={(J_{{\bm{\theta}};1,1})^{-1}}=\frac{\theta_1\theta_3}{1-\theta_2}. 
\ee
The information loss due to the presence of this nuisance parameter $\theta_2$ 
is calculated as
\be \label{lossex}
\Delta J_{\bm{\theta}}^{-1}
=J_{\bm{\theta}}^{\mathrm{I},{\mathrm{I}}}
-(J_{\bm{\theta};{\mathrm{I}},{\mathrm{I}}})^{-1}=\frac{(\theta_1)^2{\theta_2}}{1-{\theta_2}}, 
\ee
and this is strictly positive. 

We list a few observations on this model. 
First, the CR bound \eqref{ccrnui} is independent of the value of the nuisance parameter $\theta_2$. 
This is a very special case and a model-dependent result. 
Second, it is obvious from \eqref{lossex} that the effect of the nuisance parameter is present 
since $\theta_1$ and $\theta_2$ are not orthogonal to each other. 
Third, the information loss $\Delta J_{\bm{\theta}}^{-1}$ is strictly positive since 
$\theta_1\theta_2\neq0$. It becomes larger as $\theta_2$ gets closer to $1$. 
%This corresponds to the case in which the occurrence of the outcome $1$ becomes less often. 
Last, this example fits into the application of the parameter orthogonalization described before 
and $\theta_1$ can be made globally orthogonal to the nuisance parameter by introducing 
a new parametrization. Thus, we can eliminate the effect of nuisance parameter. 
We shall work this out below. 

We introduce a new parameterization ${\bm{\xi}}=({\xi}_1,{\xi}_2)$ 
and assume that $\theta_1={\xi}_1$ and $\theta_2$ is a function of ${\bm{\xi}}$. 
The parameter orthogonalization condition \eqref{diffeq} is 
\be\nonumber
J_{{\bm{\theta}};1,2}+J_{{\bm{\theta}};2,2}\frac{\del\theta_2}{\del{\xi}_1}=0
\Leftrightarrow \theta_2+(1-\theta_1)\frac{\del\theta_2}{\del{\xi}_1}=0. 
\ee
A solution to this differential equation is found, for example, as 
\be\nonumber
\theta_2({\xi}_1,{\xi}_2)=(1-{\xi}_1) c({\xi}_2),
\ee
with $c(x)$ any smooth differentiable function that is not constant. 
We also assume that its derivative does not vanish for all ${\xi}_2$. 
The inverse of the Fisher information matrix in the new parametrization  becomes 
diagonal as
\begin{equation*}
J_{\bm{\xi}}^{-1}=\left(\begin{array}{cc}{\xi}_1(1-{\xi}_1)& 0 \\[1ex]
0&\ds \frac{c({\xi}_2)\big(1-c({\xi}_2)\big)}{(1-{\xi}_1)\dot{c}({\xi}_2)^2}  \end{array}\right), 
\end{equation*}
with $\dot{c}({\xi}_2)=d c({\xi}_2)/d{\xi}_2$. 
Since $\theta_1={\xi}_1$ by assumption, the corresponding CR bound in the ${\bm{\xi}}$ parametrization is 
\be
J_{\xi}^{\mathrm{I},{\mathrm{I}}}=(J_{\xi;{\mathrm{I}},{\mathrm{I}}})^{-1}={\xi}_1(1-{\xi}_1)=\theta_1(1-\theta_1). 
\ee

A practical advantage using the parameter orthogonalization is 
when one tries to solve the MLE equation. 
For a given string  of data $x^n=x_1x_2\dots x_n$, let us denote by 
$n_k$ ($k=1,2,3$) the number of data with value $x_k$. By definition, $n=n_1+n_2+n_3$. 
In the original parametrization ${\bm{\theta}}=(\theta_1,\theta_2)$, 
one has to solve the coupled MLE equations $\del/\del\theta_i [\sum_{k=1,2,3}n_k \log p_{\bm{\theta}}(k)]=0$ 
for $i=1,2$. If we work in the new parametrization ${\bm{\xi}}$, which diagonalizes the Fisher information matrix, 
one only needs to solve a single MLE equation $\del/\del{\xi}_1[ \sum_{k=1,2,3}n_k \log p_{\bm{\xi}}(k)]=0$ 
to infer the value of ${\xi}_1=\theta_1$. This is because this equation is independent 
of the nuisance parameter ${\xi}_2$. In other words, one can completely ignore 
the value of ${\xi}_2$. 
This simple example shows that the parameter orthogonalization procedure 
provides an efficient way of constructing a good estimator that concerns only the parameter of interest. 
When the model contains many nuisance parameters, we can greatly simplify 
the MLE equation to obtain the MLE for the parameter of interest. 
The parameter orthogonalization method plays a pivotal role in parameter estimation problems 
in the presence of nuisance parameters. 

\section{Quantum multi-parameter estimation problem} \label{sec:Qnui1}
\subsection{Single copy setting}\label{sec:Qnui1-1}
In this subsection, we shall briefly summarize the result of quantum state estimation theory. 
We refer readers to books \cite{helstrom,holevo,ANbook,hayashi,petz} for more details. 

A {\it quantum system} is represented by a Hilbert space $\cH$. 
Let $\lofh$ be the set of all linear operators on $\cH$. 
A {\it quantum state} $\rho$ is a positive semi-definite matrix on $\cH$ with unit trace. 
The set of all quantum states on $\cH$ is denoted by $\sofh:=\{\rho\,|\,\rho\ge0,\tr{\rho}=1 \}$.
In particular, a state in a $d$-dimensional Hilbert space is often referred to as a \emph{qudit}.

A {\it measurement} $\Pi$ on a given quantum state $\rho$ is a nonnegative operator-valued function 
on $(\Omega,\cB)$ with $\cB$ a Borel set on $\Omega$. 
Let $\Pi$ be a function from $\cB$ to $\lofh$ such that 
\begin{align*}
&i)\ \Pi(\Omega)=I,\\
&ii)\ \Pi(B)\ge 0,\,\forall B\in\cB,\\
&iii)\ \Pi\big(\bigcup B_j\big)=\sum_j   \Pi(B_j)\mbox{ for any mutually disjoint $B_j\in\cB$},
\end{align*}
where $I$ is the identity operator on $\cH$. 
$\Pi$ is usually referred to as the positive operator-valued measure (POVM). 
When considering measurements with finite outcomes, 
we use $\cX=\Omega=\{1,2,\dots, |\cX|\}$. The corresponding POVM 
is a set of nonnegative matrices $\Pi=\{\Pi_x\}_{x\in\cX}$ satisfying the condition $\sum_{x\in\cX}\Pi_x=I$.   
For the continuous measurement case ($\cX=\bbr$), $\Pi_x$ satisfies $\int_{\cX}\Pi_x dx=I$ as a practical working rule. 
When a POVM consists of mutually orthogonal projectors, we call it a projection valued measure (PVM) 
or simply a projection measurement. 
The probability of getting an outcome $x$ when a POVM $\Pi$ is performed on $\rho$ is given by the \emph{Born rule}
\begin{equation}\label{born}
p_\rho(x|\Pi)=\tr{\rho\Pi_x}.
\end{equation}

A {\it quantum statistical model} or simply a {\it model} is defined by a parametric family of quantum states on $\cH$: 
\be
\cM:=\{\rho_{\bm{\theta}}\,|\,{\bm{\theta}}\in\Theta\}\subset \sofh, 
\ee
where $\Theta\subset\bbr^d$ is an open subset. 
As in the standard statistical problem, we implicitly assume necessary regularity conditions \footnote{To avoid mathematical subtleties, we need to impose regularity 
conditions for quantum statistical models. For example, 
a mapping $\theta\mapsto \rho_\theta$ is one-to-one and smoothness 
so that we can differentiate $\rho_\theta$ sufficiently many times. 
$\del\rho_\theta/\del\theta_i$ are also assumed to be linearly independent. 
We also need to be careful about 
the rank of quantum states. For the sake of clarity, 
we only consider full-rank states in this article. 
For problems in the pure-state model, see, for example, \cite{fn95,fn99}.}. 
Indeed, when the parametric space and the state family have a common group covariant symmetry,
the state estimation can be formulated based on the group symmetry \cite{holevo,H-group}.
In this review, we consider a different and more general formulation,
which also works without symmetry. 

A set of a measurement $\Pi$ and an estimator $\hat{\bm{\theta}}$, 
$\hat{\Pi}=\Pi \circ \hat{{\bm{\theta}}}^{-1}$\footnote{
In this notation, $\hat{\Pi}$ describes a POVM over 
$\Theta\subset\bbr^d$ so that 
$\hat{\Pi}(B)=\Pi (\hat{{\bm{\theta}}}^{-1}(B))$ for a subset $B \subset \Theta$.}, 
is called a {\it quantum estimator} or simply an {\it estimator}. 
We define the MSE matrix for the estimator $\hat{\Pi}$ by 
\begin{align}\nonumber
V_{{\bm{\theta}}}[\hat{\Pi}]
&=\left[ \sum_{x\in\cX} \tr{\rho_{\bm{\theta}}\Pi_x}({\hat{\theta}_i}(x)-\theta_i)({\hat{\theta}_j}(x)-\theta_j)  \right]\\
&=\left[ E_{\bm{\theta}}\big[({\hat{\theta}_i}(X)-\theta_i)({\hat{\theta}_j}(X)-\theta_j)|\Pi\big]  \right].
\end{align}
where $E_{\bm{\theta}}[f(X) |\Pi]$ is the expectation value of a random variable $f(X)$ 
with respect to the distribution $p_{\rho_{\bm{\theta}}}(x|\Pi)=\tr{\rho_{\bm{\theta}}\Pi_x}$. 
The aim of quantum parameter estimation is to 
find an optimal estimator $\hat{\Pi}=\Pi \circ \hat{{\bm{\theta}}}^{-1}$ such that the MSE matrix 
approaches the minima allowed by the laws of quantum theory and statistics.

We note that it is in general not possible to minimize the MSE matrix over all possible measurements 
in the sense of a matrix inequality. This kind of situation often happens in the theory of optimal design of experiments, 
where one wishes to minimize the inverse of the Fisher information matrix over design variables. 
See \cite{fedorov,pukelsheim,fh97,pp13,fl14} and \cite{gns19} in the context of quantum estimation theory. 
One of possible approaches to find the precision bound for the MSE matrix is to minimize the weighted trace of the MSE matrix: 
\be
V_{\bm{\theta}}[\hat{\Pi}|W,\cM]:= \Tr{WV_{{\bm{\theta}}}[\hat{\Pi}]}, 
\ee
for a given positive matrix $W$. Here, the matrix $W$ is called a {\it weight matrix} 
(also called a {\it utility matrix} or {\it loss matrix} in statistical literature)
and it represents a trade-off relation upon estimating different vector component of the parameter ${\bm{\theta}}$. 
For instance, the case $W=I_d$ (the $d\times d$ identity matrix) corresponds 
to minimizing the averaged variance of estimators. In the language of optimal design of experiments, 
this optimality is called the A-optimal design. We can similarly define other optimality functions 
to define optimal estimators \cite{fedorov,pukelsheim,fh97,pp13,fl14}. 

In passing, we should not forget other possible formulations of parameter estimation problems 
in a quantum system. A general formulation of the quantum decision theory was developed by 
Holevo \cite{holevo1973statistical} and Ozawa \cite{ozawa1980optimal}. 
Prior to Holevo's work, a quantum Bayesian estimation theory appeared in \cite{personick71}, 
and its content was reviewed in \cite[section 7.5]{whth07}. 
Over the last two decades, quantum Bayesian theory became popular in applications to quantum metrology. 
See, for examples, \cite{tanaka2006,teklu2009bayesian,teklu2010phase,brivio2010experimental,blume2010optimal,christandl2012reliable,tsang2012ziv,koyama2017minimax,teo2018bayesian,oh2018bayesian,quadeer2019minimax} for concrete cases. 

We mainly consider strictly positive weight matrices, i.e., $W>0$, 
although it is also possible to formulate the problem with a nonnegative weight matrix. 
As we will discuss in this paper, the role of the weight matrix is important 
when discussing the nuisance parameter problem in the quantum case. 

One of the main interests in the quantum estimation theory is to find the precision bound under a certain condition on estimators. 
An estimator $\hat{\Pi}$ is called {\it unbiased}, if the following condition holds for all ${\bm{\theta}}\in\Theta$:
\[
E_{\bm{\theta}}\big[{\hat{\theta}_i}(X)|\Pi\big]=\sum_{x\in\cX} {\hat{\theta}_i}(x) \tr{\rho_{{\bm{\theta}}}\Pi_x}=\theta_i \quad(\forall i=1,2,\dots,d).
\]
Usually, such an unbiased estimator does not exist. To relax the unbiasedness condition, 
we impose this condition on the neighborhood of a given point. 
An estimator $\hat{\Pi}$ is called {\it locally unbiased} at ${\bm{\theta}}$, if 
\begin{align}\label{local-unbiased-cond1}
E_{\bm{\theta}}\big[{\hat{\theta}_i}(X)|\Pi\big]&=\sum_{x\in\cX} {\hat{\theta}_i}(x) \tr{\rho_{{\bm{\theta}}}\Pi_x}=\theta_i, \\ 
\frac{\del}{\del\theta_j}E_{\bm{\theta}}\big[{\hat{\theta}_i}(X)|\Pi\big]&=\sum_{x\in\cX} {\hat{\theta}_i}(x)\tr{\frac{\del}{\del\theta_j}\rho_{{\bm{\theta}}}\Pi_x}=\delta_{i,j}, \label{local-unbiased-cond2}
\end{align}
are satisfied at ${\bm{\theta}}\in\Theta$ for all parameter indices $i,j\in\{1,2,\dots,d\}$. 
Note that this condition is to require the usual unbiasedness condition at a point ${\bm{\theta}}$ up to 
the first order in the Taylor expansion. 

As a quantum version of the score function,
we often focus on the SLD $L_{{\bm{\theta}};i}^{\rm S}$, which is defined as a Hermitian matrix to satisfy
\begin{align}
\frac{\del}{\del\theta_i}\rho_{{\bm{\theta}}}= 
\frac{1}{2}
\big(L_{{\bm{\theta}};i}^{\rm S}\rho_{{\bm{\theta}}}+ \rho_{{\bm{\theta}}}L_{{\bm{\theta}};i}^{\rm S}\big).
\label{DEFSLD}
\end{align}
The SLD Fisher information matrix $J_{\bm{\theta}}^{\rm S}$ is defined as
\begin{align}
J_{{\bm{\theta}};i,j}^{\rm S}:=\frac{1}{2}
\tr{L_{{\bm{\theta}};i}^{\rm S}
\big(L_{{\bm{\theta}};j}^{\rm S}\rho_{{\bm{\theta}}}+ \rho_{{\bm{\theta}}}L_{{\bm{\theta}};j}^{\rm S}\big)}.
\label{DEFSLDF}
\end{align}
Here, when $\rho_{\bm{\theta}}$ is strictly positive, 
the choice of Hermitian matrix $L_{{\bm{\theta}};i}^{\rm S}$ is unique.
Otherwise, it is not unique.
However, the definition of 
the SLD Fisher information matrix $J_{\bm{\theta}}^{\rm S}$ in \eqref{DEFSLDF}
does not depend on the choice of Hermitian matrix $L_{{\bm{\theta}};i}^{\rm S}$
under the condition \eqref{DEFSLD}.
Under the locally unbiasedness condition at ${\bm{\theta}}$, 
we have SLD CR inequality \cite{helstrom}
\begin{align}
V_{\bm{\theta}}[\hat{\Pi}]
\ge (J_{{\bm{\theta}}}^{\rm S})^{-1}.
\label{CRSLDF}
\end{align}
The proof is reviewed in Appendix \ref{PfCR}.
When we can choose SLDs $L_{{\bm{\theta}};i}^{\rm S}$ for $i=1, \ldots, d$ 
such that these SLDs $L_{{\bm{\theta}};i}^{\rm S}$ are commutative with each other,
the equality in \eqref{CRSLDF} can be achieved by a local unbiased estimator constructed by their simultaneous spectral decomposition.
In the choice of SLDs $L_{{\bm{\theta}};i}^{\rm S}$, 
extending the Hilbert space is allowed.
Otherwise, the equality in \eqref{CRSLDF} cannot be achieved.
Indeed, for a strictly positive density matrix $\rho_{\bm{\theta}}$, it is sufficient to check the commutativity of 
SLDs $L_{{\bm{\theta}};i}^{\rm S}$ without the extension of Hilbert space.
For the detail see Appendix \ref{PfCR}. 

As a typical case, we consider this case when
the SLD Fisher information matrix $J_{\bm{\theta}}^{\rm S}$ is diagonal with $d=2$. 
This condition can be satisfied at one point when 
we change the coordinate.  
If the measurement is chosen by using the spectral decomposition of 
$L_{{\bm{\theta}};1}^{\rm S}$, the first diagonal element of 
$V_{\bm{\theta}}[\hat{\Pi}]$ can attain the lower bound in \eqref{CRSLDF},
but the second diagonal element cannot attain the lower bound in general.
That is, the first and second diagonal elements satisfy a trade-off relation.
To handle this trade-off, 
we introduce 
the fundamental precision limit by
\be\label{qcrbound}
C_{\bm{\theta}}[W,\cM]:=\min_{\hat{\Pi}\mathrm{\,:l.u.at\,}{\bm{\theta}}}\Tr{WV_{\bm{\theta}}[\hat{\Pi}]}, 
\ee
where the minimization is carried out for all possible estimators under the locally unbiasedness condition, 
which is indicated by l.u.~at ${\bm{\theta}}$. 
In this paper, any bound for the weighted trace of the MSE matrix is referred to as the {\it CR type bound}. 
When a CR type bound equals to the fundamental precision limit $C_{\bm{\theta}}[W,\cM]$ as in \eqref{qcrbound}, it is called 
{\it most informative (MI)} in our discussion.  
In the following, we discuss some of CR type and MI bounds.
Taking weighted trace in \eqref{CRSLDF}, we have the following bound.
\begin{itemize}
\item The SLD CR bound, which is the MI bound for any one-parameter model \cite{helstrom}:
\be \label{Eq:sld_crbound}
C_{\bm{\theta}}^{\rm S}[W,\cM]:=\Tr{W ({\sldQFI})^{-1}},
\ee
where $\sldQFI$ denotes the SLD Fisher information matrix about the model $\cM$.
\end{itemize}

To characterize the non-commutativity, we introduce 
the right logarithmic derivative (RLD)
$L_{{\bm{\theta}};i}^{\rm R}$, which is defined as a matrix to satisfy
\begin{align}
\frac{\del}{\del\theta_i}\rho_{{\bm{\theta}}}= 
\rho_{{\bm{\theta}}}L_{{\bm{\theta}};i}^{\rm R}.
\label{DEFRLD}
\end{align}
The RLD Fisher information matrix $J_{\bm{\theta}}^{\rm R}$ is defined as
\begin{align}
J_{{\bm{\theta}};i,j}^{\rm{R}}:=
\tr{ (L_{{\bm{\theta}};i}^{\rm R})^\dagger
\rho_{{\bm{\theta}}}L_{{\bm{\theta}};j}^{\rm R}}.
\label{DEFRLDF}
\end{align}
Here, when $\rho_{\bm{\theta}}$ is strictly positive, 
the choice of the RLD $L_{{\bm{\theta}};i}^{\rm R}$ is unique.
Otherwise, it is not unique.
However, the definition of 
the RLD Fisher information matrix $J_{\bm{\theta}}^{\rm R}$ in \eqref{DEFRLDF}
does not depend on the choice of the RLD $L_{{\bm{\theta}};i}^{\rm R}$
under the condition \eqref{DEFRLD}.
Although the RLD Fisher information matrix $J_{\bm{\theta}}^{\rm R}$ is Hermitian,
it has imaginary off-diagonal elements beacuse the RLD $L_{{\bm{\theta}};i}^{\rm R}$ is not necessarily Hermitian. 
Under the locally unbiasedness condition at ${\bm{\theta}}$, 
we have the RLD CR inequality \cite{yl73}
\begin{align}
V_{\bm{\theta}}[\hat{\Pi}]
\ge (J_{{\bm{\theta}}}^{\rm R})^{-1}.
\label{CRRLDF}
\end{align}
The proof is reviewed in Appendix \ref{PfCR}. 
Handling the imaginary components of $J_{\bm{\theta}}^{\rm R}$ efficiently, 
we have the following bound.
\begin{itemize}
\item The RLD CR bound, which is MI for a Gaussian shift model \cite{yl73,holevo}: 
\be \label{Eq:rld_crbound}
C_{\bm{\theta}}^{\rm R}[W,\cM]:=\Tr{W\Re({\rldQFI})^{-1}}+\Tr{|W^{\frac12} {\Im} ({\rldQFI})^{-1} W^{\frac12} |},
\ee
where $|X|=\sqrt{X^\dagger X}$, $\Re X=(X+X^\dagger)/2$, and $\Im X=(X-X^*)/2\I$ denote 
the absolute value, the real, and the imaginary part of a linear operator $X\in\lofh$, respectively. 
Here, $\rldQFI$ denotes the RLD Fisher information matrix about the model $\cM$.
\end{itemize}

As a tighter bound than both bounds, we often consider the following bound; 
\begin{itemize}
\item The Holevo bound (also known as the Holevo CR bound) \cite{holevo}: 
\be\label{holevo-bound}
C_{\bm{\theta}}^H[W,\cM]:=
\min_{{\bm X}=(X_1,\ldots, X_d)}\Tr{W\Re Z_{\bm{\theta}}({\bm X})}+\Tr{|W^{\frac12} {\Im} Z_{\bm{\theta}}({\bm X}) W^{\frac12} |},
\ee
where the minimization takes the vector of Hermitian matrices ${\bm X}=(X_1,\ldots, X_d)$
to satisfy the condition
$\tr{\frac{\del}{\del\theta_j}\rho_{{\bm{\theta}}}X_i}=\delta_{i,j}$ for 
$i,j=1, \ldots, d$,
and $Z_{\bm{\theta}}({\bm X})$ is the Hermitian matrix whose ($i,j$) component is
$\tr {X_i \rho X_j}$.
For readers' convenience, we give the proof for the inequality
\be\label{Ho-ine}
C_{\bm{\theta}}[W,\cM]\ge C_{\bm{\theta}}^H[W,\cM]
\ee
in Appendix \ref{AC-4}.
Notice that the minimum \eqref{holevo-bound} is achieved 
when the vector of Hermitian matrices ${\bm X}$ satisfies
the condition $\tr{\rho_{{\bm{\theta}}}X_i}=0$ for $i=1, \ldots, d$.
When the model is composed of pure states,
the equality in inequality \eqref{Ho-ine} holds \cite{KM}.

\if0
\item The Nagaoka bound \cite{nagaoka89}, which is given only in the case with $d=2$ and is tighter than the Holevo bound.
\be\label{nagaoka-bound}
C_{\bm{\theta}}^N[W,\cM]:=
\min_{{\bm X}=(X_1, X_2)}
W_{1,1}\tr {X_1 \rho X_1}+W_{2,2}\tr {X_2 \rho X_2}
+2 W_{1,2}\tr{| \rho^{1/2} [X_1,X_2] \rho^{1/2} |},
\ee
where the minimization takes the vector of Hermitian matrices 
${\bm X}=(X_1, X_2)$ under the same condition as the Holevo bound.
\fi
\end{itemize}

Note this bound \eqref{qcrbound} in general depends on the value of parameter ${\bm{\theta}}$ and the choice of the weight matrix $W$. 
Let $\hat{\Pi}_{\mathrm{opt}}:=\argmin_{\hat{\Pi}}  \Tr{WV_{\bm{\theta}}[\hat{\Pi}]}$ be an optimal estimator attaining the minimum 
of the most informative bound \eqref{qcrbound}, 
then it is clear that this $\hat{\Pi}_{\mathrm{opt}}$ represents the best measurement and the estimator in the sense of 
the above optimization. That is, if somebody specifies the weight matrix $W$, 
we can always construct the best estimator $\hat{\Pi}_{\mathrm{opt}} $ that minimizes 
the weighted trace of the MSE. 

When considering positive semi-definite weight matrices, the most informative bound cannot be 
attained explicitly in general. In this case, we have
\be\label{qcrbound-2}
\underline{C}_{\,{\bm{\theta}}}[W,\cM]:=\inf_{\hat{\Pi}\mathrm{\,:l.u.at\,}{\bm{\theta}}}\Tr{WV_{\bm{\theta}}[\hat{\Pi}]}
\ee
for $W\ge0$. 
The difference from the bound \eqref{qcrbound} is that 
an optimal estimator may not be locally unbiased at ${\bm{\theta}}$ for low-rank matrices $W$.

Before we move to the discussion on the multiple-copy setting, 
we show an alternative expression for the most informative bound \eqref{qcrbound}, which is due to Nagaoka \cite{nagaoka89}. 
He proved that the above bound can alternatively be expressed as the following optimization.  
For a given quantum statistical model $\cM=\{\rho_{\bm{\theta}}|{\bm{\theta}}\in\Theta\}$, 
let us fix a POVM $\Pi=\{\Pi_x\}_{x\in\cX}$. Then, the probability 
distribution determined by measurement outcomes $p_{\bm{\theta}}(x|\Pi)=\tr{\rho_{\bm{\theta}}\Pi_x}$ 
defines a classical statistical model:  
\be
\cM(\Pi):=\{p_{\bm{\theta}}(\cdot|\Pi)\,|\,{\bm{\theta}}\in\Theta\}. 
\ee
If the resulting classical model is regular, we can calculate 
the Fisher information matrix $J_{\bm{\theta}}[\Pi]$ about this model, and the CR inequality 
holds for the MSE matrix. Therefore, one can show that \cite[Theorem 2]{nagaoka89} 
\be\label{MICRbound}
C_{\bm{\theta}}[W,\cM]=\min_{J \in {\cal J}_{\bm{\theta}}} \Tr{W J^{-1} }, 
\ee
where ${\cal J}_{\bm{\theta}}$ is the collection of all Fisher information matrices $J_{\bm{\theta}}[\Pi]$ that are associated with POVMs $\Pi$. 
It is important to note that the statistical model $\cM(\Pi)$ can violate regularity conditions for some POVM. 
Since $\rho_\theta$ satisfies a certain regularity condition,
the statistical model $\cM(\Pi)$ satisfies the differentiability.
But the Fisher information matrix $J_{\bm{\theta}}[\Pi]$ might not be full rank, i.e., might be singular. 
In this case, one cannot calculate the inverse directly. 
A standard treatment is to use the generalized inverse with some care \cite{sm01}. 
Alternatively, regularization techniques are often used in literature. 
In the above optimization in \eqref{MICRbound}, due to the positivity assumption of the weight matrix, 
we can automatically exclude 
POVMs with singular Fisher information matrices
because of the following reason.
If the Fisher information matrix $J$ is singular, 
$J^{-1}$ will be unbounded.
Since $W>0$, we have $\Tr{J^{-1}W}\to\infty$ and can be excluded from the minimization. 
That is, we minimize the weighted trace of the inverse of Fisher information matrix 
associated only to POVMs with non-singular Fisher information matrix, 
and thus their statistical models are regular.    

As an alternative way to see the precision limit, Gill and Masser \cite{GM00} considered maximization of the quantity $\max_{\Pi\mathrm{: POVM}} \Tr{(\sldQFI)^{-1} J_{\bm{\theta}}[\Pi] }$, which captures how close the measurement induced Fisher information matrix $J_{\bm{\theta}}[\Pi] $ is to $\sldQFI$.\
They showed that
\be\label{GMbound2}
\max_{\Pi\mathrm{: POVM}} \Tr{(\sldQFI)^{-1} J_{\bm{\theta}}[\Pi] }
\le \dim {\cal H}-1.
\ee
The above bound immediately implies that one can measure at most $\dim{\cal H}-1$ nontrivial observables simultaneously without disturbing each other: Indeed, when observables are measured without mutual disturbance, we have $J_{\bm{\theta}}[\Pi]=\sldQFI$. 
Combing this inequality with \eqref{MICRbound}, they showed that
\be\label{qcrbound3}
C_{\bm{\theta}}[W,\cM] \ge \frac{ (\Tr{( W^{-1/2} \sldQFI W^{-1/2})^{-1/2} })^2}{\dim {\cal H}-1}. 
\ee
In the qubit case, 
the lower bound \eqref{qcrbound3} equals the bound obtained by 
Nagaoka \cite{nagaoka89} for a two-parameter model ($d=2$) and the bound obtained by 
Hayashi \cite{hayashi97} for a three-parameter model ($d=3$).

\subsection{Multiple-copy setting}\label{sec:Qnui1-2}
An important remark regarding this ``optimal estimator" is 
that it depends on the unknown parameter value ${\bm{\theta}}$ in general, 
due to the structure of the above optimization problem. 
In other words, one has to perform these measurements to estimate 
unknown parameters by using unknown values ${\bm{\theta}}$. 
This contradictory fact creates a major opponent against the use of (locally) unbiased estimators in 
classical statistics. 
Here, we stress that methods of statistical inference 
provide an additional ingredient to overcome such a difficulty and to achieve 
bound \eqref{qcrbound} asymptotically.

To resolve this problem, we consider the multiple copy setting, where
one is given states of the $n$-fold form $\rho_{\bm{\theta}}^{\otimes n}$. 
That is, we consider the state family $\{\rho_{\bm{\theta}}^{\otimes n}| {\bm{\theta}}\in \Theta\}$.
In this case, our measurement is given as a POVM on the $n$-fold tensor product system ${\cal H}^{\otimes n}$.
In this case, we can consider three types of settings.
\begin{description}
\item[A1]
Repetitive strategy:
In the first setting, 
we can repeat the same measurement on each of the $n$ subsystems ${\cal H}$ in ${\cal H}^{\otimes n}$.
\item[A2]
Adaptive strategy:
In the second setting, 
we make individual measurements on each of the $n$ subsystems,
but each measurement can depend on previous measurement outcomes.
\item[A3]
Collective strategy:
In the third setting, any POVM on ${\cal H}^{\otimes n}$ is allowed.
Such a measurement is often called a collective measurement.
\end{description}

In these settings, the MSE matrix behaves as $O(1/n)$. 
In the first setting A1, once we fix the measurement $\Pi$ to be repetitively applied, 
the problem can be handled as the statistical inference under the probability distribution family 
$\{ p_{\bm \theta}(\cdot|\Pi)| {\bm{\theta}}\in \Theta \}$. 
In this case, we can optimize the classical data processing. 
It is known that the MLE has the optimal performance in the large $n$ asymptotics, 
where $n$ times of MSE matrix asymptotically equals the inverse of the Fisher information matrix of 
the above probability distribution family \cite{lc,bnc,ANbook,pp13}. 
However, from the practical viewpoint, the MLE requires large calculation complexity \cite{PRA.75.042108}, 
they often require the linear inversion method \cite{PRA.90.012115,npj.3.44} and other methods \cite{QSEbook}.

When the MLE is assumed as our classical data processing method, what remains is the optimization of the POVM. 
For the setting A1, the papers \cite{H98,LFGKC,ZH18} focus on the maximization \eqref{GMbound2}.
A POVM $\Pi$ is said to be Fisher symmetric when 
it attains the maximization \eqref{GMbound2} for any ${\bm{\theta}} \in \Theta$. 
They discussed the case when the state family is composed of all pure states on ${\cal H}$. 
The paper \cite{H98} showed the existence of a Fisher symmetric POVM, but the paper \cite{ZH18} showed 
the non-existence of a Fisher symmetric POVM when the number of outcomes is restricted to be finite.

In the second setting A2, an adaptive choice of measurement is allowed and 
such a choice is considered as an estimator represented by a POVM $\hat{\Pi}^{(n)}$ on ${\cal H}^{\otimes n}$, where 
the output is an element of $\Theta\subset\bbr^d$.  
For a given sequence of estimators $\{\hat{\Pi}^{(n)}\}_{n=1}^{\infty}$
and a weighted matrix $W$,
we focus on the rescaled error
\begin{align}
\lim_{n \to \infty} n \Tr{W V_{{\bm{\theta}}}[\hat{\Pi}^{(n)}]}.\label{MSESE}
\end{align}

In the context of quantum state estimation,
this setting was first addressed by Nagaoka \cite{nagaoka89-2}. 
He proposed a concrete method to choose the measurement in each step
by using the likelihood.
However, since its analysis is complicated,
the papers \cite{GM00,HM98,BNG00} focus on the two-step estimation method.
In this method, we divided the $n$ given copies into two groups.
Then, we apply ${\bm{\theta}}$-independent separate measurements on states in the first group. For qubits, for example, we can measure each of the three Pauli observables using one third of the copies in the first group.
Based on these outcomes, we get a tentative estimate of ${\bm{\theta}}$. Finally, based on the tentative estimate, we apply the optimal single-copy measurement to 
all copies in the second group.
We refer to this measurement as the two-step measurement with single-copy optimality.
It was shown \cite{GM00,HM98,BNG00} that 
the rescaled error \eqref{MSESE} of this estimator equals the most informative CR-type bound \eqref{qcrbound}.
Later, Fujiwara \cite{fujiwara06} showed the same fact when the 
sequence of estimators is given by the Nagoka method.
Other various types of adaptive schemes have been intensively studied recently. 
See, for example, \cite{stm12,oioyift12,mrdfbks13,ksrhhk13,hzxlg16,ooyft17} 
and a review paper \cite{zlwjn17} and references therein. 
When applying an adaptive measurement ${\Pi}^{(n)}$, we denote 
the Fisher information of the resulting classical model by $J_{\bm{\theta}}[{\Pi}^{(n)}]$.
Then, it was shown in \cite[Chapter 6]{H-book} that the normalized Fisher information matrix 
belongs to the set ${\cal J}_{\bm{\theta}}$, i.e.,
\begin{align}
J_{\bm{\theta}}[{\Pi}^{(n)}]/n\in {\cal J}_{\bm{\theta}}
\label{JinJ}
\end{align}
When we take the normalization into account,
this fact shows that the adaptive choice cannot improve the maximization in RHS of \eqref{MICRbound}.

In the above way, several adaptive strategies globally achieve \eqref{qcrbound}.
However, there is no guarantee whether they satisfy the locally unbiasedness conditions (\ref{local-unbiased-cond1}) and (\ref{local-unbiased-cond2}), which are used to derive the bound $V_{\bm{\theta}} [\hat{\Pi}|W,{\cal M}] \ge C_{\bm{\theta}}[W,{\cM}]$. 
To resolve this problem, we focus on the limiting distribution 
for a sequence of estimators $\{\hat{\Pi}^{(n)}\}_{n=0}^{\infty}$ as an alternative formulation. 
The limiting distribution family $\{P_{{\bm t},\bm{\theta}_0}\}_{{\bm t}\in  \mathbb{R}^d}$ at $\bm{\theta}_0$
is defined as 
\begin{align}
P_{\bm{\theta}_0,{\bm t}}(B):=\lim_{n\to \infty}
\tr {\rho^{\otimes n}_{\bm{\theta}_0+\frac{{\bm t}}{\sqrt{n}}} \hat{\Pi}^{(n)}( \{ \hat{{\bm{\theta}}} | 
(\hat{{\bm{\theta}}}-\bm{\theta}_0)\sqrt{n}- {\bm t} \in B
 \}  ) }
\end{align}
for any $B\subset\mathbb{R}^d$. Intuitively, as $(\hat{{\bm{\theta}}}-\bm{\theta}_0)$ is proportional to $1/\sqrt{n}$, $P_{{\bm t},\bm{\theta}_0}$ characterises the asymptotic behaviour of the proportionality constant in a local region near ${\bm t}$.
Then, we impose a covariance condition requiring $P_{\bm{\theta}_0,{\bm t}}$ to be invariant under tiny shifts.
Technically, the condition requires $P_{\bm{\theta}_0,{\bm t}}=P_{\bm{\theta}_0,{\bm 0}} $
for any ${\bm t} \in \mathbb{R}^d$,
which is called the local asymptotic covariance condition at $\bm{\theta}_0$
for a sequence of estimators $\{\hat{\Pi}^{(n)}\}_{n=0}^{\infty}$. 
It is difficult to evaluate the quantity \eqref{MSESE} under the local asymptotic covariance condition.
Instead, we focus on the covariance matrix of the limiting distribution $P_{\bm{\theta}_0,{\bm t}} $, 
which is denoted by $V_{\bm{\theta}_0}[\{\hat{\Pi}^{(n)}\}_{n=0}^{\infty}]$.
Using \cite[Lemma 20]{YCH18}, as shown in Appendix \ref{AC-3},
we can show that the covariance matrix 
is lower bounded by the limit of the normalized Fisher information of the resulting classical model, i.e.\ $V_{\bm{\theta}_0}[\{\hat{\Pi}^{(n)}\}_{n=0}^{\infty}]\ge \lim_{n\to\infty}J_{\bm{\theta}}[{\Pi}^{(n)}]/n$.
Combining it with the relation \eqref{JinJ}, we have the inequality
\begin{align}\label{WVCW}
\Tr{W V_{\bm{\theta}_0}[\{\hat{\Pi}^{(n)}\}_{n=0}^{\infty}] }
\ge C_{\bm{\theta}_0}[W,\cM].
\end{align}
Also, the sequence of the two-step measurements with single-copy optimality
satisfies the local asymptotic covariance condition.
Hence, under the framework of the local asymptotic covariance condition,
$C_{\bm{\theta}_0}[W,\cM]$ is optimal and
there exists a sequence of estimators to attain this bound at any point ${\bm{\theta}}$.
That is, 
$C_{\bm{\theta}_0}[W,\cM]$ is the optimal bound in the setting A2 (Adaptive strategy).

However, it requires  additional cost to realize an arbitrary measurement in the setting A2 like 
the two-step estimation method.
Indeed, such a measurement requires the choice of measurement based on the previous outcomes.
That is, such an adaptive control of measurement devices
needs a feedback control, which requires additional devices.
To avoid such an additional cost,
we often adopt the setting A1, which does not require such an adaptive choice of our measurements.

Next, we consider the third setting, in which any measurement ${\Pi}^{(n)}$ on
${\cal H}^{\otimes n}$ is allowed.
In this setting, when a sequence of estimators $\{\hat{\Pi}^{(n)}\}_{n=0}^{\infty}$
satisfies the local asymptotic covariance condition at ${\bm{\theta}}$, 
for any weighted matrix $W$,
we have the inequality \cite{YCH18}
\begin{align}\label{WVCW}
\Tr{W V_{{\bm{\theta}}}[\{\hat{\Pi}^{(n)}\}_{n=0}^{\infty}] }
\ge
C_{\bm{\theta}}^H[W ,\cM].
\end{align}
Further, combining the above idea of two-step method, 
the paper \cite{YCH18} showed the following under a suitable regularity condition for
a state family $\{\rho_{\bm{\theta}}| {\bm{\theta}}\in \Theta\}$.
For any family of weighted matrices $\{W_{\bm{\theta}}\}_{\bm{\theta}}$,
  there exists a sequence of estimators 
$\{\hat{\Pi}^{(n)}\}_{n=0}^{\infty}$ 
such that 
the relation $\Tr{W_{\bm{\theta}} V_{{\bm{\theta}}}[\{\hat{\Pi}^{(n)}\}_{n=0}^{\infty}] }
=C_{\bm{\theta}}^H[W_{\bm{\theta}} ,\cM]$ holds with any ${\bm{\theta}} \in \Theta$.
This fact shows that 
the Holevo bound $C_{\bm{\theta}}^H[W ,\cM]$ expresses the ultimate precision bound in the state estimation.
That is, 
the Holevo bound $C_{\bm{\theta}}^H[W ,\cM]$ is the optimal bound in the setting A3 (Collective strategy).

\subsection{Model characterization for quantum parametric models}\label{sec:model_class}
Before we move to the discussion on the nuisance parameter problem for quantum parametric models, 
we briefly discuss the characterization of models in the quantum case. 
As we emphasize in this review, the Holevo bound on the MSE matrix, which is optimal in many cases, involves an optimization and is not expressed directly in terms of information theoretic quantities like quantum Fisher informations. 
It is then important to find some conditions enabling us to write down the achievable bound with an explicit expression. 
Traditionally, there were several sufficient conditions known to derive the closed expression for the precision limit. 
In past, there were a few progresses in deriving several necessary and sufficient conditions together with 
geometrical characterizations of quantum parametric models \cite{hayashi}. 
In the recent paper \cite{js18_clmodel}, one of the authors developed a systematic and unified methodology to address the problem. 

First, let us introduce the super-operator $\cD{\rho}$ for a given state $\rho$, whose action on any $X\in\lofh$ is defined by the following operator equation: 
\begin{equation}\label{def:CommOp}
\rho X-X \rho= \I \rho \cD{\rho}(X)+\I \cD{\rho}(X) \rho.
\end{equation} 
The solution is unique if the state is full rank. 
This super-operator is called the commutation operator \cite{holevo}, which is defined at $\rho$. 

Second, given a quantum parametric model $\cM=\{\rho_{\bm{\theta}}\,|\,{\bm{\theta}}\in\Theta\}$, 
let us introduce the SLD tangent space spanned by the SLD operators:
\begin{equation}\label{sldT}
T_{\bm{\theta}}(\cM)=\mathrm{span}_\bbr \{\SLD{i}\}_{i=1}^d. 
\end{equation}
Clearly, $T_{\bm{\theta}}(\cM)$ is a vector subspace of $\lofhh$ containing only Hermitian operators. 
Holevo investigated a special class of models, known as the D-invariant model. 
A model is said to be \emph{D-invariant} at ${\bm{\theta}}$, if $T_{\bm{\theta}}(\cM)$ is an invariant subspace of $\cD{\rho_{\bm{\theta}}}$. 
Equivalently, $\cD{\rho_{\bm{\theta}}}(\SLD{i})\in T_{\bm{\theta}}(\cM)$ holds for all $i=1,2,\dots,d$. 

The seminal result is the following fact: 
When the model is D-invariant at all ${\bm{\theta}}$, then the Holevo bound is reduced to the RLD CR bound \eqref{Eq:rld_crbound}. 
In other words, the RLD CR bound is achievable. 
In fact, the converse statement is also true and hence we have \cite{js16}: 
\begin{lemma} 
The Holevo bound is identical to the RLD CR bound, if and only if the model is D-invariant.
\end{lemma}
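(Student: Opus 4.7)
The plan is to prove both directions by identifying an explicit minimizer of the Holevo functional \eqref{holevo-bound} and characterizing when it realizes the RLD CR value. The forward implication (D-invariance implies Holevo $=$ RLD) is Holevo's classical result, which I would recover by exhibiting a locally unbiased Hermitian tuple inside $T_{\bm{\theta}}(\cM)$ whose $Z$-matrix equals $(\rldQFI)^{-1}$. The converse is the subtle direction: I must show that coincidence of the two bounds forces $T_{\bm{\theta}}(\cM)$ to be closed under the commutation operator $\cD{\rho_{\bm{\theta}}}$.

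For the forward direction, I assume $\cD{\rho_{\bm{\theta}}}(T_{\bm{\theta}}(\cM)) \subseteq T_{\bm{\theta}}(\cM)$ and take the SLD-dual candidate $X_i^{\star} := \sum_j [(\sldQFI)^{-1}]_{ij}\SLD{j} \in T_{\bm{\theta}}(\cM)$. Local unbiasedness $\tr{(\partial_k \rho_{\bm{\theta}}) X_i^\star} = \delta_{ki}$ follows immediately from \eqref{DEFSLD}. The core computation is to evaluate $Z_{\bm{\theta}}(\bm X^\star)_{ij} = \tr{X_i^\star \rho_{\bm{\theta}} X_j^\star}$; the symmetric part is $[(\sldQFI)^{-1}]_{ij}$ by construction, while the antisymmetric part is handled via the commutation-operator identity \eqref{def:CommOp}, which converts $[\rho_{\bm{\theta}}, X_j^\star]$ into an anticommutator against $\cD{\rho_{\bm{\theta}}}(X_j^\star)$. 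D-invariance is exactly the condition that closes this latter expression back inside $T_{\bm{\theta}}(\cM)$, so the computation terminates and yields $Z_{\bm{\theta}}(\bm X^\star) = (\rldQFI)^{-1}$. Plugging into the Holevo functional returns the RLD CR bound, giving $C^H \leq C^R$. The opposite inequality $C^H \geq C^R$ is a general fact obtained by relaxing the Hermitian minimization in \eqref{holevo-bound} to the larger set of operators satisfying only the unbiasedness constraint \eqref{DEFRLD} used to derive \eqref{CRRLDF}, so equality follows.

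For the converse, I would suppose $C^H_{\bm{\theta}}[W,\cM] = C^R_{\bm{\theta}}[W,\cM]$ and exploit the freedom in $W$. Introducing the D-extended SLD tangent space $\tilde T_{\bm{\theta}}(\cM) := T_{\bm{\theta}}(\cM) + \cD{\rho_{\bm{\theta}}}(T_{\bm{\theta}}(\cM))$, a projection argument in the symmetrized inner product $\langle A, B\rangle_{\rho_{\bm{\theta}}} := \tfrac{1}{2}\tr{\rho_{\bm{\theta}}(AB+BA)}$ shows that minimizers of the Holevo functional may be taken in $\tilde T_{\bm{\theta}}(\cM)^d$: components of $\bm X$ orthogonal (in this inner product) to $\tilde T_{\bm{\theta}}(\cM)$ are invisible to the unbiasedness constraints and only inflate $Z$. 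On the other hand, the RLD CR bound corresponds to the Holevo value over the further enlargement of $\tilde T_{\bm{\theta}}(\cM)$ closed under iterated application of $\cD{\rho_{\bm{\theta}}}$. Equality of the two bounds for \emph{all} $W>0$ therefore forces the two domains to produce identical $Z$-matrices; choosing $W$ to separate pairs of RLD-adapted directions, I would then deduce $\cD{\rho_{\bm{\theta}}}(\SLD{i}) \in T_{\bm{\theta}}(\cM)$ for each $i$, which is D-invariance.

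The main obstacle I anticipate is precisely the last algebraic step of the converse: translating a scalar equality of two minima, taken over the Hermitian constraint in \eqref{holevo-bound} against the non-smooth functional $\Tr{|W^{1/2}\Im Z\, W^{1/2}|}$, into the set-theoretic inclusion $\cD{\rho_{\bm{\theta}}}(T_{\bm{\theta}}(\cM))\subseteq T_{\bm{\theta}}(\cM)$. This requires a careful analysis of the subdifferential (or KKT) conditions at the minimizer and a parameter-counting argument that uses a sufficiently rich family of weights $W$ to isolate each direction $\cD{\rho_{\bm{\theta}}}(\SLD{i})$ against the orthogonal complement of $T_{\bm{\theta}}(\cM)$; I expect this to be where most of the technical work lives.
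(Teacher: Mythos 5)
Your forward direction is sound and follows the standard route: under D-invariance the SLD-dual tuple $\bm X^{\star}=(\SLDdual{1},\dots,\SLDdual{d})$ is admissible and satisfies $Z_{\bm{\theta}}(\bm X^{\star})=(\rldQFI)^{-1}$ (this is precisely one of the equivalent characterizations of D-invariance listed in Appendix C.3, namely $(\rldQFI)^{-1}=Z^{\mathrm S}_{\bm{\theta}}$, or $\SLDdual{i}=\RLDdual{i}$), and the general inequality $C^H_{\bm{\theta}}[W,\cM]\ge C^{\rm R}_{\bm{\theta}}[W,\cM]$ follows by relaxing \eqref{holevo-bound} to non-Hermitian tuples and using monotonicity of the functional $Z\mapsto\Tr{W\Re Z}+\Tr{|W^{1/2}\Im Z\,W^{1/2}|}$ under the Loewner order (a step you use implicitly and should state). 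Note, however, that the paper itself gives no proof of this lemma; it is quoted as a known result from \cite{js16,js18_clmodel}, so there is no in-paper argument to compare against beyond the equivalent conditions collected in Appendix C.3.

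The genuine gap is in the converse, and you acknowledge it yourself: you never carry out the step that turns the scalar equality $C^H_{\bm{\theta}}[W,\cM]=C^{\rm R}_{\bm{\theta}}[W,\cM]$ for all $W>0$ into the inclusion $\cD{\rho_{\bm{\theta}}}(T_{\bm{\theta}}(\cM))\subseteq T_{\bm{\theta}}(\cM)$, which is the entire content of the ``only if'' part. Worse, the structural claim on which your plan rests is incorrect as stated: the RLD CR bound is \emph{not} ``the Holevo value over the enlargement of $\tilde T_{\bm{\theta}}(\cM)$ closed under iterated application of $\cD{\rho_{\bm{\theta}}}$.'' By the fact quoted in Section 3.3 (citing \cite{HM08}), restricting the Hermitian minimization in \eqref{holevo-bound} to the minimal D-invariant subspace containing the SLDs reproduces $C^H_{\bm{\theta}}[W,\cM]$ itself, not $C^{\rm R}_{\bm{\theta}}[W,\cM]$; the RLD bound is instead the relaxation of the same minimization to arbitrary (non-Hermitian) tuples, attained at the RLD duals. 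So ``equality of the two bounds forces the two domains to coincide'' does not go through, and the weight-separation/KKT analysis you defer is exactly the missing argument. A cleaner route, consistent with the cited literature, is: for every admissible Hermitian $\bm X$ one has $Z_{\bm{\theta}}(\bm X)\ge(\rldQFI)^{-1}$; show that equality of the two bounds for all $W>0$ forces the existence of an admissible Hermitian tuple with $Z_{\bm{\theta}}(\bm X)=(\rldQFI)^{-1}$, hence $X_i=\RLDdual{i}$, so the RLD duals are Hermitian and coincide with $\SLDdual{i}$, which is condition (4) of Appendix C.3 and hence D-invariance. Until that implication (or an equivalent gap formula showing $C^H>C^{\rm R}$ for some $W$ whenever $\SLDdual{i}\neq\RLDdual{i}$) is proved, the lemma is only half established.
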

This result established the statistical meaning of the D-invariant model. 
We can also derive several equivalent characterizations of the D-invariant models. 
We list some of these conditions in Appendix \ref{sec:Qstat_model}. 
Two important examples for the D-invariant models are: 
The quantum Gaussian shift model \cite{holevo,yl73} and the full-parameter model on finite-dimensional Hilbert spaces. 
The latter model is parametrized by $d=(\dim\cH)^2-1$ parameters. 

The property of D-invariance is useful even when our model is not D-invariant. 
Consider a D-invariant model ${\cal M}'$ that includes the original model ${\cal M}$. 
Although the Holevo bound in the original model ${\cal M}$ is given as the minimum value of 
$\Tr{W\Re Z_{\bm{\theta}}({\bm X})}+\Tr{|W^{\frac12} {\Im} Z_{\bm{\theta}}({\bm X}) W^{\frac12} |}$,
the choice of 
${\bm X}=(X_1,\ldots, X_d)$ can be restricted to the case when each $X_i$ is given as a linear sum of 
SLD operators of the D-invariant model ${\cal M}'$ \cite{HM08}.
That is, in order to calculate the Holevo bound, it is sufficient to consider the minimization under the D-invariant model ${\cal M}'$.

We next turn our attention to the SLD CR bound. 
It is clear that the SLD CR bound cannot be saturated in the single copy setting 
in general due to the non-commutativity of the SLD operators. 
One exceptional case is when all SLDs commute with each other. 
That is, there exists a set of SLDs $\{L_{\bm{\theta};i}\}_{\bm{\theta},i}$ such that 
$[\SLD{i}\,,\,L_{\bm{\theta}';j}^{\rm{S}}]=0$ for all $i,j=1,2,\dots,d$ and 
all ${\bf \theta}, {\bf \theta}'\in\Theta$. 
In this case, we say that the model is quasi-classical \cite{nagaoka87,ANbook}
\footnote{Quasi-classicality for the one-parameter model was first introduced in \cite{nagaoka87}. 
Its generalization to the general model and other equivalent characterization were reported in \cite[Chapter 7.4]{ANbook}}.  
An equivalent characterization of the quasi-classical model is 
the existence of mutually commuting Hermitian operators $M_{\bm{\theta};i}$ ($i=1,2,\ldots,d$) 
such that the family of states is expressed as 
\begin{align}
\rho_{\bm{\theta}}&=N(\bm{\theta})\rho_0N(\bm{\theta}),\\ \nonumber
N(\bm{\theta})&:=\exp\left[{\frac{1}{2}\sum_{i=1}^d\int_{\bm{\theta}_0}^{\bm{\theta}} M_{\bm{\theta}',i}d{\theta'}^i -\frac12\psi(\bm{\theta})}\right],\\ 
&\left[ M_{\bm{\theta};i}\,,\,M_{\bm{\theta}',j} \right]=0\quad\forall i,j,\ \forall \bm{\theta};\bm{\theta}'.  
\end{align}
Here $\bm{\theta}_0$ is an arbitrary reference point and $\psi(\bm{\theta})$ is a scalar function for a normalization of the state. 
As an important class of the quasi-classical model, we have a quantum version of the exponential family of probability distributions.  
Let $F_i$ ($i=1,2,\ldots,d$) be mutually commutative Hermitian operators ($\forall i,j,\ [F_i,F_j]=0$) 
and define the family by
\be \label{Qe-family}
\rho_{\bm{\theta}}=e^{ \frac{1}{2}\sum_{i=1}^d  F_{i}\theta_i -\frac12\psi(\bm{\theta}) }\rho_0 e^{ \frac{1}{2}\sum_{i=1}^d  F_{i}\theta_i -\frac12\psi(\bm{\theta}) },
\ee
where $\psi(\bm{\theta})=\log\left[ \tr{\rho_0 \exp[\sum_{i=1}^d  F_{i}\theta_i ]}\right]$. 
This family of quantum states is called the \emph{quantum exponential family}, 
which is a quantum version of the exponential family 
of probability distributions known in statistics \cite{lc,bnc,ANbook}. 
This quantum exponential family plays an important role 
when studying a geometrical aspect of quantum statistical models \cite{ANbook,hayashi}.

When the model is quasi-classical, we can diagonalize the SLDs simultaneously.  
Hence, there exists a PVM $\Pi$ such that 
the classical Fisher information matrix under the resulting distribution family 
$\{ p_{\bm \theta}(x|\Pi)| {\bm{\theta}}\in \Theta \}$
achieves the SLD Fisher information matrix
at all points in $\Theta$.
%This immediately enables us to construct an optimal PVM attaining the SLD CR at each point ${\bm{\theta}}$. 
Achievability of this bound is then established for 
the repetitive strategy (A1) with the maximum likelihood estimator
%the adaptive strategy 
as discussed in Section \ref{sec:Qnui1-2}. 
Also, this condition implies the existence of Fisher symmetric POVM.  Moreover, the converse statement is also true. The existence of a POVM achieving the SLD Fisher information matrix 
for all points $\bm{\theta}$ implies that the state family is quasi-classical. 

Beside the above quasi-classical model, there is an extreme case when one can saturate 
the SLD CR bound asymptotically. This condition was investigated by several authors \cite{js16,rjdd16,js18_clmodel}. If
%We say that a model is asymptotically classical if 
$\tr{\rho_{\bm{\theta}} [\SLD{i}\,,\,\SLD{j}]}=0$ holds for all $i,j=1,2,\dots,d$
at $\bm{\theta}$,
the SLD CR bound can be achieved in asymptotically in the setting A3. Hence, such a model is called {\it asymptotically classical} at $\bm{\theta}$.
Indeed, this definition does not depend on the choice of SLDs 
$L_{\bm{\theta};i}^{\rm{S}}$ 
because the quantity $\tr{\rho_{\bm{\theta}} 
[L_{\bm{\theta};i}^{\rm{S}},L_{\bm{\theta};j}^{\rm{S}}]}$
 does not depend on this choice. 
We then have the following result \cite{rjdd16}. 
\begin{lemma} 
The Holevo bound is identical to the SLD CR bound, if and only if the model is asymptotically classical.
\end{lemma}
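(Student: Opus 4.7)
My plan is to handle both directions through a common reduction: decompose any feasible $\bm{X}=(X_1,\ldots,X_d)$ for the Holevo bound \eqref{holevo-bound} along and orthogonal to the SLD tangent space with respect to the SLD inner product $(A|B)_{\bm{\theta}}:=\tfrac{1}{2}\tr{\rho_{\bm{\theta}}(AB+BA)}$. Concretely, set
\[
X_i^{\ast}:=\sum_{k=1}^{d}(\sldQFI^{-1})_{ik}\,\SLD{k}, \qquad Y_i:=X_i-X_i^{\ast},
\]
and verify via \eqref{DEFSLD} that $\tr{\del_j\rho_{\bm{\theta}}\SLD{k}}=J^{\rm S}_{{\bm{\theta}};j,k}$, so that $X_i^{\ast}$ alone saturates the unbiasedness constraint $\tr{\del_j\rho_{\bm{\theta}}X_i}=\delta_{i,j}$ and consequently each $Y_i$ is SLD-orthogonal to every $\SLD{k}$. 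The central computation is then to evaluate the matrix $Z_{\bm{\theta}}(\bm{X})$ in this decomposition, using the identity
\[
\tr{\SLD{k}\rho_{\bm{\theta}}\SLD{l}}=J^{\rm S}_{{\bm{\theta}};k,l}+\tfrac{1}{2\I}\tr{\rho_{\bm{\theta}}[\SLD{k},\SLD{l}]},
\]
which separates the real/imaginary parts cleanly in terms of the quantity flagged by the asymptotic classicality condition.

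For sufficiency ($\Leftarrow$), I take $\bm{X}=\bm{X}^{\ast}$ as a trial point. The formula above together with the hypothesis $\tr{\rho_{\bm{\theta}}[\SLD{i},\SLD{j}]}=0$ yields $Z_{\bm{\theta}}(\bm{X}^{\ast})=\sldQFI^{-1}$, which is real. Hence the Holevo bound evaluated at $\bm{X}^{\ast}$ equals $\Tr{W\sldQFI^{-1}}=C_{\bm{\theta}}^{\rm S}[W,\cM]$, giving $C_{\bm{\theta}}^{H}\le C_{\bm{\theta}}^{\rm S}$. Combined with the general inequality $C_{\bm{\theta}}^{H}\ge C_{\bm{\theta}}^{\rm S}$ (which follows because $\Re Z_{\bm{\theta}}(\bm{X})\ge\sldQFI^{-1}$ is a matrix form of the SLD CR bound and the imaginary term is nonnegative), I obtain equality.

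For necessity ($\Rightarrow$), I assume $C_{\bm{\theta}}^{H}[W,\cM]=C_{\bm{\theta}}^{\rm S}[W,\cM]$ for a strictly positive $W$. Using the decomposition and orthogonality, a direct expansion gives $\Re Z_{\bm{\theta}}(\bm{X})=\sldQFI^{-1}+G$ with $G:=[(Y_i|Y_j)_{\bm{\theta}}]\ge 0$, so
\[
\Tr{W\Re Z_{\bm{\theta}}(\bm{X})}+\Tr{|W^{1/2}\Im Z_{\bm{\theta}}(\bm{X})W^{1/2}|}\ge C_{\bm{\theta}}^{\rm S}[W,\cM]+\Tr{WG}.
\]
Equality with $C_{\bm{\theta}}^{\rm S}$ at a minimizer therefore forces both $\Tr{WG}=0$ and $\Im Z_{\bm{\theta}}(\bm{X})=0$ simultaneously. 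Since $W>0$ and the SLD inner product is nondegenerate on full-rank states, the first condition yields $G=0$ and hence $Y_i=0$ for every $i$, so the minimizer must be $\bm{X}^{\ast}$. The second condition, together with $\Im Z_{\bm{\theta}}(\bm{X}^{\ast})_{i,j}=\sum_{k,l}(\sldQFI^{-1})_{ik}(\sldQFI^{-1})_{jl}\cdot\tfrac{1}{2\I}\tr{\rho_{\bm{\theta}}[\SLD{l},\SLD{k}]}$ and invertibility of $\sldQFI$, forces $\tr{\rho_{\bm{\theta}}[\SLD{i},\SLD{j}]}=0$ for all $i,j$, which is precisely asymptotic classicality.

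The main obstacle I anticipate is the necessity argument, specifically ruling out a trade-off in which a minimizer accepts a strictly larger $\Re Z$ in exchange for zeroing $\Im Z$. My resolution is the observation that $\Tr{W\Re Z_{\bm{\theta}}(\bm{X})}-C_{\bm{\theta}}^{\rm S}[W,\cM]$ and $\Tr{|W^{1/2}\Im Z_{\bm{\theta}}(\bm{X})W^{1/2}|}$ are each individually nonnegative, so no cancellation is possible and both must vanish at any minimizer. A minor auxiliary point is existence of a minimizer, which I would justify by coercivity of $\Tr{W\Re Z_{\bm{\theta}}(\bm{X})}$ on the affine feasible set for $W>0$, reducing the minimization to a compact set.
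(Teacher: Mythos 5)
Your proof is correct, and it is worth noting that the paper itself does not actually prove this statement: it is quoted from the literature \cite{js16,rjdd16,js18_clmodel} and restated as Lemma \ref{DSH} in Appendix \ref{sec:Qstat_model} together with a list of equivalent conditions, so your derivation supplies a self-contained argument for a step the text delegates to references. Your route --- splitting any feasible $\bm X$ as $X_i=\SLDdual{i}+Y_i$ with $Y_i$ orthogonal to the SLD tangent space, so that $\Re Z_{\bm{\theta}}(\bm X)=(\sldQFI)^{-1}+\bigl[\sldin{Y_i}{Y_j}\bigr]\ge(\sldQFI)^{-1}$ while $\Im Z_{\bm{\theta}}(\bm X^{\ast})$ is a congruence by the invertible matrix $(\sldQFI)^{-1}$ of the antisymmetric array of commutator expectations --- is essentially the mechanism of the cited works, and it is consistent with what the paper does elsewhere: your trial point $\bm X^{\ast}$ is exactly the vector ${\bm L}=(\SLDdual{i})$ used in Remark \ref{remark1} (where $\Re Z_{\bm{\theta}}({\bm L})=(\sldQFI)^{-1}$ is noted), the cross-term cancellation is the same one the paper uses to evaluate $C^H_{\bm{\theta};\mathrm{I}}[W_{\rm I}=1,\cM]$ in Section \ref{sec:1para}, and your coercivity argument for attainment of the minimum mirrors Remark \ref{remark2} (a minimizing-sequence argument would bypass attainment altogether). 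A bonus of your necessity step is that it establishes the sharper fact, also listed among the equivalent conditions in the appendix, that equality at a single $W_0>0$ already forces $\tr{\rho_{\bm{\theta}}[\SLD{i},\SLD{j}]}=0$ for all $i,j$. Two small repairs: the displayed identity should read $\tr{\SLD{k}\rho_{\bm{\theta}}\SLD{l}}=J^{\rm S}_{{\bm{\theta}};k,l}-\tfrac12\tr{\rho_{\bm{\theta}}[\SLD{k},\SLD{l}]}$, since the correction term is purely imaginary, whereas the factor $\tfrac{1}{2\I}$ as you wrote it would make it real --- this does not affect your argument, because all you use is that $\Im Z_{\bm{\theta}}(\bm X^{\ast})=0$ is equivalent, via invertibility of $\sldQFI$, to the vanishing of the commutator expectations; and the implication $\sldin{Y_i}{Y_i}=0\Rightarrow Y_i=0$ uses strict positivity of $\rho_{\bm{\theta}}$, so you should flag explicitly that you are invoking the paper's standing full-rank assumption there.
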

Other equivalent conditions are listed in Appendix \ref{sec:Qstat_model}. 

Note that the D-invariant model and the asymptotically classical model are mutually exclusive in the following sense. 
Suppose that a model is D-invariant and at the same time asymptotically classical. 
Then, we can show that this model is classical, that is, the state $\rho_{\bm{\theta}}$ for ${\bm{\theta}}\in\Theta$ is represented 
by a diagonal matrix in some basis. We can also show that this is also equivalent to 
equivalence of the SLD and RLD Fisher information matrices. 
In \cite{js18_clmodel}, several equivalent characterizations of the classical model were derived. 
For our convenience, we state the following result \cite{js18_clmodel}. 
\begin{proposition}
For a given model $\cM=\{\rho_{\bm{\theta}}|{\bm{\theta}}\in\Theta\}$ composed of strictly positive density matrices, 
$\sldQFI=\rldQFI$ for all ${\bm{\theta}}\in\Theta$ holds if and only if 
the model is D-invariant and asymptotically classical. 
Further, this condition is equivalent to the case when the model is classical. 
\end{proposition}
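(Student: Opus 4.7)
The plan is to establish the three-way equivalence of (A) $\sldQFI=\rldQFI$ for all ${\bm{\theta}}\in\Theta$; (B) the model is D-invariant and asymptotically classical; and (C) the model is classical, i.e., all $\rho_{\bm{\theta}}$ mutually commute. I will prove (A)$\Leftrightarrow$(C) by a direct spectral computation, together with (C)$\Rightarrow$(B) by a common-eigenbasis check and (B)$\Rightarrow$(A) by combining the two lemmas already recalled in this section.

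The easy implications (C)$\Rightarrow$(A) and (C)$\Rightarrow$(B) are handled in a common ${\bm{\theta}}$-independent eigenbasis: each $\rho_{\bm{\theta}}$ and each $\partial_i\rho_{\bm{\theta}}$ is diagonal, so solving \eqref{DEFSLD} and \eqref{DEFRLD} both yield $\SLD{i}=\RLD{i}=\sum_k(\partial_i\log p_k({\bm{\theta}}))|k\rangle\langle k|$. This gives $\sldQFI=\rldQFI$, both equal to the classical Fisher information of $\{p_k({\bm{\theta}})\}$; it gives $[\SLD{i},\SLD{j}]=0$ so the model is asymptotically classical; and since each $\SLD{i}$ commutes with $\rho_{\bm{\theta}}$, \eqref{def:CommOp} forces $\cD{\rho_{\bm{\theta}}}(\SLD{i})=0\in T_{\bm{\theta}}(\cM)$, so the model is D-invariant.

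For (A)$\Rightarrow$(C), I diagonalize $\rho_{\bm{\theta}}=\sum_k p_k|k\rangle\langle k|$ and use $(\SLD{i})_{kl}=2(\partial_i\rho_{\bm{\theta}})_{kl}/(p_k+p_l)$ and $(\RLD{i})_{kl}=(\partial_i\rho_{\bm{\theta}})_{kl}/p_k$. A short manipulation yields
\[
\Re(\rldQFI)_{ij}-\sldQFI_{ij}=\sum_{k,l}(\partial_i\rho_{\bm{\theta}})_{kl}\,\overline{(\partial_j\rho_{\bm{\theta}})_{kl}}\,\frac{(p_k-p_l)^2}{2p_kp_l(p_k+p_l)},
\]
which is positive semidefinite in $(i,j)$. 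Setting $i=j$, the hypothesis $\sldQFI=\rldQFI$ forces $(\partial_i\rho_{\bm{\theta}})_{kl}=0$ whenever $p_k\neq p_l$, i.e.\ $[\rho_{\bm{\theta}},\partial_i\rho_{\bm{\theta}}]=0$ for every $i$ and ${\bm{\theta}}$; the analogous spectral expression for $\Im(\rldQFI)$ carries an extra factor $(p_k-p_l)$ and vanishes automatically. To promote this pointwise algebraic condition to global simultaneous diagonalizability, I look along any smooth curve ${\bm{\theta}}(t)\subset\Theta$ at $\dot\rho(t)=\sum_i\dot{\theta}_i\partial_i\rho_{{\bm{\theta}}(t)}$, which commutes with $\rho(t)$. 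Writing $\rho(t)=U(t)D(t)U(t)^\dagger$ with $X=\dot UU^\dagger$ anti-Hermitian, the off-diagonal entries of $\dot\rho$ in the $\rho$-eigenbasis are $(p_l-p_k)X_{kl}$, so $X_{kl}=0$ whenever $p_k\neq p_l$ and the spectral projectors of $\rho(t)$ are constant; by connectedness of $\Theta$ the whole family shares a fixed eigenbasis.

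For (B)$\Rightarrow$(A), the two lemmas recalled here give $C_{\bm{\theta}}^H[W,\cM]=C_{\bm{\theta}}^{\mathrm R}[W,\cM]=C_{\bm{\theta}}^{\mathrm S}[W,\cM]$ for every weight $W>0$, so
\[
\tr{W\big((\sldQFI)^{-1}-\Re((\rldQFI)^{-1})\big)}=\tr{|W^{1/2}\Im((\rldQFI)^{-1})W^{1/2}|}
\]
must hold identically in $W$. The left side is linear in $W$, but the trace-norm term on the right fails additivity under $W\mapsto W_1+W_2$ unless $\Im((\rldQFI)^{-1})=0$ (a direct check with a $2\times 2$ antisymmetric matrix and $W_1=I$, $W_2=\mathrm{diag}(4,1)$ makes this quantitative). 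Hence $\Im((\rldQFI)^{-1})=0$, after which linearity in $W$ forces $(\sldQFI)^{-1}=\Re((\rldQFI)^{-1})=(\rldQFI)^{-1}$. The main obstacle is the global-diagonalizability step in (A)$\Rightarrow$(C): upgrading the pointwise condition $[\rho_{\bm{\theta}},\partial_i\rho_{\bm{\theta}}]=0$ to a common eigenbasis over all of $\Theta$ requires care at points where eigenvalue multiplicities of $\rho_{\bm{\theta}}$ change, and one must lean on the paper's smoothness and strict-positivity hypotheses to handle such degenerate loci by continuity.
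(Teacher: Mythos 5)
The paper itself does not prove this proposition --- it is stated with a citation to \cite{js18_clmodel} --- so there is no internal proof to compare against, and your attempt has to be judged on its own merits. Most of it holds up: the implications (C)$\Rightarrow$(A),(B) in a common eigenbasis are correct; the spectral identity for $\Re\rldQFI-\sldQFI$ is correct and does force $(\partial_i\rho_{\bm{\theta}})_{kl}=0$ whenever $p_k\neq p_l$, i.e.\ $[\rho_{\bm{\theta}},\partial_i\rho_{\bm{\theta}}]=0$ pointwise; and the (B)$\Rightarrow$(A) step via the two lemmas (D-invariance $\Leftrightarrow$ Holevo $=$ RLD bound, asymptotic classicality $\Leftrightarrow$ Holevo $=$ SLD bound) is essentially right, though for $d>2$ you should add the reduction to a $2\times 2$ block of $\Im(\rldQFI)^{-1}$ (extend the identity to $W\ge 0$ by continuity and take $W$ supported on a coordinate pair) before invoking your non-additivity example; alternatively the appendix list of equivalent conditions ($\SLDdual{i}=\RLDdual{i}$ for D-invariance, $\Im Z^{\mathrm S}_{\bm{\theta}}=0$ for asymptotic classicality) gives $\sldQFI=\rldQFI$ more directly.

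The genuine gap is the globalization step in (A)$\Rightarrow$(C). Your curve argument presupposes a differentiable eigendecomposition $\rho(t)=U(t)D(t)U(t)^{\dagger}$ along the curve, which is precisely what can fail at eigenvalue crossings, and the hypotheses you invoke to patch it --- ``smoothness and strict positivity \ldots by continuity'' --- are the wrong ones: strict positivity only keeps eigenvalues away from zero and says nothing about internal degeneracies, and smoothness plus strict positivity alone do not suffice. Concretely, take a qubit family $\rho(t)=\tfrac12 I+g(t)\cdot\bm{\sigma}$ with $g(t)=h(t)\,\hat n_{\pm}$, where $h>0$ away from $t=0$, $h$ vanishes to infinite order at $t=0$, and $\hat n_{+}\neq\pm\hat n_{-}$ on the two sides. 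This family is smooth, strictly positive and connected, satisfies $[\rho,\dot\rho]\equiv 0$ and hence $\sldQFI=\rldQFI$ at every $t$ (both vanish at $t=0$), yet admits no common eigenbasis. What excludes such examples is the paper's standing regularity assumption that the $\partial_i\rho_{\bm{\theta}}$ are linearly independent (nonsingular quantum Fisher information), which fails at $t=0$ above; and even granting regularity, one still needs an explicit argument across degeneracy loci --- e.g.\ use the resolvent representation of the spectral projectors to show that $[\rho,\partial_i\rho]=0$ makes them locally constant on the stratum where the multiplicities are constant, and then use regularity to rule out a jump of the eigenframe across the locus where multiplicities change. As written, this passage from the pointwise commutation relation to a single fixed diagonalizing basis is asserted rather than proved, so the implication (A)$\Rightarrow$(C), and with it the chain closing the three-way equivalence, is incomplete.
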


Finally, when the model is generic in the sense that it is neither D-invariant nor asymptotically classical, 
we need to solve the optimization appearing in the definition of the Holevo bound. 
Although an analytical expression for the Holevo bound might not be derived, it is not so hard to evaluate numerically. 
For example, a semi-definite programing approach was employed to evaluate the Holevo bound numerically in \cite{PhysRevA.97.012106,afd19}. 
In \cite{js16}, a non-trivial closed expression was obtained for any two-parameter qubit model. 
There, the Holevo bound is expressed in terms of both the SLD and RLD Fisher information matrices as follows \cite{js16}. 
\be \label{Hbound_qubit}
C_{\bm \theta}^H[W,\cM]=
\begin{cases}
C_{\bm \theta}^{\rm R}[W,\cM]&\mbox{ for } B_{\bm \theta}[W]\ge0\\[2ex] 
C_{\bm \theta}^{\rm S}[W,\cM] 
+\frac14\ \frac{\left(\Tr{|W^{\frac12}\Im (J^{\rm R}_{\bm \theta})^{-1}W^{\frac12}|}\right)^2}{\Tr{W\left(\Re(J^{\rm R}_{\bm \theta})^{-1}- (J^{\rm S}_{\bm \theta})^{-1}\right)}}&\mbox{ for } B_{\bm \theta}[W]<0  
\end{cases},
\ee
where $B_{\bm \theta}[W]:=\Tr{W\left(\Re (J^{\rm R}_{\bm \theta})^{-1}-(J^{\rm S}_{\bm \theta})^{-1}\right)}
-\frac12 \Tr{|W^{\frac12}\Im (J^{\rm R}_{\bm \theta})^{-1}W^{\frac12}|}$. 

\section{Nuisance parameter problem in the quantum case}\label{sec:Qnui2}

\subsection{Formulation of the problem}
We now introduce a model with nuisance parameters for the quantum case. 
Consider a $d$-parameter model as before and divide the parameters into two groups, 
one consists of parameters of interest $\bm{\theta}_{\mathrm{I}}=(\theta_1,\theta_2,\dots,\theta_{\dI})$ 
and the other consists of nuisance parameters $\bm{\theta}_{\mathrm{N}}=(\theta_{\dI+1},\theta_{\dI+2},\dots,\theta_d)$. 
We thus have a family of quantum states parametrized by two different kinds of parameters:
\be
\cM=\{\rho_{\bm{\theta}}\,|\,{\bm{\theta}}=(\bm{\theta}_{\mathrm{I}},\bm{\theta}_{\mathrm{N}})\in\Theta\subset\bbr^d\}. 
\ee
Our goal is to perform a good measurement and then to infer the values of parameter of interest $\bm{\theta}_{\mathrm{I}}$. 
Let $\hat{\Pi}_{\mathrm{I}}=\Pi\circ\hat{{\bm{\theta}}}_{\mathrm{I}}^{-1}$ be an estimator for the parameter of interest 
and define its MSE matrix for the parameters of interest by 
\begin{align}\nonumber
V_{\bm{\theta};\mathrm{I}}[\hat{\Pi}_{\mathrm{I}}]
&=\left[ \sum_{x\in\cX} \tr{\rho_{\bm{\theta}}\Pi_x}({\hat{\theta}_i}(x)-\theta_i)({\hat{\theta}_j}(x)-\theta_j)  \right]\\
&=\left[ E_{\bm{\theta}}\big[({\hat{\theta}_i}(X)-\theta_i)({\hat{\theta}_j}(X)-\theta_j)|\Pi \big] \right],
\end{align}
where the matrix indices $i,j$ run from $1$ to $\dI$ (instead of $d$). 
Hence, the MSE matrix is a  $\dI\times \dI$ matrix. 
We wish to find the precision bound for the above MSE matrix for the parameter of interest 
under the locally unbiasedness condition. 

Upon dealing with the nuisance parameter problem, it is necessary 
to define the locally unbiasedness for a subset of parameters. 
(See also Appendix \ref{sec:AppCstat1}.)
Let us consider the two sets of parameters ${\bm{\theta}}=(\bm{\theta}_{\mathrm{I}},\bm{\theta}_{\mathrm{N}})$ and 
an estimator $\hat{{\bm{\theta}}}_{\mathrm{I}}=(\hat{{\theta}}_1,\dots,\hat{{\theta}}_{\dI})$ as before. 
An estimator $\hat{\Pi}_{\mathrm{I}}=\Pi\circ\hat{{\bm{\theta}}}_{\mathrm{I}}^{-1}$ for the parameter of interest is called {\it unbiased} 
for $\bm{\theta}_{\mathrm{I}}$, if the condition 
\be
E_{\bm{\theta}}[{\hat{\theta}_i}(X)|\Pi]=\theta_i,
\ee
holds for all $i=1,2,\dots,\dI$ and for all ${\bm{\theta}}\in\Theta$. 
Clearly, this condition of unbiasedness does not concern 
the estimate of the nuisance parameters. 

Next, we introduce the concept of locally unbiasedness for the parameter of interest as follows \cite{jsNuipaper}. 
\begin{definition}
An estimator $\hat{\Pi}_{\mathrm{I}}$ for the parameter of interest is locally unbiased for $\bm{\theta}_{\mathrm{I}}$ at ${\bm{\theta}}$, 
if, for $\forall i\in \{1,\dots,\dI\}$ and $\forall j\in \{1,\dots,d\}$, 
\be \label{lu_cond}
E_{\bm{\theta}}[{\hat{\theta}_i}(X)|\Pi]=\theta_i\ \mathrm{and}\  \frac{\del}{\del\theta_j}E_{\bm{\theta}}[{\hat{\theta}_i}(X)|\Pi]=\delta_{i,j}
\ee
are satisfied at a given point ${\bm{\theta}}$. 
\end{definition}
Just as in the classical case, we stress the importance of the requirement that
$\frac{\del}{\del\theta_j}E_{\bm{\theta}}[{\hat{\theta}_i}(X)|\Pi]=0$ for $i=1,2,\dots,\dI$ and $j=\dI+1,\dI+2,\dots,d$. 
This requirement can be trivially satisfied if a probability distribution from a POVM 
is independent of the nuisance parameters. But this can only happen in special cases. 
In general, a non-vanishing $\frac{\del}{\del\theta_j}E_{\bm{\theta}}[{\hat{{\bm{\theta}}}_{\mathrm{I}}}(X)|\Pi]$ 
(for $j=\dI+1,\dI+2,\dots,d$) affects the MSE bound for the parameters of interest. 
See the general inequality \eqref{mse_genineq} in Appendix \ref{sec:AppCstat1}. 

It is known that for a given regular statistical model, we can always construct a 
locally unbiased estimator at arbitrary point; see expression \eqref{lu_est} in Appendix \ref{sec:AppPr2}. 
We can extend this to the case with nuisance parameters as follows. 
Suppose we fix a POVM whose classical statistical model is not regular. 
In particular, we consider the case when the score functions for the nuisance parameters 
are not linearly independent, i.e., $\{\frac{\del}{\del \theta_i}\log p_{\bm{\theta}}(x)\}_{i=\dI+1,\dots,d}$ are linearly dependent. 
In this case, the Fisher information matrix is singular and is not invertible. 
Nevertheless, the following estimator is locally unbiased for $\bm{\theta}_{\mathrm{I}}=(\theta_1,\dots,\theta_{\dI})$: 
\be
\hat{\theta}_i(x)=\theta_i+\sum_{j=1}^{\dI} \left( (J_{\bm{\theta}}(\mathrm{I}|\mathrm{N})[\Pi]) ^{-1}\right)_{j,i} u_{\bm{\theta}_{\mathrm{I}};j}(x|M_*).
\ee
Here, $J_{\bm{\theta}}(\mathrm{I}|\mathrm{N})[\Pi]$ is the partial Fisher information of \eqref{cpFI} for 
the classical model upon performing a POVM $\Pi$. To evaluate this partial Fisher information, we can use the generalized inverse. 
$u_{\bm{\theta}_{\mathrm{I}},j}(x|M)$ ($j=1,2,\ldots,\dI$) are the effective score functions defined by \eqref{eqApp:effscore} in Appendix \ref{sec:AppCstat2}. $M_*=J_{\bm{\theta};{\mathrm{I}},{\mathrm{N}}} (J_{\bm{\theta};{\mathrm{N}},{\mathrm{N}}})^{-1}$ is a $\dI\times\dN$ matrix, which is an optimal choice.  

Just as in the classical case, the locally unbiasedness here is also robust under the change of variables. 
Following the same logic as in Lemma \ref{lem_lucond}, we can prove the lemma below \cite{jsNuipaper}.  
\begin{lemma}\label{lem_lucond2}
If an estimator $\hat{\Pi}_{\mathrm{I}}$ is locally unbiased for $\bm{\theta}_{\mathrm{I}}$ at ${\bm{\theta}}$, 
then it is also locally unbiased for the new parametrization defined by the transformation \eqref{nui_change0}. 
That is, if two conditions \eqref{lu_cond} are satisfied, then the following conditions also hold. 
\be \label{lu_cond2} 
E_{\bm{\xi}}[{\hat{\theta}_i}(X)|\Pi]={\xi}_i\ \mathrm{and}\  \frac{\del}{\del{\xi}_j}E_{\bm{\xi}}[{\hat{\theta}_i}(X)|\Pi]=\delta_{i,j},
\ee
for $\forall i\in \{1,\dots,\dI\}$ and $\forall j\in \{1,\dots,d\}$. 
\end{lemma}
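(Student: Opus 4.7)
The plan is to imitate the proof of the classical analogue (Lemma \ref{lem_lucond}) almost verbatim, since nothing in that argument actually depends on whether probabilities come from $p_{\bm{\theta}}(x)$ or from the Born rule $\tr{\rho_{\bm{\theta}}\Pi_x}$. The central point is that a reparametrization relabels states but leaves both the measurement $\Pi$ and the estimator map $\hat{\bm{\theta}}_{\mathrm{I}}$ untouched, so $E_{\bm{\xi}}[\hat{\theta}_i(X)|\Pi]$ equals $E_{\bm{\theta}(\bm{\xi})}[\hat{\theta}_i(X)|\Pi]$ as functions on the new parameter space. Everything then reduces to the chain rule plus the constraint $\bm{\xi}_{\mathrm{I}}=\bm{\theta}_{\mathrm{I}}$.

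First I would verify the unbiasedness equality in \eqref{lu_cond2}. Because $\xi_i=\theta_i$ for $i\in\{1,\dots,\dI\}$, the condition $E_{\bm{\theta}}[\hat{\theta}_i(X)|\Pi]=\theta_i$ from \eqref{lu_cond} immediately yields $E_{\bm{\xi}}[\hat{\theta}_i(X)|\Pi]=\xi_i$. Next I would handle the derivative condition via the chain rule:
\begin{equation*}
\frac{\partial}{\partial \xi_j}E_{\bm{\xi}}[\hat{\theta}_i(X)|\Pi]
=\sum_{k=1}^{d}\frac{\partial \theta_k}{\partial \xi_j}\,
\frac{\partial}{\partial \theta_k}E_{\bm{\theta}}[\hat{\theta}_i(X)|\Pi].
\end{equation*}
By the hypothesis \eqref{lu_cond}, the second factor equals $\delta_{i,k}$ for every $k\in\{1,\dots,d\}$, so the sum collapses to $\partial\theta_i/\partial\xi_j$.

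Finally I would split by cases on $j$. For $j\in\{1,\dots,\dI\}$ the relation $\theta_i=\xi_i$ (valid for $i\le \dI$) gives $\partial\theta_i/\partial\xi_j=\delta_{i,j}$. For $j\in\{\dI+1,\dots,d\}$ the same relation shows that $\theta_i$ does not depend on $\xi_j$, so $\partial\theta_i/\partial\xi_j=0=\delta_{i,j}$ (since $i\le\dI<j$). Combining the cases yields \eqref{lu_cond2}.

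I do not anticipate a genuine obstacle; the only thing to keep honest is the smoothness and invertibility of the reparametrization \eqref{nui_change0}, which is built into the statement and guarantees that the chain-rule step is legitimate and that $\partial\theta_k/\partial\xi_j$ is well defined. The proof is essentially bookkeeping, and it is reasonable to refer back to Appendix \ref{sec:AppPr1} for the analogous classical computation rather than repeat it.
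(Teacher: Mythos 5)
Your proof is correct and follows essentially the same route as the paper, which proves the quantum lemma by "the same logic" as the classical Lemma \ref{lem_lucond} in Appendix \ref{sec:AppPr1}: the chain rule for the reparametrized expectation, the full local-unbiasedness hypothesis (including the vanishing derivatives with respect to the nuisance parameters), and the identity $\theta_i=\xi_i$ for $i\le \dI$ to evaluate $\partial\theta_i/\partial\xi_j$. Your organization (collapsing the chain-rule sum via $\delta_{i,k}$ before splitting cases on $j$) is just a minor streamlining of the paper's case-by-case computation.
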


Having introduced the locally unbiasedness condition for the parameter of interest, 
we define the most informative bound for the parameter of interest by the following optimization: 
\begin{definition}
For a given $\dI\times \dI$ weight matrix $W_{\mathrm{I}}>0$, the most informative bound about the parameter of interest is 
defined by
\be\label{qcrboundnui0}
C_{\bm{\theta};\mathrm{I}}[W_{\mathrm{I}},\cM]
:=\min_{\hat{\Pi}_{\mathrm{I}}\mathrm{\,:l.u.\,at\,}\bm{\theta} \mathrm{\, for \,} \bm{\theta}_{\mathrm{I}}} 
\Tr{W_{\mathrm{I}}V_{\bm{\theta};\rm{I}}[\hat{\Pi}_{\mathrm{I}}]}, 
\ee
where the condition for the minimization is such that estimators $\hat{\Pi}_{\mathrm{I}}$ are 
locally unbiased for $\bm{\theta}_{\mathrm{I}}$ at ${\bm{\theta}}$. 
\end{definition}

By taking into account the nuisance parameters in the derivation of \eqref{MICRbound} 
and the classical CR inequality \eqref{ccrineq2}, 
we can show that the following alternative expression holds \cite{jsNuipaper}. 
For readers' convenience, its derivation is given in Appendix \ref{sec:AppPr2}. 
\be\label{MICRbound2}
C_{\bm{\theta};\mathrm{I}}[W_{\mathrm{I}},\cM]
=\min_{\Pi\mathrm{: POVM}} \Tr{W_{\mathrm{I}} J_{\bm{\theta}}^{\mathrm{I},\mathrm{I}}[\Pi] }, 
\ee
where $J_{\bm{\theta}}^{{\mathrm{I}},{\mathrm{I}}}[\Pi] $ is the block sub-matrix of the inverse of the Fisher information 
matrix about the POVM $\Pi$ [see \eqref{fisherblock}].
In general, the above  minimization \eqref{MICRbound2} may be even harder than the optimization in the case of estimating all parameters. 

Likewise, we have the nuisance parameter version of the Holevo bound \cite{holevo}
for $C_{\bm{\theta};\mathrm{I}}[W_{\mathrm{I}},\cM]$ as follows \cite{YCH18}:
\be\label{N-holevo-bound}
C_{\bm{\theta};\mathrm{I}}^H[W_{\mathrm{I}},\cM]:=
\min_{{\bm X}=(X_1,\ldots, X_{\dI})}
\Tr{W_{\mathrm{I}}\Re Z_{\bm{\theta}}({\bm X})}+\Tr{|W_{\mathrm{I}}^{\frac12} {\Im} Z_{\bm{\theta}}({\bm X}) W_{\mathrm{I}}^{\frac12} |},
\ee
where the minimization takes the vector of Hermitian matrices ${\bm X}=(X_1,\ldots, X_{\dI})$ to satisfy the condition
$\tr{\frac{\del}{\del\theta_j}\rho_{{\bm{\theta}}}X_i}=\delta_{i,j}$
for $i=1,\ldots, \dI$ and $j=1, \ldots, d$, and $Z_{\bm{\theta}}({\bm X})$ is the Hermitian matrix whose ($i,j$) component is
$\tr {X_i \rho X_j}$.
Similar to \eqref{Ho-ine}, as shown in Appendix \ref{AC-4},
we have the inequality
\be \label{Ho-ine-2}
C_{\bm{\theta};\mathrm{I}}[W_{\mathrm{I}},\cM] \ge 
C_{\bm{\theta};\mathrm{I}}^H[W_{\mathrm{I}},\cM].
\ee

Further, we can consider the $n$-fold asymptotic setting similar to 
Section \ref{sec:Qnui1-2}.
In this case, we can consider the settings A2 and A3 in the same way.
The bound 
$C_{\bm{\theta};\mathrm{I}}[W_{\mathrm{I}},\cM] $
is the optimal bound in the setting A2 (Adaptive strategy).
Also, as shown in \cite[Theorem 8]{YCH18},
the Holevo bound $C_{\bm{\theta};\mathrm{I}}^H[W_{\mathrm{I}} ,\cM]$ is the optimal bound in the setting A3 (Collective strategy). 

As discussed in the classical case, we can define the information loss 
due to the presence of nuisance parameters for the quantum case \cite{jsNuipaper}. 
Consider the $d_{\rm I}$-parameter model $\cM'$ that is the submodel of the original $d$-parameter model $\cM$ with all nuisance parameters fixed to be $\bm{\theta}_{\rm N}$. 
Assume that we have a bound $C_{\bm{\theta};\mathrm{I}}[W_{\mathrm{I}},\cM']$ for this model, then the difference
\begin{equation}\label{qinfoloss}
\Delta C^H_{\bm{\theta};\mathrm{I}}[W_{\mathrm{I}}|\bm{\theta}_{\mathrm{N}}]
:=C^H_{\bm{\theta};\mathrm{I}}[W_{\mathrm{I}},\cM]
-C^H_{\bm{\theta};\mathrm{I}}[W_{\mathrm{I}},\cM'], 
\end{equation}
measures how much information we lose for not knowing the nuisance parameters. 
(For the single-copy setting, we can similarly define the information loss by
$ \Delta C_{\bm{\theta};\mathrm{I}}[W_{\mathrm{I}}|\bm{\theta}_{\rm N}]
:=C_{\bm{\theta};\mathrm{I}}[W_{\mathrm{I}},\cM] 
-C_{\bm{\theta};\mathrm{I}}[W_{\mathrm{I}},\cM']$.) 
Unlike the classical case, it is not obvious to derive the condition of $\Delta C^H_{\bm{\theta};\mathrm{I}}[W_{\mathrm{I}}|\bm{\theta}_{\mathrm{N}}]=0$ 
in terms of a given model and weight matrix $W_{\mathrm{I}}$. 
Another difference is that the orthogonal condition does not provide a direct consequence for the zero loss of information. 
Moreover, a precision bound is not in general expressed 
as a simple closed-form in terms of quantum Fisher information. 

%%%%%%%%%%%%%%%%%%%%%%%%%%%%%%%%%%%%%%%%%%
\subsection{Local Parameter orthogonalization in the quantum case} 
\label{sec:QpoLocal}
%\label{sec:Qpo}
In this section we shall examine the effect of local parameter orthogonalization in the quantum case.  To this end, we first rewrite the SLD and RLD Fisher information matrices in terms of inner products. 
We then define the concept of parameter orthogonality with respect to different quantum Fisher informations. 
At last, we derive the CR type bounds for the parameters of interest  
and list several important properties of the  local  parameter orthogonalization method. 

To discuss local parameter orthogonalization, we prepare several notations for   
logarithmic derivatives and quantum Fisher informations. 
For a given smooth family of quantum states $\{\rho_{{\bm{\theta}}}\}$ and any (bounded) linear operators $X,Y$ on $\cH$, 
we define the symmetric and right inner product, respectively, by 
\begin{align} \nonumber
\sldin{X}{Y}&:=\frac12\tr{\rho_{{\bm{\theta}}}(YX^\dagger+X^\dagger Y)},\\
\rldin{X}{Y}&:=\tr{\rho_{{\bm{\theta}}}YX^\dagger}, 
\end{align}
where $X^\dagger$ denotes the Hermitian conjugate of $X$.  

Using the SLDs and RLDs, $\SLD{i}$ and $\RLD{i}$, the SLD and RLD Fisher information matrices are expressed as 
\be\label{sldrld}
\sldQFI= \left[ \sldin{\SLD{i} }{\SLD{j}}\right],\quad 
\rldQFI= \left[ \rldin{\RLD{i} }{\RLD{j}}\right], 
\ee
respectively. 
It is convenient to introduce the following linear combinations of 
the logarithmic derivative operators:
\[ 
\SLDdual{i}:= \sum_{j=1}^d\sldQFIinv{j,i}\SLD{j},\quad 
\RLDdual{i}:=\sum_{j=1}^d\rldQFIinv{j,i}\RLD{j} , 
\] 
where $\sldQFIinv{j,i}$ and $\rldQFIinv{j,i}$ are components of 
the inverse of the SLD and RLD Fisher information matrices, respectively. 

By definition, $\{\SLDdual{1},\SLDdual{2},\dots,\SLDdual{d} \}$ form  a dual basis for 
the inner product space $\sldin{\cdot}{\cdot}$;  
$\sldin{\SLDdual{i}}{\SLD{j}}={\delta_{i,j}}$.
The same statement holds for the RLD case. 
We can also check that the inverses of the SLD and RLD Fisher information matrices are expressed as
\begin{align} \nonumber
(\sldQFI)^{-1}&=[\sldQFIinv{i,j}]\mbox{ with } \sldQFIinv{i,j}=\sldin{\SLDdual{i}}{\SLDdual{j}} ,\\ 
(\rldQFI)^{-1}&=[\rldQFIinv{i,j}]\mbox{ with } \rldQFIinv{i,j}=\rldin{\RLDdual{i}}{\RLDdual{j}}. 
\end{align}

%\subsubsection{Local parameter orthogonality} %\label{sec:Qpo3}
Let us consider the same partition of the parameter ${\bm{\theta}}$ as before, i.e., 
${\bm{\theta}}=(\bm{\theta}_{\mathrm{I}},\bm{\theta}_{\mathrm{N}})$ with $\bm{\theta}_{\mathrm{I}}=(\theta_1,\theta_2,\dots, \theta_{\dI})$ 
and $\bm{\theta}_{\mathrm{N}}=(\theta_{\dI+1},\theta_{\dI+2},\dots, \theta_d)$, and discuss local parameter orthogonality under this parametrization.
When compared with the classical case, we immediately notice that 
the concept of parameter orthogonality is not uniquely defined in the quantum case. 
One may get different orthogonality conditions when considering different quantum Fisher informations.
Interested readers can find in \cite{js15} a qubit model exhibiting this phenomenon. 

Let us first focus on the SLD Fisher information matrix case. 
In the following, we denote the $(i,j)$ components of the SLD Fisher information matrix and its inverse matrix by
\be\nonumber
\sldqfi{{\bm{\theta}};i,j}\mbox{ and }\sldQFIinv{i,j},  
\ee 
respectively.  We remind the readers that we will keep using the following notations.  
\be\nonumber %\label{sldfisherblock}
\sldQFI=\left(\begin{array}{cc}
J^{\mathrm{S}}_{\bm{\theta};\mathrm{I},\mathrm{I}}& 
\sldqfi{\bm{\theta};\mathrm{I},\mathrm{N}} \\[0ex] 
\sldqfi{\bm{\theta};\mathrm{N},\mathrm{I}}& 
\sldqfi{\bm{\theta};\mathrm{N},\mathrm{N}} 
\end{array}\right), \quad 
(\sldQFI)^{-1}=\left(\begin{array}{cc}
J_{\bm{\theta}}^{\mathrm{S};\mathrm{I},\mathrm{I}}&
J_{\bm{\theta}}^{\mathrm{S};\mathrm{I},\mathrm{N}}
 \\[0ex] 
J_{\bm{\theta}}^{\mathrm{S};\mathrm{N},\mathrm{I}}
&
J_{\bm{\theta}}^{\mathrm{S};\mathrm{N},\mathrm{N}}
\end{array}\right) .   
\ee
We say two parameter groups $\bm{\theta}_{\mathrm{I}}$ and $\bm{\theta}_{\mathrm{N}}$ to be {\it locally orthogonal at} ${\bm{\theta}} $ 
with respect to the SLD Fisher information 
if the SLD Fisher information matrix is block diagonal according to 
this parameter partition at ${\bm{\theta}}\in\Theta$, that is 
\be\nonumber
\sldqfi{{\bm{\theta}};i,j}=0,\  \forall i=1,2,\dots,\dI\mbox{ and } \forall j=\dI+1,\dI+2,\dots,d,  
\ee
hold at ${\bm{\theta}}\in\Theta$, or 
equivalently $\sldqfi{\bm{\theta};\mathrm{I},\mathrm{N}}=0$. 
When local orthogonality condition holds for all ${\bf \theta}\in\Theta$, ${\bf \theta}_{\rm I}$ and ${\bf \theta}_{\rm N}$ are said {\it globally orthogonal}. 
Similarly, the local and the global orthogonality with respect to the RLD Fisher information 
can be defined by replacing $\sldqfi{{\bm{\theta}};i,j}$ by $\rldqfi{{\bm{\theta}};i,j}$. 

%\subsubsection{Local parameter orthogonalization}
Following exactly the same manner as in the classical case, we define the 
{\it effective SLD} for the parameters of interest by the orthogonal projection:
\be\label{effSLD}
 \tilde{L}_{\bm{\theta};i}^{\rm{S}}:= \SLD{i}-\sum_{j,k=\dI+1}^{d} J_{\bm{\theta};i,j}^{\rm S} \left( (\sldqfi{\bm{\theta};\mathrm{N},\mathrm{N}})^{-1}  \right)_{j,k}  \SLD{k} 
\quad (i=1,2,\ldots,d_{\dI}), 
\ee
where the second term is the projection onto the SLD tangent space for the nuisance parameters 
with respect to the SLD inner product.  The conversion from 
the SLDs $\SLD{i}$ to 
the SLDs $\tilde{L}_{\bm{\theta};i}^{\rm{S}}$ with $i=1, \ldots, \dI$
is called {\it local parameter orthogonalization.} 
With these projected SLDs, we refer to the $\dI\times\dI$ matrix
\be\label{qpSLDinfo}
J^{\mathrm{S}}_{\bm{\theta}}({\mathrm{I}}|{\mathrm{N}})
:=\left[ \sldin{\tilde{L}_{\bm{\theta};i}^{\rm{S}}   }{ \tilde{L}_{\bm{\theta};j}^{\rm{S}}  } \right]
\ee
for $i,j=1,2,\ldots,\dI$ as the {\it partial SLD Fisher information}. 
As in the classical derivation, we obtain the following relation.  
\be
J^{\mathrm{S}}_{\bm{\theta}}
({\mathrm{I}}|{\mathrm{N}})
=(J_{\bm{\theta}}^{\mathrm{S};\mathrm{I},\mathrm{I}} )^{-1}
=\sldqfi{\bm{\theta};\mathrm{I},\mathrm{I}}-
\sldqfi{\bm{\theta};\mathrm{I},\mathrm{N}} 
(\sldqfi{\bm{\theta};\mathrm{N},\mathrm{N}})^{-1}
\sldqfi{\bm{\theta};\mathrm{N},\mathrm{I}}. 
\ee

It is straightforward to show that the partial SLD Fisher information matrix gives the 
CR inequality for the MSE matrix for the parameters of interest and the corresponding CR type bound. 
\begin{align}
V_{\bm{\theta};\mathrm{I}}[\hat{\Pi}_{\mathrm{I}}]&\ge
 J^{\mathrm{S}}_{\bm{\theta}}({\mathrm{I}}|{\mathrm{N}})^{-1},\\
\Tr{W_{\mathrm{I}} V_{\bm{\theta};\mathrm{I}}
[\hat{\Pi}_{\mathrm{I}}]}
&\ge  C^{\rm S}_{\bm{\theta};\mathrm{I}}[W_{\mathrm{I}},\cM] 
:=\Tr{ W_{\mathrm{I}} 
J^{\mathrm{S}}_{\bm{\theta}}({\mathrm{I}}|{\mathrm{N}})^{-1}}. \label{sldCR_nui}
\end{align}
Likewise, we can also work out the RLD case. 
Define the effective RLD operators by
\be\label{effRLD}
 \tilde{L}_{\bm{\theta};i}^{\rm{R}}:= \RLD{i}-\sum_{j,k=\dI+1}^{d} J_{\bm{\theta};i,j}^{\rm R}\left((\rldqfi{\bm{\theta};\mathrm{N},\mathrm{N}})^{-1}  \right)_{j,k}\RLD{k} 
\quad (i=1,2,\ldots,d_{\dI}), 
\ee
and the partial RLD Fisher information matrix by
\be
J^{\mathrm{R}}_{\bm{\theta}}({\mathrm{I}}|{\mathrm{N}}) 
:=\left[ \rldin{\tilde{L}_{\bm{\theta};i}^{\rm{R}}   }{ \tilde{L}_{\bm{\theta};j}^{\rm{R}}  } \right]. 
\ee
Then, we have
\begin{align}
V_{\bm{\theta};\mathrm{I}}[\hat{\Pi}_{\mathrm{I}}]&\ge 
J^{\mathrm{R}}_{\bm{\theta}}({\mathrm{I}}|{\mathrm{N}})^{-1},\\
 \Tr{W_{\mathrm{I}}V_{\bm{\theta};\mathrm{I}}[\hat{\Pi}_{\mathrm{I}}]}&\ge  
 C^{\rm R}_{\bm{\theta};\mathrm{I}}[W_{\mathrm{I}},\cM]\\
 C^{\rm R}_{\bm{\theta};\mathrm{I} }[W_{\mathrm{I}},\cM]
 &:=\Tr{ 
 W_{\mathrm{I}} \Re 
 J^{\mathrm{R}}_{\bm{\theta}}({\mathrm{I}}|{\mathrm{N}})^{-1} }
+\Tr{
|W_{\mathrm{I}} ^{\frac12} \Im 
J^{\mathrm{R}}_{\bm{\theta}}({\mathrm{I}}|{\mathrm{N}})^{-1} W_{\mathrm{I}}^{\frac12} |
}. \label{rldCR_nui}
\end{align}
It is worth pointing out that here the orthogonal projection to the tangent space for the nuisance parameters 
is defined with respect to the RLD inner product. 
In passing, we note that the method of orthogonal projection was utilized by a recent paper \cite{tsang19} in the context of semiparametric 
estimation of quantum states, where the number of nuisance parameters are infinite.

Regarding the partial SLD Fisher information matrix, the following property is important. 
The proof is given in Appendix \ref{sec:AppPO}. \\
\noindent{\it Property 1: The partial SLD Fisher information matrix under parameter change.}\\
The partial SLD Fisher information defined by \eqref{qpSLDinfo}: 
\[
J^{\mathrm{S}}_{\bm{\theta}}({\mathrm{I}}|{\mathrm{N}})
=\sldqfi{\bm{\theta};\mathrm{I},\mathrm{I}}-
\sldqfi{\bm{\theta};{\mathrm{I}},{\mathrm{N}}} 
{\big(J^{\mathrm{S}}_{\bm{\theta};{\mathrm{N}},{\mathrm{N}}}\big)}^{-1}
\sldqfi{\bm{\theta};{\mathrm{N}},{\mathrm{I}}}
\]
is invariant under any reparametrization of the nuisance parameters of the form \eref{nui_change0}
and is transformed as the same manner as the usual Fisher information matrix. 

\subsection{Estimating a function of parameters}\label{sec:Qfunction}
In this subsection, we show how to apply our formulation to derive the CR-type bound upon estimating a function of parameters. 
(See Subsection \ref{sec:AppCstat3} for the classical case.) 
We note that recent works \cite{YCH18,tsang19,gross2020one} addressed the case of estimating a scalar function of parameters. 
Although the derivation is straightforward, results in this subsection have not been reported in literature to our knowledge. 

Given a vector-valued function $\bm{g}(\bm{\theta}):=\left(g_1(\bm{\theta}), g_2(\bm{\theta}),\ldots, g_K(\bm{\theta})\right)$, 
suppose we are interested in estimating the value of this function. 
For mathematical simplicity, we assume that $K$ should be smaller or equal to the number of parameters $d$. 
$g_k(\bm{\theta})$ for all $k$ are also assumed to be differentiable and continuous. 
We are willing to find a good estimator $\hat{\Pi}_{\bm{g}}$ upon estimating $\bm{g}(\bm{\theta})$. 
[$\hat{\Pi}_{\bm{g}}=(\Pi,\hat{\bm{g}})$: A POVM $\Pi$ and an estimator $\hat{\bm{g}}=(\hat{g}_1, \hat{g}_2,\ldots, \hat{g}_K)$]. 
Let $V_{\bm{\theta}}[\hat{\bm{g}}]:=\left[E_{\bm \theta}[\left(\hat{g}_k(X)-g_k(\bm{\theta})\right) \left(\hat{g}_{k'}(X)-g_{k'}(\bm{\theta})\right) ]  \right]$ be the MSE matrix for estimating the vector-valued function. 
The objective here is to minimize the weighted trace of the MSE matrix,
\[
\Tr{ W_{\bm{g}}V_{\bm{\theta}}[\hat{\bm{g}}] },
\]
under an appropriate condition on the estimator $\hat{\Pi}_{\bm{g}}$. 
We now use the same argument to define the most informative bound \eqref{qcrboundnui0} 
together with the result in the classical case \ref{sec:AppCstat3}. 
We define the most informative bound for $\hat{\Pi}_{\bm{g}}$: 
\be\label{qcrbound_function}
C_{\bm{\theta};\bm{g}}[W_{\bm{g}},\cM]
:=\min_{\hat{\Pi}_{\bm{g}}\mathrm{\,:l.u.\,at\,}\bm{\theta} \mathrm{\, for \,} \bm{g}} 
\Tr{ W_{\bm{g}}V_{\bm{\theta}}[\hat{\bm{g}}] }, 
%\Tr{W_{\bm{g}} G_{\bm \theta} V_{\bm{\theta}}[\hat{\Pi}_{\bm{g}}]\left(G_{\bm \theta}\right)^{\rm T} }.  
\ee
where the weight matrix $W_{\bm g}$ is a $K\times K$ positive matrix. 
%$G_{\bm \theta}$ is the $K\times d$ rectangular matrix defined by
%\be
%G_{\bm \theta}:=\left[ \frac{\del g_k(\bm{\theta})}{\del \theta_i} \right], 
%\ee
%with the row index $k=1,2,\ldots,K$ and the column index $i=1,2,\ldots,d$. 
The minimization in this definition is constrained within the locally unbiased estimator for $\bm{g}(\bm{\theta})$. 
This is defined as follows. 
An estimator $\hat{\Pi}_{\bm{g}}$ for the function $\bm{g}$ is locally unbiased for $\bm{g}(\bm{\theta})$ at $\bm{\theta}$, 
if, for $\forall k\in \{1,\dots,K\}$ and $\forall i\in \{1,\dots,d\}$, 
\be \label{lu_cond_function}
E_{\bm{\theta}}[\hat{g}_k(X)|\Pi]=g_k(\bm{\theta})\ \mathrm{and}\  
\frac{\del}{\del\theta_i}E_{\bm{\theta}}[\hat{g}_k(X)|\Pi]=\frac{\del g_k(\bm{\theta})}{\del \theta_i}
\ee
are satisfied at a given point $\bm{\theta}$. 

With the above formulation of the problem, we can derive the SLD CR bound and the RLD CR bound 
for estimating a vector-valued function $\bm{g}(\bm{\theta})$. 
\begin{align} \label{sldcr_function}
C^{\rm S}_{\bm{\theta};\bm{g}}[W_{\bm g},\cM] 
&:=\Tr{ W_{\bm g} G_{\bm \theta}\,(J^{\mathrm{S}}_{\bm{\theta}})^{-1}\:  \left(G_{\bm \theta}\right)^{\rm T}},\\
C^{\rm R}_{\bm{\theta};\bm{g}}[W_{\bm g},\cM] 
&:=\Tr{ 
 W_{\bm g}G_{\bm \theta}\,\Re 
 (J^{\mathrm{R}}_{\bm{\theta}})^{-1} \:  \left(G_{\bm \theta}\right)^{\rm T}}
+\Tr{
\left|W_{\bm g} ^{\frac12} G_{\bm \theta}\,\Im 
(J^{\mathrm{R}}_{\bm{\theta}})^{-1}\:  \left(G_{\bm \theta}\right)^{\rm T} W_{\bm g}^{\frac12} \right|
},\label{rldcr_function}
\end{align}
where $G_{\bm \theta}$ is the $K\times d$ rectangular matrix defined by
\be
G_{\bm \theta}:=\left[ \frac{\del g_k(\bm{\theta})}{\del \theta_i} \right], 
\ee
with the row index $k=1,2,\ldots,K$ and the column index $i=1,2,\ldots,d$. 
They are lower bounds for the most informative bound, i.e., 
\begin{align}
C_{\bm{\theta};\bm{g}}[W_{\bm{g}},\cM]&\ge C^{\rm S}_{\bm{\theta};\bm{g}}[W_{\bm g},\cM] ,\\
C_{\bm{\theta};\bm{g}}[W_{\bm{g}},\cM]&\ge C^{\rm R}_{\bm{\theta};\bm{g}}[W_{\bm g},\cM],
\end{align}
hold. 

The Holevo bound can also be extended to the case of estimating a vector-valued function. 
Without detailed account on it, we only report the result: 
\be\label{N-holevo-bound_function}
C_{\bm{\theta};\rm{g}}^H[W_{\bm{g}},\cM]:=
\min_{{\bm X}=(X_1,\ldots, X_{K})}
\Tr{W_{\bm{g}}\Re  Z_{\bm{\theta}}({\bm X}) }+\Tr{\left|W_{\bm{g}}^{\frac12} {\Im} Z_{\bm{\theta}}({\bm X}) W_{\bm{g}}^{\frac12} \right|},
\ee
where the minimization takes the vector of Hermitian matrices ${\bm X}=(X_1,\ldots, X_{K})$ to satisfy the condition
$\tr{\frac{\del}{\del\theta_i}\rho_{{\bm{\theta}}}X_k}=\frac{\del g_k(\bm{\theta})}{\del \theta_i}$
for $k=1,\ldots, K$ and $i=1, \ldots, d$. $Z_{\bm{\theta}}({\bm X})$ is the $K\times K$ Hermitian matrix 
%\be
%Z_{\bm{\theta};\bm{g}}({\bm X}):= G_{\bm \theta} Z_{\bm{\theta}}({\bm X}) \left(G_{\bm \theta}\right)^{\rm T}  
%\ee
whose $(k,k')$ component is defined by $ \tr{X_{k}\rho_{\bm \theta} X_{k'} }$ as before. 
The existence of the minimum in \eqref{N-holevo-bound_function} 
will be shown as Remark \ref{remark2} in the end of the next subsection.

%%%% Additional remark added 
\begin{remark} \label{remark1}
We can show the inequality $C_{\bm{\theta};\rm{g}}^H[W_{\bm{g}},\cM]\le
2 C_{\bm{\theta};\rm{g}}^S[W_{\bm{g}},\cM]$
as follows. See \cite{Carollo_2019,Carollo_2020,tsang2019_trivial,albarelli2019upper,tsang_v6} for the related results.  
We choose ${\bm L}:=(\SLDdual{i})$.
Since 
$\Re  Z_{\bm{\theta}}({\bm L}) \ge
- i {\Im}  Z_{\bm{\theta}}({\bm L})$,
we have 
$\Tr{W_{\bm{g}}\Re  Z_{\bm{\theta}}({\bm L}) }
\ge \Tr{\left|W_{\bm{g}}^{\frac12} {\Im} Z_{\bm{\theta}}({\bm L}) W_{\bm{g}}^{\frac12} \right|}$.
Thus, the relation $\Re  Z_{\bm{\theta}}({\bm L}) = 
(\sldQFI)^{-1}$ yields 
\begin{align*}
& C_{\bm{\theta};\rm{g}}^H[W_{\bm{g}},\cM]\le
\Tr{W_{\bm{g}}\Re  Z_{\bm{\theta}}({\bm L}) } 
+ \Tr{\left|W_{\bm{g}}^{\frac12} {\Im} Z_{\bm{\theta}}({\bm L}) W_{\bm{g}}^{\frac12} \right|} \\
\le &
2 \Tr{W_{\bm{g}}\Re  Z_{\bm{\theta}}({\bm L}) }
\le
2 C_{\bm{\theta};\rm{g}}^S[W_{\bm{g}},\cM].
\end{align*}
\end{remark}

\subsection{Model characterization in the presence of nuisance parameters} 
The concepts of D-invariant, quasi-classical and asymptotically classical models in section \ref{sec:model_class} 
can be extended to a quantum statistical model with nuisance parameters  by using the concept of local parameter orthogonalization.
These concepts  provide characterization of the Holevo bound in the presence of  nuisance parameters \eqref{N-holevo-bound}. 
Since we are analyzing the local aspect of the quantum statistical model, 
we will focus on the effective quantum score functions such as the effective SLDs \eqref{effSLD} and RLDs \eqref{effRLD}. 
This is equivalent to analyzing a given model within the new parametrization of the form \eqref{localortho}. 
To our knowledge, results in this subsection are not reported in literature.  

We emphasize that these concepts defined below are independent of choice for parametrization of nuisance parameters 
due to Property 1 in section \ref{sec:QpoLocal}. Furthermore, the effective quantum score functions 
are transformed exactly same manner as the ordinary quantum score functions. Therefore, these definitions do not 
relay on the choice of score functions. 

A quantum model is called D-invariant for the parameters of interest at $\bm{\theta}$ if the SLD tangent subspace 
spanned by the effective SLDs is invariant under the commutation operator at $\bm{\theta}$. 
Mathematically, this condition is expressed as for all $i=1,2,\ldots,\dI$,
\be
{\cal D}_{\rho_{\bm{\theta}}}(\tilde{L}_{\bm{\theta};i}^{\rm{S}} )\in \mathrm{span} \{\tilde{L}_{\bm{\theta};i}^{\rm{S}}\}_{i=1}^{\dI}. 
\ee
When the model is D-invariant for the parameters of interest at any point $\bm{\theta}$, 
we simply say that it is D-invariant for the parameters of interest. 
Once we obtain locally orthogonal parametrization at $\bm{\theta}$,
the calculation of the Holevo bound can be done by ignoring the nuisance parameters, i.e., it is sufficient to discuss only the parameters of interest.
Therefore, applying the proof of Lemma \ref{DRH} to the parameters of 
interest under the locally orthogonal parametrization, we have the following lemma. 

\begin{lemma}\label{DRHN}
A model is D-invariant for the parameters of interest
if and only if
the Holevo bound $C_{\bm{\theta};\mathrm{I}}^H[W_{\mathrm{I}},\cM]$
in the presence of the nuisance parameters \eqref{N-holevo-bound} is identical to the RLD-CR bound $C_{\bm{\theta};\mathrm{I}}^{\rm{R}}[W_{\mathrm{I}},\cM]$\eqref{rldCR_nui}
for any weight matrix $W_{\mathrm{I}}>0$.
\end{lemma}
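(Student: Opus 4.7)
My approach is to carry out the program hinted at in the statement: reduce to the no-nuisance case via local parameter orthogonalization, and then invoke Lemma \ref{DRH}. As a first step, at the point $\bm{\theta}$ under consideration, I would introduce a new parametrization $\bm{\xi}=(\bm{\xi}_{\mathrm{I}},\bm{\xi}_{\mathrm{N}})$ of the form \eqref{nui_change0} (using the quantum analogue of the transformation \eqref{localortho} based on the SLD Fisher information) so that $\bm{\xi}_{\mathrm{I}}=\bm{\theta}_{\mathrm{I}}$ and the SLD Fisher information matrix is block-diagonal at the corresponding point in the new coordinates. By Property 1 of Section \ref{sec:QpoLocal}, the partial SLD Fisher information $J^{\mathrm{S}}_{\bm{\theta}}(\mathrm{I}|\mathrm{N})$ and hence the bound $C^{\rm S}_{\bm{\theta};\mathrm{I}}$ are invariant under this change; I would verify the analogous invariance for the partial RLD Fisher information and for the Holevo bound $C^H_{\bm{\theta};\mathrm{I}}$, so that the equivalence stated in the lemma is unchanged by the reparametrization.

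Next, working in this locally orthogonal parametrization, I would show that $C^H_{\bm{\theta};\mathrm{I}}[W_{\mathrm{I}},\cM]$ coincides with the ordinary (no-nuisance) Holevo bound of an auxiliary $\dI$-parameter effective model whose score functions are the effective SLDs $\tilde{L}_{\bm{\xi};i}^{\mathrm{S}}$ for $i=1,\dots,\dI$, and similarly that $C^{\rm R}_{\bm{\theta};\mathrm{I}}[W_{\mathrm{I}},\cM]$ coincides with the RLD CR bound of this effective model. The key point is that the nuisance constraints $\tr{\del_j\rho_{\bm{\xi}}X_i}=0$ for $j=\dI+1,\dots,d$ in \eqref{N-holevo-bound} force each $X_i$ to lie in the SLD-orthogonal complement of the nuisance SLD tangent space. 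Replacing any candidate $X_i$ by its projection onto this complement preserves the remaining constraints (by block-diagonality at $\bm{\theta}$) and cannot increase $\Tr{W_{\mathrm{I}}\Re Z_{\bm{\xi}}(\bm{X})}$, which allows one to identify the two minimizations.

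Under this reduction, D-invariance for the parameters of interest becomes literal D-invariance of the effective model, whose SLD tangent space is $\mathrm{span}\{\tilde{L}_{\bm{\xi};i}^{\mathrm{S}}\}_{i=1}^{\dI}$. Applying Lemma \ref{DRH} to the effective model then immediately yields that its Holevo bound equals its RLD CR bound for every $W_{\mathrm{I}}>0$ if and only if it is D-invariant, which through the identifications above is exactly the claim of Lemma \ref{DRHN}.

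The main obstacle I foresee is the reduction step: showing that the optimal $\bm{X}$ in \eqref{N-holevo-bound} may be taken orthogonal to the nuisance SLD tangent space without cost. Concretely, I would verify that replacing $X_i$ by $X_i-\mathcal{P}_{\mathrm{N}}(X_i)$, where $\mathcal{P}_{\mathrm{N}}$ is the SLD-inner-product projection onto $\mathrm{span}\{L_{\bm{\xi};j}^{\mathrm{S}}\}_{j=\dI+1}^{d}$, preserves the constraints for $j\le\dI$ (by block-diagonality at $\bm{\theta}$) and makes $\Re Z_{\bm{\xi}}(\bm{X})$ smaller in the positive semidefinite order; the delicate part is ensuring that $\Im Z_{\bm{\xi}}(\bm{X})$ behaves compatibly so that the full objective does not increase. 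Once this projection lemma is in hand, the rest of the proof is a direct invocation of Lemma \ref{DRH} applied to the effective model.
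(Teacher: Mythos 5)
Your overall architecture (locally orthogonalize, then lean on Lemma \ref{DRH}) is the route the paper gestures at, but the specific reduction you build it on contains a genuine gap that cannot be repaired. You claim that, after SLD-orthogonalization, $C^H_{\bm{\theta};\mathrm{I}}[W_{\mathrm{I}},\cM]$ in \eqref{N-holevo-bound} coincides with the ordinary Holevo bound of the effective $\dI$-parameter model, the extra constraints $\tr{\del_j\rho_{\bm{\theta}}X_i}=0$ ($j>\dI$) being removable by projecting each $X_i$ onto the SLD-orthocomplement of the nuisance tangent space. This identification is false in general: the projection controls only $\Re Z_{\bm{\theta}}({\bm X})$, and the term $\Tr{|W_{\mathrm{I}}^{1/2}\Im Z_{\bm{\theta}}({\bm X})W_{\mathrm{I}}^{1/2}|}$ can strictly increase, which is not a technicality but the heart of the matter. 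Concretely, take the full qubit model with $\bm{\theta}_{\mathrm{I}}=(\theta_1,\theta_2)$ and the orthogonalized nuisance coordinate $\xi_3$ from Section \ref{sec:example}. The constrained minimization forces $X_i$ to differ from $\sigma_i-\theta_i I$ only by multiples of $I$, giving $C^H_{\bm{\theta};\mathrm{I}}=\Tr{W_{\mathrm{I}}J^{\mathrm{S};\mathrm{I},\mathrm{I}}_{\bm{\theta}}}+2\sqrt{\det W_{\mathrm{I}}}\,|\theta_3|$; but in the effective model's minimization one may also add $y\,L^{\mathrm{S}}_{\bm{\xi};3}$ to $X_1$, and a direct computation (e.g.\ at $\bm{\theta}=(r,0,t)$ with $W_{\mathrm{I}}=I$) shows the objective becomes $2-r^2+y^2 J^{\mathrm{S}}_{\bm{\xi};3,3}+2|t-y\,r\sqrt{1-r^2}|$, which for small $y>0$ drops below the constrained value whenever $r\neq0$, $t\neq0$ and the state is mixed. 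So the nuisance constraints are genuinely active, the two minimizations are not equal, and invoking Lemma \ref{DRH} as a black box on the effective model establishes an equivalence about the wrong quantity. A secondary problem is the RLD side: SLD-orthogonalization does not block-diagonalize the RLD Fisher matrix, so the bound \eqref{rldCR_nui}, which is built from the Schur complement $J^{\mathrm{R}}_{\bm{\theta}}(\mathrm{I}|\mathrm{N})=(J^{\mathrm{R};\mathrm{I},\mathrm{I}}_{\bm{\theta}})^{-1}$, is not the RLD CR bound of the SLD-orthogonalized submodel.

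What the lemma actually requires is to keep the nuisance constraints and rerun the \emph{proof} of Lemma \ref{DRH} in the interest sector rather than cite its statement for a surrogate model. The inequality $C^H_{\bm{\theta};\mathrm{I}}\ge C^{\mathrm{R}}_{\bm{\theta};\mathrm{I}}$ follows for the constrained problem from the RLD Gram-matrix (Cauchy--Schwarz) argument of Appendix \ref{AC-4}: all $d$ constraints on the $\dI$ operators yield $Z_{\bm{\theta}}({\bm X})\ge J^{\mathrm{R};\mathrm{I},\mathrm{I}}_{\bm{\theta}}$, which is exactly the inverse partial RLD information. For attainability under D-invariance for the parameters of interest, one exhibits a feasible Hermitian ${\bm X}$ inside the effective SLD span (e.g.\ the duals of the effective SLDs), which automatically satisfies the nuisance constraints precisely because the effective SLDs are SLD-orthogonal to the nuisance SLDs, and then uses the D-invariance identities restricted to that span to show the value equals $C^{\mathrm{R}}_{\bm{\theta};\mathrm{I}}$; the converse direction likewise mirrors the argument of Lemma \ref{DRH} applied to the effective score operators within the constrained feasible set. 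Your step 1 (invariance under reparametrization of the nuisance parameters) is fine, but steps 2 and 3 need to be replaced by this constrained version of the argument.
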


We next turn our attention to the effective SLDs.  
A quantum model is said quasi-classical for the parameters of interest 
if the effective SLDs commute with each other for any 
$\bm{\theta}$ and
$\bm{\theta}'\in\Theta$, i.e., the condition
\be
\big[\tilde{L}_{\bm{\theta};i}^{\rm{S}}\,,\, \tilde{L}_{\bm{\theta}',j}^{\rm{S}}\big]=0,  
\ee
holds for all $i,j=1,2,\ldots,\dI$ and for all $\bm{\theta},\bm{\theta}'\in\Theta$. 
In this case, we can construct a POVM attaining the partial SLD Fisher information matrix 
by diagonalizing the effective SLDs simultaneously. 

A quantum model is said asymptotically classical for the parameters of interest at $\bm{\theta}$ 
if the effective SLDs commute with each other on the support $\rho_{\bm{\theta}}$ at $\bm{\theta}\in\Theta$: 
\be
\tr{\rho_{\bm{\theta}} \big[\tilde{L}_{\bm{\theta};i}^{\rm{S}}\,,\, \tilde{L}_{\bm{\theta};j}^{\rm{S}}\big] }=0,  
\ee
hold for all $i,j=1,2,\ldots,\dI$. 
A model is said asymptotically classical for the parameters of interest, 
if the model is asymptotically classical at any point. 
Similar to Lemma \ref{DRHN},
applying the proof of Lemma \ref{DSH} to the parameters of 
interest under the locally orthogonal parametrization, we have the following lemma. 

\begin{lemma}\label{DSHN}
A model is asymptotically classical for the parameters of interest
if and only if
the Holevo bound $C_{\bm{\theta};\mathrm{I}}^H[W_{\mathrm{I}},\cM]$
in the presence of the nuisance parameters \eqref{N-holevo-bound} is identical to the SLD-CR bound 
$C_{\bm{\theta};\mathrm{I}}^{\rm{S}}[W_{\mathrm{I}},\cM]$\eqref{rldCR_nui}
for any weight matrix $W_{\mathrm{I}}>0$.
\end{lemma}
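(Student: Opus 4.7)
The plan is to reduce the nuisance-parameter statement to the already-established Lemma \ref{DSH} by means of local orthogonalization. First, I would fix an arbitrary point $\bm{\theta}\in\Theta$ and invoke the reparametrization \eqref{localortho} so that at this point $\sldqfi{\bm{\theta};\mathrm{I},\mathrm{N}}=0$. By Property 1 of Section \ref{sec:QpoLocal}, the partial SLD Fisher information, and hence the SLD CR bound $C^{\rm S}_{\bm{\theta};\mathrm{I}}[W_{\mathrm{I}},\cM]$, is invariant under any nuisance reparametrization of the form \eqref{nui_change0}. The Holevo bound $C^H_{\bm{\theta};\mathrm{I}}[W_{\mathrm{I}},\cM]$ is likewise invariant because its defining constraints only involve the parameters of interest through the derivatives $\partial_i\rho_{\bm{\theta}}$ for $i\le\dI$, which are unchanged by \eqref{nui_change0}. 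The same reparametrization transforms the SLDs for the parameters of interest into the effective SLDs $\tilde{L}_{\bm{\theta};i}^{\rm S}$ (see \eqref{effSLD} and \eqref{localortho_pd}), so the asymptotic classicality condition $\tr{\rho_{\bm{\theta}}[\tilde{L}_{\bm{\theta};i}^{\rm S},\tilde{L}_{\bm{\theta};j}^{\rm S}]}=0$ is likewise preserved.

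Second, I would show that under this orthogonalized parametrization the Holevo minimization \eqref{N-holevo-bound} decouples. Any admissible Hermitian tuple ${\bm X}=(X_1,\ldots,X_{\dI})$ may be split, with respect to the SLD inner product, as $X_i = X_i^{\rm eff} + X_i^{\rm N}$, where $X_i^{\rm N}$ lies in the nuisance tangent subspace spanned by $\{\SLD{k}\}_{k=\dI+1}^{d}$ and $X_i^{\rm eff}$ in its SLD-orthogonal complement. The constraints $\tr{(\partial_j\rho_{\bm{\theta}})X_i}=0$ for $j>\dI$ are exactly $\sldin{\SLD{j}}{X_i}=0$, which, combined with local orthogonality, force $X_i^{\rm N}=0$ at the minimizer (this is the standard argument used in the Hayashi–Matsumoto reduction cited just after Lemma \ref{DRH} in Appendix \ref{sec:Qstat_model}, as well as in the derivation of \eqref{sldCR_nui}). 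The remaining constraints $\tr{(\partial_i\rho_{\bm{\theta}})X_{i'}}=\delta_{i,i'}$ for $i,i'\le\dI$ are now those of a $\dI$-parameter Holevo problem whose SLDs are precisely $\tilde{L}_{\bm{\theta};i}^{\rm S}$. Consequently
\begin{equation*}
C^H_{\bm{\theta};\mathrm{I}}[W_{\mathrm{I}},\cM] \;=\; C^H_{\bm{\theta}_{\rm I}}\!\bigl[W_{\mathrm I},\widetilde{\cM}\bigr],
\end{equation*}
where $\widetilde{\cM}$ denotes the $\dI$-parameter effective submodel at $\bm{\theta}$.

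Third, I would apply Lemma \ref{DSH} to this effective $\dI$-parameter model. Since $C^{\rm S}_{\bm{\theta};\mathrm{I}}[W_{\mathrm I},\cM]$ equals the SLD CR bound of $\widetilde{\cM}$ by the very definition of the partial SLD Fisher information \eqref{qpSLDinfo}, Lemma \ref{DSH} yields that $C^H_{\bm{\theta}_{\rm I}}[W_{\mathrm I},\widetilde{\cM}] = C^{\rm S}_{\bm{\theta};\mathrm{I}}[W_{\mathrm I},\cM]$ for all $W_{\rm I}>0$ if and only if the SLDs of $\widetilde{\cM}$, namely $\tilde{L}_{\bm{\theta};i}^{\rm S}$, commute on the support of $\rho_{\bm{\theta}}$. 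Translating back to the original parametrization, this is precisely asymptotic classicality for the parameters of interest.

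The main obstacle I expect is the second step: rigorously justifying that the nuisance component $X_i^{\rm N}$ may be discarded at the optimum. One must verify both that (i) the real-part contribution $\Tr{W_{\rm I}\Re Z_{\bm{\theta}}({\bm X})}$ strictly increases when a nonzero $X_i^{\rm N}$ is added (which follows from positivity of the SLD inner product plus local orthogonality), and (ii) the imaginary-part contribution $\Tr{|W_{\rm I}^{1/2}\Im Z_{\bm{\theta}}({\bm X})W_{\rm I}^{1/2}|}$ does not compensate this increase. Handling (ii) cleanly requires either appealing to the D-invariant extension argument of \cite{HM08}, or carrying out a careful direct computation with the commutation operator $\cD{\rho_{\bm{\theta}}}$. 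Once this decoupling is settled, everything else reduces mechanically to Lemma \ref{DSH}.
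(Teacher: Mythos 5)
There is a genuine gap, and it sits exactly where you flagged your "main obstacle": the claimed identity $C^H_{\bm{\theta};\mathrm{I}}[W_{\mathrm{I}},\cM]=C^H_{\bm{\theta}_{\rm I}}[W_{\mathrm I},\widetilde{\cM}]$ is false in general, so Lemma \ref{DSH} cannot be invoked as a black box on an effective $\dI$-parameter submodel. Note first that the constraints $\tr{(\del_j\rho_{\bm{\theta}})X_i}=0$ for $j>\dI$ force the nuisance-tangent component of \emph{every} admissible $X_i$ to vanish, not merely the minimizer's; so there is nothing to "discard". The real issue is the converse comparison: the Holevo bound of the genuine submodel $\widetilde{\cM}$ (obtained by freezing $\bm{\xi}_{\rm N}$) is a minimization \emph{without} those extra constraints, hence over a strictly larger admissible set, and one only has $C^H_{\bm{\theta};\mathrm{I}}[W_{\mathrm{I}},\cM]\ge C^H[W_{\mathrm I},\widetilde{\cM}]$, generically with strict inequality. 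In the submodel problem the optimizer may pick up components along the nuisance tangent direction which lower the imaginary-part term $\Tr{|W_{\mathrm I}^{1/2}\Im Z W_{\mathrm I}^{1/2}|}$ at a first-order rate — precisely the compensation you hoped to exclude in (ii). The paper's own qubit example (Section \ref{sec:example}, last subsection) exhibits this: after orthogonalization the nuisance Holevo bound is $\Tr{W_{\rm I}J^{{\rm S};\rm I,\rm I}_{\bm{\theta}}}+2\sqrt{\det W_{\rm I}}\,|\theta_3|$, while the two-parameter submodel Holevo bound \eqref{Hbound_qubit} is generically strictly smaller; this is just the statement, made after \eqref{qinfoloss}, that orthogonality does \emph{not} imply zero information loss in the quantum case. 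Because of this, the "if" direction of your argument collapses: asymptotic classicality plus Lemma \ref{DSH} gives equality only for the smaller bound $C^H[W_{\mathrm I},\widetilde{\cM}]$, which does not transfer upward to $C^H_{\bm{\theta};\mathrm{I}}[W_{\mathrm{I}},\cM]$.

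The repair is to keep the full constraint set and rerun the argument of Lemma \ref{DSH} with the effective SLDs, which is what the paper's one-line proof means. For the "if" direction, choose $X_i=\SLDdual{i}$ for $i\le\dI$: these satisfy all $d$ constraints, give $\Re Z_{\bm{\theta}}({\bm X})=J^{{\rm S};\rm I,\rm I}_{\bm{\theta}}=J^{\mathrm S}_{\bm{\theta}}(\mathrm I|\mathrm N)^{-1}$, and, since each $\SLDdual{i}$ ($i\le\dI$) is a linear combination of the $\tilde{L}^{\rm S}_{\bm{\theta};j}$, asymptotic classicality for the parameters of interest kills $\Im Z$; combined with the general lower bound $\Re Z_{\bm{\theta}}({\bm X})\ge J^{{\rm S};\rm I,\rm I}_{\bm{\theta}}$ for all admissible ${\bm X}$ (their SLD-tangent projections are pinned to the duals by the constraints), this yields $C^H_{\bm{\theta};\mathrm{I}}=C^{\rm S}_{\bm{\theta};\mathrm{I}}$. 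For the "only if" direction your black-box use of Lemma \ref{DSH} is salvageable once the false equality is replaced by the valid inequality: if the effective SLDs do not commute on the support, Lemma \ref{DSH} applied to $\widetilde{\cM}$ gives a $W_{\mathrm I}$ with $C^H[W_{\mathrm I},\widetilde{\cM}]>C^{\rm S}_{\bm{\theta};\mathrm{I}}[W_{\mathrm I},\cM]$, and then $C^H_{\bm{\theta};\mathrm{I}}[W_{\mathrm{I}},\cM]\ge C^H[W_{\mathrm I},\widetilde{\cM}]$ finishes the contrapositive. (Your invariance remarks are essentially fine, though the reason the Holevo constraint set is unchanged under \eqref{nui_change0} is that the transformation is block-triangular and the zero-constraints against the nuisance derivatives absorb the change in $\del_i$ for $i\le\dI$ — those derivatives themselves are \emph{not} unchanged, as the paper emphasizes after \eqref{localortho_pd}.)
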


\begin{remark}\label{remark2}
The existence of the minimum in \eqref{N-holevo-bound_function} can be shown as follows.
The choice of ${\bm X} $ can be restricted into a compact set in the following way.
Since the objective function is continuous, the minimum exists.

We assume that $W_{\bm{g}}$ is the identity matrix. Otherwise,
we change the coordinate to satisfy this condition.
We choose the minimum D-invariant space including $\SLD{i}$ and additional basis $F_l$ of the minimum D-invariant space
such that $\sldin{F_l}{\SLD{i}}=0$ and $\sldin{F_l}{F_{l'}}=\delta_{l,l'}$,
where  the minimum D-invariant space is given as the orbit of 
the subspace spanned by $\SLD{i}$ with respect to the D operator.
Then, $X_i$ is written as $\SLDdual{i}+ \sum_l a_{l}^i F_l$ using the vector $\bm{a}^i=(a_l^i)$ with $i=1, \ldots, \dI$.
Thus, $(\Re  Z_{\bm{\theta}}({\bm X}))^{i,j}=
\sldQFIinv{i,j}+
\bm{a}^i\cdot \bm{a}^j$.
Hence,
\begin{align*}
&\Tr{W_{\bm{g}}\Re  Z_{\bm{\theta}}({\bm X}) }+
\Tr{\left|W_{\bm{g}}^{\frac12} {\Im} Z_{\bm{\theta}}({\bm X}) W_{\bm{g}}^{\frac12} \right|}
>
\Tr{W_{\bm{g}}\Re  Z_{\bm{\theta}}({\bm X}) } \\
=&\Tr{ (\sldQFI)^{-1}}
+\sum_{i=1}^{\dI}\|\bm{a}^i\|^2 \ge \sum_{i=1}^{\dI}\|\bm{a}^i\|^2 .
\end{align*}
Hence, when $\sum_{i=1}^{\dI}\|\bm{a}^i\|^2 > \Tr{W_{\bm{g}}\Re  Z_{\bm{\theta}}({\bm L}) } 
+ 
\Tr{\left|W_{\bm{g}}^{\frac12} {\Im} Z_{\bm{\theta}}({\bm L}) W_{\bm{g}}^{\frac12} \right|}$,
the vector ${\bm X} $ cannot realize the minimum.
Therefore, the choice of 
${\bm X} $ can be restricted in to the case with
$\sum_{i=1}^{\dI}\|\bm{a}^i\|^2 \le \Tr{W_{\bm{g}}\Re  Z_{\bm{\theta}}({\bm L}) } 
+ 
\Tr{\left|W_{\bm{g}}^{\frac12} {\Im} Z_{\bm{\theta}}({\bm L}) W_{\bm{g}}^{\frac12} \right|}$, which describes a compact set.
\end{remark}

\subsection{Global parameter orthogonalization} \label{sec:QpoGlobal}
 We next examine global parameter orthogonalization.  A parametrization is called {\it globally orthogonal} 
if it is locally orthogonal at any point. 
As discussed in the classical case, the existence of global parameter orthogonalization 
is possible only when a new parametrization allows 
the relation $J_{\bm \xi;\rm{I},\rm{N}}=0$ in a new parametrization under the condition $\bm{\theta}_{\rm I}=\bm{\xi}_{\rm I}$. This is equivalent to finding 
a solution to the coupled partial differential equations similar to (\ref{diffeq}). 
Otherwise,   parameter orthogonalization can only be done locally at each point.
However,  there always exists a globally orthogonal parametrization  when the parameter of interest is a single parameter. 
We demonstrate it for the SLD Fisher information case below. Assume that a $d$-parameter model is given 
and let us introduce a new parametrization of the given quantum state by ${\bm{\xi}}=({\xi}_1,{\xi}_2,\dots,{\xi}_d)$. 
We impose the same conditions as in the classical case: 
$\theta_1={\xi}_1, \theta_2=\theta_2({\xi}_1,{\xi}_2), \theta_3=\theta_3({\xi}_1,{\xi}_2,{\xi}_3),\dots,  \theta_d=\theta_d({\bm{\xi}})$. 
Combining this with the definition of SLD operators \eqref{DEFSLD}, we see that the new set of SLD operators 
is expressed as a linear combination as
\be\label{sldxi}
L^{\mathrm{S}}_{{\bm{\xi}};{\alpha}}=\sum_{i\ge \alpha}^d \frac{\del \theta_i}{\del {\xi}_{\alpha}}L^{\mathrm{S}}_{{\bm{\theta}};i}, 
\ee
where the same index convention is used, i.e., the greek letters for the parameter ${\bm{\xi}}$. 
Then, the SLD Fisher information matrix in the new parameterization becomes 
\be
\sldqfi{{\bm{\xi}}}=T_{\bm{\xi}} \sldQFI T_{\bm{\xi}}^{\mathrm{T}}\mbox{ with\ }T_{\bm{\xi}}= \left[ \frac{\del \theta_j}{\del {\xi}_{\alpha}}\right]_{j,{\alpha}\in\{1,2,\dots,d\}}, 
\ee
which transforms exactly in the same manner as the classical case. 
Imposing the orthogonality condition between $\theta_1={\xi}_1$ and the rest with respect to the SLD Fisher information, 
we have the following conditions:  
\be \label{qdiffeq}
\sldqfi{{\bm{\theta}};1,i}+\sum_{j=2}^d \sldqfi{{\bm{\theta}};i,j}
\frac{\del \theta_j}{\del {\xi}_1}=0\mbox{ for all }i=2,3,\dots,d. 
\ee
By solving these coupled differential equations, we can obtain a new parametrization of 
the state $\rho_{\bm{\xi}}$ in which ${\xi}_1=\theta_1$ is orthogonal to the rest of parameters 
$({\xi}_2,{\xi}_3,\dots,{\xi}_d)$ with respect to the SLD Fisher information matrix. 
The same procedure can be carried out for the RLD Fisher information matrix. 

We now list several properties of the global parameter orthogonalization when $d_{\rm I}=1$.  
The following results are new contributions of this review. 
Proofs are given in Appendix \ref{sec:AppPO}. \\
{\it Property 2:  After the global  parameter orthogonalization, the SLD operator for the parameter of interest in the new parametrization 
is expressed as }
\be\label{sldxi-theta}
L^{\mathrm{S}}_{{\bm{\xi}};1}=(\sldQFIinv{1,1})^{-1}\SLDdual{1}.
\ee

\noindent
{\it Property 3: The partial SLD Fisher information of the parameter of interest after the global parameter orthogonalization is preserved.}\\
Although the parameter orthogonalization method enables us to have the relation 
$J^{\mathrm{S};1,1}_{\bm{\xi}}=(\sldqfi{{\bm{\xi}};1,1})^{-1}$ in the new parameterization, 
it preserves the partial SLD Fisher information for the parameter of interest as 
\be
J^{\mathrm{S};1,1}_{\bm{\xi}}=\sldQFIinv{1,1}. 
\ee
That is, the precision limit for the parameter of interest does not change as should be. 
(See also Theorem \ref{thmAN2} in section \ref{sec:1para2}.)
 
We close this section with a few remarks. 
The parameter orthogonalization method in the quantum case seems to be a natural extension of the classical result. 
Indeed, local parameter orthogonalization presented in this paper are extremely important upon studying the nuisance parameter problem 
in the quantum case. However, benefits of the global parameter orthogonalization method is less visible so far in the quantum case. 
One of the main reasons is that an optimal POVM attaining the most informative bound is $\bm{\theta}$-dependent in general, unless the model satisfies a special condition. 
Therefore, local properties of quantum statistical models are more important than the global aspect. 
In section \ref{sec:1para5}, we will apply this method to discuss the case where we can completely 
ignore the effect of the nuisance parameters.

%%%%%%%%%%%%%%%%%%%%%%%%%%%%%%%%%%%%%%%%%%
\section{One-parameter model with nuisance parameters}\label{sec:1para}
In this section we focus on models with a single parameter of interest in presence of nuisance parameter(s). 
This class of models is important when applying our method to quantum metrology in the presence of noise. 
It happens that this case is rather special, since the MSE bound and the optimal estimator 
have been known in literature for some time. In this section we discuss the general property of this class of problems, 
and then show the precision limit for the parameter of interest in the presence of nuisance parameters.  
In the following discussion, we consider the case for full-rank models on finite-dimensional Hilbert space. 

\subsection{General discussion}\label{sec:1para1}
\subsubsection{One-parameter model}
Let us start with a model with a single parameter, i.e., a scalar parameter $\theta$:
\be\label{1para_Qmodel}
\cM'=\{ \rho_{\theta}\,|\,\theta\in\Theta\subset\bbr\}. 
\ee 
It is known that the achievable MSE bound for the single parameter model is given by 
the SLD CR bound, which is the inverse of the SLD Fisher information, when there is no nuisance parameters. 

Let $\hat{\Pi}$ be a locally unbiased estimator at $\theta$ and denote 
its MSE by $V_{\theta}[\hat{\Pi}]$. The SLD CR bound is 
\be\label{sldCR_1para}
V_{\theta}[\hat{\Pi}]\ge (J^{\rm S}_{\theta})^{-1} ,
\ee
where no weight matrix appears since we are dealing with scalar quantities. 
An optimal estimator that attains the above bound is constructed as follows \cite{young,nagaoka87,bc94}. 
Consider the spectral decomposition of the SLD operator $L^{\mathrm{S}}_{\theta}$; 
\be\nonumber
L^{\mathrm{S}}_{\theta}=\sum_{x\in\cX}\lambda_x E_x,  
\ee
with the projector $E_x$ onto a subspace with the eigenvalue $\lambda_x$. 
We perform the projection measurement $\Pi=\{ E_x\,|\,x\in\cX\}$ 
and make an estimate, which is locally unbiased at $\theta$ by
\be \label{optest}
\hat{\theta}(x)={\theta}+(J^{\rm S}_{\theta})^{-1}\frac{d}{d\theta} \log p_{\theta}(x), 
\ee
where $p_{\theta}(x)=\tr{\rho_{\theta} E_x}$ is a probability distribution for the measurement outcomes. 
It is known that this optimal estimator depends on the unknown parameter $\theta$ in general, 
and hence the achievability of the SLD CR bound needs further discussions.  
Nagaoka derived the necessary and sufficient condition for the existence of 
an efficient estimator attaining the bound \eqref{sldCR_1para} uniformly in $\theta\in\Theta$ \cite{nagaoka87}. 
This condition is expressed as the following theorem \cite[Theorem 1]{nagaoka87}:
\begin{theorem}
For a one-parameter model \eqref{1para_Qmodel}, 
the SLD CR bound \eqref{sldCR_1para} is uniformly attained by some $\theta$-independent estimator 
$\hat{\Pi}$, if and only if two conditions i), ii) are satisfied. 
i) The model is parametrized in terms of $\xi\in\Xi$ as 
\[
\rho_{\xi}=\Exp{\frac12[ \xi F-\psi(\xi)]}\rho_0 \Exp{\frac12[ \xi F-\psi(\xi)]},
\] 
where $\psi(\xi)$ is a function of $\xi$, $F$ is an Hermitian operator on $\cH$, 
and $\rho_0$ is a $\xi$-independent state on $\cH$. 
ii) The parameter to be estimated is expressed as $\theta= \tr{F\rho_\xi}$. 
\end{theorem}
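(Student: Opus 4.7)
The plan is to treat the two implications separately, since the ``if'' direction is a direct construction while the ``only if'' direction requires reconstructing structural information from the saturation.

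For the ``if'' direction I would first differentiate the explicit exponential expression to obtain the SLD. Since $F$ commutes with itself and with $\psi(\xi)$, the defining equation \eqref{DEFSLD} gives $L^{\mathrm S}_{\xi}=F-\psi'(\xi)$, and normalization $\tr{\rho_\xi}=1$ together with differentiation forces $\psi'(\xi)=\tr{F\rho_\xi}$. This yields $J^{\mathrm S}_\xi=\mathrm{Var}_{\rho_\xi}(F)$ and, by (ii), $\theta=\psi'(\xi)$ so $d\theta/d\xi=\psi''(\xi)=J^{\mathrm S}_\xi$. Hence $J^{\mathrm S}_\theta=(J^{\mathrm S}_\xi)^{-1}=\mathrm{Var}_{\rho_\xi}(F)^{-1}$. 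I would then propose the $\xi$-independent projective measurement given by the spectral decomposition $F=\sum_x f_x E_x$ and the deterministic estimator $\hat\theta(x)=f_x$. Direct computation yields $E_\xi[\hat\theta]=\tr{F\rho_\xi}=\theta$ globally (not merely locally), and $V_\theta[\hat\Pi]=\mathrm{Var}_{\rho_\xi}(F)=(J^{\mathrm S}_\theta)^{-1}$, so the bound \eqref{sldCR_1para} is uniformly attained.

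For the ``only if'' direction, I would suppose a $\theta$-independent estimator $\hat\Pi=\Pi\circ\hat\theta^{-1}$ saturates \eqref{sldCR_1para} at every $\theta\in\Theta$. Since the classical CR inequality for the induced model $p_\theta(x)=\tr{\rho_\theta\Pi_x}$ always satisfies $J^{\mathrm C}_\theta[\Pi]\le J^{\mathrm S}_\theta$, uniform saturation forces both $J^{\mathrm C}_\theta[\Pi]=J^{\mathrm S}_\theta$ and saturation of the classical CR bound by $\hat\theta$ for every $\theta$. The Braunstein--Caves equality condition (for full-rank $\rho_\theta$) then forces $\Pi$ to be a PVM $\{E_x\}$ that simultaneously diagonalizes $L^{\mathrm S}_\theta$ for all $\theta$; in particular all $L^{\mathrm S}_\theta$ commute pairwise, so the problem reduces in this common eigenbasis to a commutative (classical) family $p_\theta(x)=\langle x|\rho_\theta|x\rangle$. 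Classical efficiency (saturation of the CR bound by a single $\theta$-independent statistic for all $\theta$) is well known to characterize one-parameter exponential families with $\hat\theta$ as the natural sufficient statistic; integrating the score relation $\partial_\theta\log p_\theta(x)=J^{\mathrm C}_\theta\bigl(\hat\theta(x)-\theta\bigr)$ produces a reparametrization $\xi=\xi(\theta)$ and a function $\psi$ with $\log p_\xi(x)=\xi\,\hat\theta(x)-\psi(\xi)+\log p_0(x)$.

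Finally I would lift this back to the quantum level. Setting $F:=\sum_x \hat\theta(x)\,E_x$, the diagonal-in-common-basis property of $\rho_\theta$ and the exponential form of its diagonal entries give exactly
\[
\rho_\xi=\exp\!\Bigl(\tfrac12[\xi F-\psi(\xi)]\Bigr)\rho_0\exp\!\Bigl(\tfrac12[\xi F-\psi(\xi)]\Bigr),
\]
with $\rho_0$ independent of $\xi$ (just the common-basis state whose diagonal entries are the $p_0(x)$), and by construction $\theta=\sum_x \hat\theta(x)\langle x|\rho_\xi|x\rangle=\tr{F\rho_\xi}$. This yields (i) and (ii). The main obstacle I anticipate is the passage from ``$J^{\mathrm C}_\theta[\Pi]=J^{\mathrm S}_\theta$ uniformly with a fixed $\Pi$'' to the simultaneous diagonalizability of all SLDs; handling this rigorously (including the case of degenerate spectra and eigenprojectors that could a priori vary with $\theta$) is the technical heart of the argument, and is where the full-rank hypothesis on $\rho_\theta$ is used to exclude pathologies and to make the Braunstein--Caves equality condition bite pointwise.
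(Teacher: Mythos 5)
Your ``if'' direction is correct (and essentially the standard computation: $L^{\mathrm S}_{\xi}=F-\psi'(\xi)I$, $\psi'(\xi)=\mathrm{tr}(F\rho_\xi)$, and the PVM of $F$ with $\hat\theta(x)=f_x$ is globally unbiased with variance $(J^{\mathrm S}_\theta)^{-1}$). The genuine gap is in the ``only if'' direction, at the lifting step. From commutativity of the SLDs you conclude that the model ``reduces in this common eigenbasis to a commutative (classical) family'' and that $\rho_\theta$ is diagonal in that basis, and you then reconstruct $\rho_\xi$ purely from the induced outcome distribution. That does not follow, and it is false in general: your own ``if'' direction shows that any reference state $\rho_0$ with $[\rho_0,F]\neq 0$ still gives a uniformly efficient family, and such $\rho_\xi$ does \emph{not} commute with $F$ — only its SLD $F-\psi'(\xi)I$ is diagonal. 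The classical data $p_\theta(x)=\mathrm{tr}(\rho_\theta E_x)$ determine at most the block traces of $\rho_\theta$ in the common eigenbasis, so integrating the classical score relation cannot fix the off-diagonal (or intra-block) matrix elements, and your final display for $\rho_\xi$ is unjustified; as written, your argument would only recover the subfamily with $[\rho_0,F]=0$, contradicting the generality established in your own first half. (A smaller inaccuracy: the equality condition for $J_\theta[\Pi]=J^{\mathrm S}_\theta$ at full rank only forces each POVM element to be supported in an eigenspace of $L^{\mathrm S}_\theta$, not that $\Pi$ is a PVM.)

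The repair is to work at the operator level rather than through the induced classical model, using the equality condition of the SLD CR inequality itself (this is the machinery in Appendix B.1 of the paper; the paper does not prove the theorem but cites Nagaoka, whose argument runs this way). For a full-rank model, saturation of \eqref{sldCR_1para} at $\theta$ by a locally unbiased $\hat\Pi$ forces the Schwartz-equality identity $\sum_x(\hat\theta(x)-\theta)\Pi_x=(J^{\mathrm S}_\theta)^{-1}L^{\mathrm S}_\theta$; since $\hat\Pi$ is $\theta$-independent, this reads $L^{\mathrm S}_\theta=J^{\mathrm S}_\theta\,(F-\theta I)$ with the single fixed observable $F=\sum_x\hat\theta(x)\Pi_x$. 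This already gives pairwise commutation of all SLDs, and $\mathrm{tr}(\rho_\theta L^{\mathrm S}_\theta)=0$ gives condition ii), $\theta=\mathrm{tr}(F\rho_\theta)$. Then reparametrize by $d\xi/d\theta=J^{\mathrm S}_\theta$ so that $L^{\mathrm S}_{\xi}=F-\psi'(\xi)I$ for some $\psi$, and integrate the defining equation \eqref{DEFSLD} as an operator ODE, $\partial_\xi\rho_\xi=\tfrac12\{F-\psi'(\xi)I,\rho_\xi\}$, whose unique solution with a fixed $F$ is the sandwiched exponential $\rho_\xi=e^{\frac12(\xi F-\psi(\xi))}\rho_0\,e^{\frac12(\xi F-\psi(\xi))}$ with an arbitrary, possibly non-commuting, $\rho_0$ — condition i). This also disposes of the degenerate-spectrum worry you flag, since no spectral analysis of $F$ is needed.
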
 

This theorem states the necessary and sufficient condition for the existence of an efficient estimator attaining the SLD CR bound uniformly: i) A model is the quantum exponential family and ii) the parameter to be estimated is 
the expectation value of the observable $F$. Geometrically speaking, this is equivalent to 
the three conditions: The model is quasi-classical and e-autoparallel, and $\theta$ is an m-affine parameter. 
The above theorem can be generalized to the multiparameter setting as well. See \cite[Theorem 7.6]{ANbook}.

\subsection{One-parameter model with nuisance parameters}
Next, we provide a known result for a one-parameter estimation problem 
in the presence of nuisance parameters; see, for example, Chapter 7 of \cite{ANbook}. 
Consider a $d$-parameter model with $d-1$ nuisance parameters, 
i.e., ${\theta}_{\mathrm{I}}=\theta_1$ and $\bm{\theta}_{\mathrm{N}}=(\theta_2,\theta_3,\dots,\theta_d)$.  
Denote this model as 
\be\nonumber
\cM=\{ \rho_{{\bm{\theta}}=(\theta_1,\bm{\theta}_{\mathrm{N}})}\,|\,\theta_1\in\Theta_1,\bm{\theta}_{\mathrm{N}}\in\Theta_\mathrm{N}\}. 
\ee
We note that this model $\cM$ is reduced to the single parameter model $\cM'$ 
if all nuisance parameters are completely known. 
We stress that there are no general formulas for achievable bounds for this class of general models. 

A key result is now given for the one-parameter estimation problem in the presence of nuisance parameter(s). 
The following fundamental theorem also establishes the optimality of the SLD quantum Fisher information matrix \cite[equation (7.93)]{ANbook}. 
\begin{theorem}\label{thmAN}
Given a $d$-parameter regular model $\cM$, for each $d$-dimensional (column) vector $\Vec{v}\in\bbr^d$, 
the infimum of the MSE matrix in the direction of $\Vec{v}$ is 
\be \label{vCRbound}
\inf_{ \hat{\Pi}\mathrm{\,:l.u.at\,}{\bm{\theta}} } \Vec{v}^{\mathrm{T}} V_{\bm{\theta}}[\hat{\Pi}] \Vec{v}= \Vec{v}^{\mathrm{T}} (\sldQFI)^{-1}\Vec{v},
\ee
where $\sldQFI$ is the SLD quantum Fisher information matrix. 
An optimal measurement is given by a projection measurement 
about the linear combination of the SLD operators:
\be\label{1paraoptPOVM}
L^{\mathrm{S}}_{{\bm{\theta}};\Vec{v}}=\sum_{i,j=1}^d v_i 
J^{{\rm S};i,j}_{{\bm{\theta}}} \SLD{j}
= \sum_{i=1}^d v_i \SLDdual{i}
\ee 
\end{theorem}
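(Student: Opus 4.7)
The plan is to split the argument into a lower bound from the multiparameter SLD Cram\'er-Rao inequality \eqref{CRSLDF} and an achievability construction based on the spectral decomposition of the Hermitian observable $L^{\mathrm{S}}_{{\bm{\theta}};\Vec{v}}:=\sum_{i=1}^d v_i \SLDdual{i}$ already written in \eqref{1paraoptPOVM}. For the lower bound, the matrix inequality $V_{{\bm{\theta}}}[\hat{\Pi}]\ge (\sldQFI)^{-1}$ holds for every locally unbiased $\hat{\Pi}$, and sandwiching it between $\Vec{v}^{\mathrm{T}}$ and $\Vec{v}$ immediately yields $\Vec{v}^{\mathrm{T}} V_{\bm{\theta}}[\hat{\Pi}] \Vec{v}\ge \Vec{v}^{\mathrm{T}}(\sldQFI)^{-1}\Vec{v}$, which is the desired lower bound on the infimum.

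For achievability, I would first establish three identities for $L^{\mathrm{S}}_{{\bm{\theta}};\Vec{v}}$, all consequences of the duality relations $\sldin{\SLDdual{i}}{\SLD{j}}=\delta_{i,j}$ and $\sldin{\SLDdual{i}}{\SLDdual{j}}=\sldQFIinv{i,j}$: (a) $\tr{\rho_{\bm{\theta}} L^{\mathrm{S}}_{{\bm{\theta}};\Vec{v}}}=0$, inherited from $\tr{\rho_{\bm{\theta}}\SLD{j}}=0$ (a consequence of $\tr{\partial_j \rho_{\bm{\theta}}}=0$); (b) $\tr{\partial_j\rho_{\bm{\theta}} L^{\mathrm{S}}_{{\bm{\theta}};\Vec{v}}}=v_j$ for each $j=1,\ldots,d$, obtained by substituting the SLD definition \eqref{DEFSLD} and invoking the duality; and (c) $\tr{\rho_{\bm{\theta}}(L^{\mathrm{S}}_{{\bm{\theta}};\Vec{v}})^2}=\sldin{L^{\mathrm{S}}_{{\bm{\theta}};\Vec{v}}}{L^{\mathrm{S}}_{{\bm{\theta}};\Vec{v}}}=\Vec{v}^{\mathrm{T}}(\sldQFI)^{-1}\Vec{v}$. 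Then, taking the projection measurement $\Pi=\{E_x\}$ from the spectral decomposition $L^{\mathrm{S}}_{{\bm{\theta}};\Vec{v}}=\sum_x \lambda_x E_x$ and, for the scalar quantity $u:=\Vec{v}^{\mathrm{T}}{\bm{\theta}}$, choosing the estimator $\hat{u}(x):=\Vec{v}^{\mathrm{T}}{\bm{\theta}}+\lambda_x$, identities (a) and (b) say that $\hat{u}$ is locally unbiased for $u$ at ${\bm{\theta}}$, while identity (c) gives $\mathrm{Var}_{\bm{\theta}}[\hat{u}]=\Vec{v}^{\mathrm{T}}(\sldQFI)^{-1}\Vec{v}$, which matches the lower bound.

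The point I expect to require the most care is lifting this scalar construction to a fully locally unbiased estimator $\hat{\Pi}$ of all $d$ components of ${\bm{\theta}}$, as required by the infimum in the statement, since the spectral PVM of $L^{\mathrm{S}}_{{\bm{\theta}};\Vec{v}}$ alone need not induce a classical Fisher information matrix of full rank. My plan is to augment $\Pi$ by an auxiliary POVM of vanishingly small weight so that the combined measurement has full-rank classical Fisher information, and to define estimator values on the auxiliary outcomes that enforce local unbiasedness in the directions transverse to $\Vec{v}$ without disturbing the $\Vec{v}$-direction variance (a convex-combination/randomization argument in the spirit of \eqref{JinJ}). Passing to the vanishing-weight limit, one gets a sequence of genuine locally unbiased estimators whose $\Vec{v}$-direction variance converges to $\Vec{v}^{\mathrm{T}}(\sldQFI)^{-1}\Vec{v}$, which together with the lower bound establishes equality \eqref{vCRbound} and identifies the projection measurement of $L^{\mathrm{S}}_{{\bm{\theta}};\Vec{v}}$ as optimal.
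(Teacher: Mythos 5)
Your lower bound is fine: the class over which the infimum is taken is exactly the class of estimators locally unbiased for all $d$ parameters, so \eqref{CRSLDF} applies and sandwiching with $\Vec{v}$ gives $\Vec{v}^{\mathrm{T}}V_{\bm{\theta}}[\hat{\Pi}]\Vec{v}\ge\Vec{v}^{\mathrm{T}}(\sldQFI)^{-1}\Vec{v}$. Your identities (a)--(c) for $L^{\mathrm{S}}_{{\bm{\theta}};\Vec{v}}=\sum_i v_i\SLDdual{i}$ are also correct, and they do produce a locally unbiased estimator \emph{of the scalar} $u=\Vec{v}^{\mathrm{T}}{\bm{\theta}}$ with variance $\Vec{v}^{\mathrm{T}}(\sldQFI)^{-1}\Vec{v}$. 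The genuine gap is exactly where you flag it, and your sketch of that step is not yet a proof. First, the phrase ``enforce local unbiasedness in the transverse directions without disturbing the $\Vec{v}$-direction variance'' cannot be taken literally at any fixed auxiliary weight: if a fully locally unbiased estimator with $\Vec{v}$-direction MSE equal to $\Vec{v}^{\mathrm{T}}(\sldQFI)^{-1}\Vec{v}$ existed, the infimum in \eqref{vCRbound} would be an attained minimum, which fails in general (this is precisely why the statement uses $\inf$ and why the paper remarks that the optimal measurement need not be locally unbiased for the nuisance parameters). Second, the actual content of the achievability half is the convergence claim you assert without proof: with the mixed POVM $\Pi_\epsilon$ (weight $1-\epsilon$ on the spectral PVM $\Pi_{\Vec{v}}$ of $L^{\mathrm{S}}_{{\bm{\theta}};\Vec{v}}$, weight $\epsilon$ on an informationally complete POVM) and the canonical locally unbiased estimator \eqref{lu_est}, the MSE matrix equals $J_{\bm{\theta}}[\Pi_\epsilon]^{-1}$, and one must show $\Vec{v}^{\mathrm{T}}J_{\bm{\theta}}[\Pi_\epsilon]^{-1}\Vec{v}\to\Vec{v}^{\mathrm{T}}(\sldQFI)^{-1}\Vec{v}$. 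This is not automatic, because $J_{\bm{\theta}}[\Pi_{\Vec{v}}]$ may be singular; it follows only after establishing (i) the monotonicity $J_{\bm{\theta}}[\Pi_{\Vec{v}}]\le\sldQFI$, and (ii) the identity $\Vec{w}^{\mathrm{T}}J_{\bm{\theta}}[\Pi_{\Vec{v}}]\Vec{w}=\Vec{w}^{\mathrm{T}}\sldQFI\Vec{w}$ for $\Vec{w}=(\sldQFI)^{-1}\Vec{v}$, which is Lemma \ref{lem_optPVM} (your quantum-variance identity (c) is not the same statement, since it concerns $\tr{\rho_{\bm{\theta}}(L^{\mathrm{S}}_{{\bm{\theta}};\Vec{v}})^2}$ rather than the classical Fisher information of the PVM). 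From (i) and (ii) one gets $J_{\bm{\theta}}[\Pi_{\Vec{v}}]\Vec{w}=\Vec{v}$, hence $\Vec{v}$ lies in the range of $J_{\bm{\theta}}[\Pi_{\Vec{v}}]$ with matching pseudo-inverse quadratic form, and a Schur-complement computation for the $\epsilon$-mixture then yields the desired limit. Without these ingredients the lifting step is incomplete.

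For comparison, the paper does not prove Theorem \ref{thmAN} itself (it is quoted from Amari--Nagaoka); instead it proves the refined Theorem \ref{thmAN2} in Appendix \ref{sec:AppPr4}, where the unbiasedness requirement is relaxed to local unbiasedness for the parameter of interest only. Under that weaker constraint your scalar construction (after parameter orthogonalization, equivalently the PVM of $\SLDdual{1}$ with the estimator built from the score) is already admissible, the bound becomes an attained minimum, and the auxiliary-measurement limit you are wrestling with is avoided entirely; the price is the extra bookkeeping with the bias matrix in Lemma \ref{lemm_crgen} for the converse. If you want to prove the theorem exactly as stated, supply the monotonicity, Lemma \ref{lem_optPVM}, and the $\epsilon\to0$ Schur-complement argument described above.
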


We have several remarks regarding this theorem, although some of them are already discussed in the earlier sections. 
First, infimum is taken over all possible estimators which are locally unbiased for all parameters ${\bm{\theta}}=({\theta}_{\mathrm{I}},\bm{\theta}_{\mathrm{N}})$ at ${\bm{\theta}}$. 
Second, the optimal estimator $\hat{\Pi}_{\mathrm{opt}}$ in Theorem \ref{thmAN} depends on 
this particular direction $\Vec{v}$ in general. 
Third, this optimal estimator $\hat{\Pi}_{\mathrm{opt}}$ 
may not be locally unbiased for both the parameter of interest $\theta_1$ and the nuisance parameters $\Vec{\theta}_{\rm N}=(\theta_2,\ldots,\theta_d)$ at $\Vec{\theta}$. 
%\sout{\blue{but rather only about $\theta_{\Vec{v}}:=\sum_{i}v_i\theta_i$.}} 
%\dgreen{\sout{Therefore,}} 
In general, the bound \eqref{vCRbound} can be achieved by an adaptive strategy [see strategy (A2) in section \ref{sec:Qnui1-2}]. 
As a special case, a repetitive strategy [strategy (A1) in section \ref{sec:Qnui1-2}] can attain this bound when the optimal PVM is independent of $\Vec{\theta}$ (See Subsection \ref{subsec-example-ob} for an example.).  
%\sout{\blue{That is why the infimum rather than the minimum in Eq.~\eqref{vCRbound}.}} 
Fourth, the remaining $d-1$ parameters other than  $\theta_{\Vec{v}}:=\sum_{i}v_i\theta_i$ 
are to be regarded as nuisance parameters in this setting. 
Fifth, Theorem \ref{thmAN} can be understood as the rank-1 limit of 
the positive weight matrix as $W\to \Vec{v}\Vec{v}^{\mathrm{T}}$, which was discussed in Section \ref{sec:Qnui2}. 
Last, this theorem establishes the optimality of the SLD quantum Fisher information matrix 
for each direction given by $\Vec{v}$, and this provides an operational meaning of the SLD quantum Fisher information. 

The special case $\Vec{v}=(1,0,\dots,0)^{\mathrm{T}}$ is of particular importance when dealing with the one-parameter estimation problem 
in the presence of $d-1$ nuisance parameters. The impact of these nuisance parameters on the estimation is made apparent by comparing Eq.\ (\ref{sldCR_1para}) with Eq.\ (\ref{vCRbound}). 
The above theorem at first sight completely solves this case by providing 
an optimal estimator $\hat{\Pi}_{\mathrm{opt}}$. That is, the CR bound is 
the $(1,1)$ component of the inverse of SLD Fisher information matrix: $\sldQFIinv{1,1}$. 
However, 
there remains a question of achievability of this bound, 
since ${\Pi}_{\mathrm{opt}}$ depends on the unknown parameter ${\bm{\theta}}$ in general.  

Recently, there has been a growing trend in studying multiphase estimation and, in particular, distributed quantum metrology \cite{humphreys2013quantum,pezze2017optimal,altenburg2018multi,proctor2018multiparameter,eldredge2018optimal,ge2018distributed,swd19}.
A typical scenario is to consider estimating a unitary process on a network of $d$ spatially separate nodes, each described by a local unitary with an unknown parameter. Denoting by ${\bm{\theta}}=(\theta_1,\dots,\theta_d)$ the vector of all unknown parameters, the whole process is  described by the unitary $U({\bm{\theta}}):=\exp\left\{-i\sum_{k=1}^d\theta_k  H_k\right\}$,
where $\{H_k\}$ are the local generators of the evolution. 
The goal is to estimate the parameter $\theta_{\bf v}:={\bf v}\cdot{\bm{\theta}}$ that is a weighted sum of $\{\theta_k\}$, using a suitable probe state $\ket{\Psi_0}$. In such a setting, the relevant state model is $\{\ket{\Psi_{\bm{\theta}}}:=U({\bm{\theta}})\ket{\Psi_0}\}$, which can be characterized by one parameter of interest ($\theta_{\bf v}$) and $d-1$ nuisance parameters. Theorem \ref{thmAN} can be readily applied to obtain the estimation precision as well as the optimal measurement.
Similar as in other areas of quantum metrology, the main interest is whether the estimation precision can be enhanced when there is entanglement over different sites. For instance, it was shown in \cite{proctor2018multiparameter} that how big the advantage depends on the number of nuisance parameters. If there is only one parameter of interest, then it is often desired to use entangled probes.

\subsection{A refined version of Theorem \ref{thmAN}}\label{sec:1para2}
We can now prove that the bound in Theorem \ref{thmAN} can be achieved by a locally unbiased estimator for the parameter of interest 
corresponding to $\Vec{v}=(1,0,\dots,0)^{\mathrm{T}}$. 
Thereby, we obtain the precision limit for the single-copy setting. 
Note that we don't need to use the weight matrix for the parameter of interest in this special case, 
since we are minimizing a scalar quantity. By setting $W_{\rm I}=1$, we have the following theorem\footnote{To our knowledge, this theorem appears for the fist time in the context of the nuisance parameter problem.}.  
\begin{theorem}\label{thmAN2}
Given a $d$-parameter regular model $\cM$, suppose that 
we are interested in estimating the parameter $\theta_1$ in the presence of the nuisance parameters 
$\bm{\theta}_{\mathrm{N}}=(\theta_2,\dots,\theta_d)$. The achievable lower bound 
for the MSE about the parameter of interest 
$V_{\bm{\theta};\mathrm{I}}[\hat{\Pi}_{\mathrm{I}}]$ is given by 
\be \label{vCRbound2}
C_{\bm{\theta};\mathrm{I}}[W_{\rm I}=1,\cM]:=
\min_{\hat{\Pi}_{\mathrm{I}}\mathrm{\,:l.u.\,at\,}\bm{\theta} \mathrm{\, for \,} \bm{\theta}_{\mathrm{I}}} 
V_{\bm{\theta};\mathrm{I}}[\hat{\Pi}_{\mathrm{I}}] =\sldQFIinv{1,1}=\left(
J_{\bm{\theta}}({\mathrm{I}}|{\mathrm{N}})\right)^{-1}, 
\ee
where the minimization is taken over all locally unbiased estimators $\hat{\Pi}_{\mathrm{I}}$ for the parameter of interest at ${\bm{\theta}}$, $\sldQFIinv{1,1}$ is the $(1,1)$-th element of the inverse SLD matrix, and $
J_{\bm{\theta}}({\mathrm{I}}|{\mathrm{N}})$ is the partial SLD Fisher information \eref{qpSLDinfo}. 
An optimal measurement is given by a projection measurement about the operator:
\be\label{1paraoptPOVM}
\SLDdual{1}=\sum_{j=1}^d \sldQFIinv{j,1} \SLD{j}. 
\ee
\end{theorem}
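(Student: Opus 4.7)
The plan is to prove the theorem in three logical steps: establish a lower bound via a quantum-to-classical reduction, exhibit an explicit PVM attaining it, and reconcile the answer with the partial SLD Fisher information through a Schur complement.

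For the lower bound, I would fix any POVM $\Pi$ together with any estimator $\hat{\theta}_1$ that is locally unbiased for $\theta_1$ at $\bm{\theta}$ in the sense of \eqref{lu_cond}. The induced family $\{p_{\bm{\theta}}(\cdot|\Pi)\}_{\bm{\theta}\in\Theta}$ is a $d$-parameter classical model on which $\hat{\theta}_1$ satisfies the classical locally-unbiased-for-the-parameter-of-interest conditions \eqref{lu_cond_class}. Applying the classical CR inequality with nuisance parameters \eqref{ccrineq2} yields $V_{\bm{\theta};\mathrm{I}}[\hat{\Pi}_{\mathrm{I}}]\ge (J_{\bm{\theta}}[\Pi]^{-1})_{1,1}$. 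Combining this with the standard matrix monotonicity $J_{\bm{\theta}}[\Pi]\le \sldQFI$ that underpins \eqref{CRSLDF} (see Appendix \ref{PfCR}), and hence $(J_{\bm{\theta}}[\Pi])^{-1}\ge (\sldQFI)^{-1}$, comparison of the $(1,1)$ entries gives $V_{\bm{\theta};\mathrm{I}}[\hat{\Pi}_{\mathrm{I}}]\ge \sldQFIinv{1,1}$.

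For achievability, I would take the PVM $\{E_x\}$ from the spectral decomposition $\SLDdual{1}=\sum_x \mu_x E_x$ (well-defined since $\SLDdual{1}$ is a real linear combination of Hermitian SLDs) together with the estimator $\hat{\theta}_1(x)=\theta_1+\mu_x$, where $\theta_1$ is the reference value. Because $\tr{\rho_{\bm{\theta}}\SLD{j}}=0$ for every $j$, the shift $\sum_x p_{\bm{\theta}}(x|\Pi)\mu_x=\tr{\rho_{\bm{\theta}}\SLDdual{1}}$ vanishes and unbiasedness at $\bm{\theta}$ holds. The local-unbiasedness derivative conditions $\del_i E_{\bm{\theta}}[\hat{\theta}_1(X)|\Pi]=\delta_{i,1}$ for every $i=1,\ldots,d$ then reduce, via \eqref{DEFSLD}, to the dual-basis identity $\sldin{\SLD{i}}{\SLDdual{1}}=\sum_j \sldQFIinv{j,1}\sldqfi{\bm{\theta};i,j}=\delta_{i,1}$. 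Finally the MSE evaluates to $\sum_x p_{\bm{\theta}}(x|\Pi)\mu_x^2=\tr{\rho_{\bm{\theta}}(\SLDdual{1})^2}=\sldin{\SLDdual{1}}{\SLDdual{1}}=\sldQFIinv{1,1}$, matching the lower bound.

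The residual identity $\sldQFIinv{1,1}=(J^{\mathrm{S}}_{\bm{\theta}}(\mathrm{I}|\mathrm{N}))^{-1}$ is the Schur complement formula, identical in form to \eqref{cpFI} but applied to $\sldQFI$ in place of the classical $J_{\bm{\theta}}$. The main conceptual hurdle I anticipate is the transition from Theorem \ref{thmAN}, whose admissible set consists of estimators locally unbiased for \emph{all} $d$ parameters, to Theorem \ref{thmAN2}, where local unbiasedness is required only for $\theta_1$: a priori this larger admissible set could lower the infimum. The classical reduction in the lower-bound step dispels that worry cleanly, because \eqref{ccrineq2} is precisely the correct bound under the weaker local unbiasedness requirement, and the bound is then matched exactly by the explicit PVM construction above.
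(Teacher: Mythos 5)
Your proposal is correct, but it reaches the result by a genuinely different route than the paper. For the converse, the paper (Appendix \ref{sec:AppPr4}) stays entirely on the quantum side: it extends $\hat{\Pi}_{\mathrm{I}}$ by an arbitrary estimate of the nuisance parameters, exploits the block structure of the bias matrix, and combines the generalized SLD CR inequality of Lemma \ref{lemm_crgen} with Corollary \ref{cor_holevo} to get $V_{\bm{\theta};\mathrm{I}}\ge \sldQFIinv{1,1}$; you instead reduce to the classical nuisance-parameter bound \eqref{ccrineq2} for the outcome model of $\Pi$ and then invoke the measurement monotonicity $J_{\bm{\theta}}[\Pi]\le\sldQFI$. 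That is a valid chain, with two small caveats: Appendix \ref{PfCR} does not actually state the matrix inequality $J_{\bm{\theta}}[\Pi]\le\sldQFI$ (it proves \eqref{CRSLDF} via a Schwarz inequality on estimators), so you should either prove the monotonicity by the same Schwarz argument applied to the score functions or cite it independently; and \eqref{ccrineq2} presumes $J_{\bm{\theta}}[\Pi]$ is invertible, so singular-measurement models need a word (generalized inverse, or the partial-Fisher-information argument of Appendix \ref{sec:AppCstat2}) — the paper's quantum route sidesteps this because $(\sldQFI)^{-1}$ exists by regularity of $\cM$. For achievability, the paper passes through global parameter orthogonalization: it builds the PVM from $L^{\mathrm{S}}_{\bm{\xi};1}$, invokes Lemma \ref{lem_optPVM}, and uses Properties 2 and 3 to identify the measurement with the spectral decomposition of $\SLDdual{1}$ and the error with $\sldQFIinv{1,1}$; you verify the same PVM directly, with the eigenvalue estimator $\hat{\theta}_1(x)=\theta_1+\mu_x$, checking the locally-unbiased-for-$\theta_1$ conditions \eqref{lu_cond} from the dual-basis identity and evaluating the MSE as $\tr{\rho_{\bm{\theta}}(\SLDdual{1})^2}=\sldQFIinv{1,1}$. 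Your direct verification is arguably cleaner and makes explicit why only local unbiasedness for the parameter of interest is needed, while the paper's orthogonalization route buys the additional structural insight (invariance of the partial SLD Fisher information and the identification $L^{\mathrm{S}}_{\bm{\xi};1}\propto\SLDdual{1}$) that it reuses in Section \ref{sec:1para5}. The final Schur-complement identification $\sldQFIinv{1,1}=(J^{\mathrm{S}}_{\bm{\theta}}(\mathrm{I}|\mathrm{N}))^{-1}$ is as in the text following \eqref{qpSLDinfo}, so that step is fine.
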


We remark that this is a stronger variant of Theorem \ref{thmAN}. 
In the previous discussion, it was proven only for the infimum 
of the MSE about the parameter of interest $V_{\bm{\theta}_{\mathrm{I}}}[\hat{\Pi}]$ under 
the condition of locally unbiased estimators for all parameters ${\bm{\theta}}=({\theta}_{\mathrm{I}},\bm{\theta}_{\mathrm{N}})$. 
In Theorem \ref{thmAN2}, the condition is relaxed to unbiasedness for the parameter of interest [see \eqref{lu_cond}]. 
The proof for this theorem is given in Appendix \ref{sec:AppPr4}. With this theorem, we conclude that the partial SLD Fisher information 
$J^{\mathrm{S}}_{\bm{\theta}}({\mathrm{I}}|{\mathrm{N}})$ is the relevant quantity for the single-copy setting. 
 
From Theorem \ref{thmAN2}, we see that 
the case of one-parameter estimation problem with nuisance parameters is {\it essentially} a
one-parameter problem. The only difference here is that 
the partial SLD Fisher information 
$J^{\mathrm{S}}_{\bm{\theta}}({\mathrm{I}}|{\mathrm{N}})$ 
plays the fundamental role for the ultimate precision limit. 
This fact is understood transparently if we apply the parameter orthogonalization method. 
In the new parametrization, the parameter of interest is made orthogonal to the rest 
globally with respect to the SLD quantum Fisher information matrix. 
Hence, the nuisance parameters do not affect the precision limit. 
However, the optimal PVM attaining this limit {\it does} depend on the nuisance parameters in general. 
This means that the effects of the nuisance parameters should not be completely ignored. 
This point becomes significantly important for the finite sample case. 

Following the discussion in section \ref{sec:Qfunction}, we can extend our argument 
to derive the achievable bound upon estimating a scalar function of parameters $g(\bm{\theta})$. 
Given a smooth function $g$, define a column vector, 
\be
\bm{v}_{\bm \theta;g}:=\left(\frac{\del g(\bm{\theta})}{\del \theta_1},\frac{\del g(\bm{\theta})}{\del \theta_2},\ldots,\frac{\del g(\bm{\theta})}{\del \theta_d}\right)^{\rm T} .
\ee 
Then, we have the following result. See \cite[Section 9]{YCH18} for a rigorous proof.  
\begin{corollary}
The achievable precision limit for estimating $g(\bm{\theta})$ is given by
\begin{align}
C_{\bm{\theta};g}[W_{g}=1,\cM]&:=
\min_{\hat{\Pi}_{g}\mathrm{\,:l.u.\,at\,}\bm{\theta} \mathrm{\, for \,} g} 
V_{\bm{\theta};g}[\hat{\Pi}_{g}] \\
&=\bm{v}_{\bm \theta;g}^{\rm T} (\sldQFI)^{-1} \bm{v}_{\bm \theta;g}. 
\end{align}
An optimal estimator, which is locally unbiased at $\bm{\theta}$ for $g$, is given by the PVM about 
\be
L^{\mathrm{S}}_{{\bm{\theta}};\bm{v}_{\bm \theta;g}}=\sum_{i,j=1}^d \frac{\del g(\bm{\theta})}{\del \theta_i}
J^{{\rm S};i,j}_{{\bm{\theta}}} \SLD{j}
= \sum_{i=1}^d {v}_{\bm \theta;g,i} \SLDdual{i},
\ee 
where ${v}_{\bm \theta;g,i}$ denotes the $i$th component of the vector $\bm{v}_{\bm \theta;g}$. 
\end{corollary}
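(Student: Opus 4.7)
The plan is to reduce this corollary to Theorem \ref{thmAN2} by a carefully chosen reparametrization that makes $g(\bm\theta)$ itself one of the new coordinates. Concretely, at a point $\bm\theta$ where $\bm v_{\bm\theta;g}\neq 0$ (the only nontrivial case), I would choose new local coordinates $\bm\xi=(\xi_1,\xi_2,\ldots,\xi_d)$ with $\xi_1=g(\bm\theta)$ and $\xi_2,\ldots,\xi_d$ completed arbitrarily into a smooth invertible map. In the $\bm\xi$-parametrization, estimating $g(\bm\theta)$ is the same as estimating the single parameter of interest $\xi_1$ in the presence of the $d-1$ nuisance parameters $\xi_2,\ldots,\xi_d$. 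The local unbiasedness condition \eqref{lu_cond_function} for $g$ in the $\bm\theta$-coordinates is equivalent (by the chain rule and invertibility of the Jacobian) to the condition \eqref{lu_cond} for $\xi_1$ in the $\bm\xi$-coordinates, by essentially the same bookkeeping that underlies Lemma \ref{lem_lucond2}.

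Applying Theorem \ref{thmAN2} in the new parametrization immediately yields the precision limit $J^{\mathrm{S};1,1}_{\bm\xi}$ together with the optimal PVM about $L^{\mathrm{S};1}_{\bm\xi}=\sum_{\alpha=1}^d J^{\mathrm{S};\alpha,1}_{\bm\xi}L^{\mathrm{S}}_{\bm\xi;\alpha}$. What remains is to pull these quantities back to the original coordinates. Setting $B_{\alpha,i}:=\partial\xi_\alpha/\partial\theta_i$, the SLDs transform as $L^{\mathrm{S}}_{\bm\xi;\alpha}=\sum_i(\partial\theta_i/\partial\xi_\alpha)L^{\mathrm{S}}_{\bm\theta;i}$ (immediate from $\partial_{\xi_\alpha}\rho=\sum_i(\partial\theta_i/\partial\xi_\alpha)\partial_{\theta_i}\rho$ and the defining SLD equation), so the Fisher matrix transforms as $J^{\mathrm{S}}_{\bm\xi}=(B^{-1})^{\mathrm T}J^{\mathrm{S}}_{\bm\theta}B^{-1}$ and its inverse as $(J^{\mathrm{S}}_{\bm\xi})^{-1}=B(J^{\mathrm{S}}_{\bm\theta})^{-1}B^{\mathrm T}$. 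Since $B_{1,i}=\partial g/\partial\theta_i=v_{\bm\theta;g,i}$, extracting the $(1,1)$-entry gives
\begin{equation*}
J^{\mathrm{S};1,1}_{\bm\xi}=\sum_{i,j=1}^d v_{\bm\theta;g,i}\,J^{\mathrm{S};i,j}_{\bm\theta}\,v_{\bm\theta;g,j}=\bm v_{\bm\theta;g}^{\mathrm T}(J^{\mathrm{S}}_{\bm\theta})^{-1}\bm v_{\bm\theta;g},
\end{equation*}
which is the claimed bound and, crucially, is independent of the arbitrary completion $\xi_2,\ldots,\xi_d$.

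For the optimal PVM I would use the dual transformation law $L^{\mathrm{S};\alpha}_{\bm\xi}=\sum_i B_{\alpha,i}L^{\mathrm{S};i}_{\bm\theta}$, which is verified against the SLD transformation above by checking that the duality $\sldin{L^{\mathrm{S};\alpha}_{\bm\xi}}{L^{\mathrm{S}}_{\bm\xi;\beta}}=\delta_{\alpha,\beta}$ is preserved (it reduces to $B\cdot B^{-1}=I$). Specializing to $\alpha=1$ gives $L^{\mathrm{S};1}_{\bm\xi}=\sum_i v_{\bm\theta;g,i}L^{\mathrm{S};i}_{\bm\theta}=\sum_{i,j}v_{\bm\theta;g,i}J^{\mathrm{S};i,j}_{\bm\theta}L^{\mathrm{S}}_{\bm\theta;j}$, matching the corollary. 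The main subtlety — and the only step that requires any real care — is verifying the dual-SLD transformation law and confirming that the end result is manifestly coordinate-free (i.e.\ independent of the arbitrary $\xi_2,\ldots,\xi_d$); a sanity check is that the same bound is obtained as the rank-one limit $W\to\bm v_{\bm\theta;g}\bm v_{\bm\theta;g}^{\mathrm T}$ of Theorem \ref{thmAN}.
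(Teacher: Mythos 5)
Your proposal is correct. It does, however, take a somewhat different route from the paper: the paper gives no self-contained argument for this corollary, deferring the rigorous proof to \cite[Section 9]{YCH18} and framing the result as an instance of the function-estimation formalism of Section \ref{sec:Qfunction}, where the lower bound $\bm{v}_{\bm\theta;g}^{\rm T}(\sldQFI)^{-1}\bm{v}_{\bm\theta;g}$ comes directly from the scalar-function SLD CR inequality (the $K=1$ case of \eqref{sldcr_function}, built on the rectangular Jacobian $G_{\bm\theta}$ as in Appendix \ref{sec:AppCstat3}) and achievability is established by an explicit locally unbiased construction with the PVM of $L^{\mathrm{S}}_{\bm\theta;\bm{v}_{\bm\theta;g}}$, mirroring the proof of Theorem \ref{thmAN2} in Appendix \ref{sec:AppPr4}. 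You instead complete $\xi_1=g(\bm\theta)$ locally into an invertible coordinate system and invoke Theorem \ref{thmAN2} wholesale, pulling the answer back with the (correctly verified) transformation laws $J^{\mathrm{S}}_{\bm\xi}=(B^{-1})^{\rm T}J^{\mathrm{S}}_{\bm\theta}B^{-1}$ and $L^{\mathrm{S};1}_{\bm\xi}=\sum_i v_{\bm\theta;g,i}\,\SLDdual{i}$; the equivalence of the locally-unbiased-for-$g$ and locally-unbiased-for-$\xi_1$ conditions is exactly the chain-rule bookkeeping you describe, and your exclusion of the degenerate case $\bm{v}_{\bm\theta;g}=0$ is appropriate (there a constant estimator trivially attains zero error). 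What your reduction buys is economy — nothing beyond Theorem \ref{thmAN2} and standard covariance of the SLD quantities is needed, and the coordinate-independence of the result is manifest; what the paper's (cited) direct route buys is that it avoids introducing an arbitrary local completion altogether and generalizes immediately to vector-valued $\bm{g}$ with $K<d$, where no such completion into an invertible reparametrization of the interest block is available in the same simple way.
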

 
\subsection{Multi-copy setting}
Finally, let us discuss the one-parameter estimation problem in the presence of nuisance parameters 
for the multi-copy setting. This sets the ultimate precision limit upon estimating the parameter of interest. e
The RHS of Theorem \ref{thmAN2} of the $n$-copy case 
is just the $n$ times of the RHS of the one-copy case. 
This property shows that
any collective POVM on the $n$-copy case does not improve the bound 
$n C_{\bm{\theta};\mathrm{I}}[W_{\rm I}=1,\cM]$.
To see this property in a different viewpoint, we can also
explicitly evaluate the minimization in the Holevo bound \eref{N-holevo-bound} in the presence of nuisance parameters. 
Since there is no imaginary part appearing in this expression, we only need to evaluate the minimization
\be
C_{\bm{\theta};\mathrm{I}}^H[W_{\rm I}=1,\cM]:=\min_{X} \Tr{\rho_{\bm{\theta}}X^2 },
\ee
over all Hermitian matrices $X$ that satisfy: 
i) $\tr{\del_1 \rho_{\bm{\theta}}X }=1$ and ii) $\tr{\del_i \rho_{\bm{\theta}}X }=0$ ($i=2,3,\ldots,d$). 
This minimization can be solved explicitly as
\be
C_{\bm{\theta};\mathrm{I}}^H[W_{\rm I}=1,\cM]=\sldQFIinv{1,1}. 
\label{e92}
\ee
To show this, first substitute $X=\SLDdual{1}+M$ with an Hermitian matrix $M$ 
satisfying $\tr{\del_i \rho_{\bm{\theta}} M }=0$ ($i=1,2,\ldots,d$). 
The function to be minimized is then 
\be\label{Eq:optM}
\Tr{\rho_{\bm{\theta}}(\SLDdual{1}+M)^2 }=\Tr{\rho_{\bm{\theta}}(\SLDdual{1})^2}+\Tr{\rho_{\bm{\theta}}(M)^2}. 
\ee
Here, the cross terms vanish due to the condition 
$\tr{\del_i \rho_{\bm{\theta}} M }=\tr{\rho_{\bm{\theta}}\SLD{i}M+\SLD{i}\rho_{\bm{\theta}}M }=0$ 
and $\SLDdual{1}=\sum_{j=1}^d \sldQFIinv{1,j} \SLD{j}$. 
Therefore, the above minimization \eref{Eq:optM} yields 
$C_{\bm{\theta};\mathrm{I}}^H[W_{\rm I}=1,\cM]=\tr{\rho_{\bm{\theta}} (\SLDdual{1})^2}=\sldQFIinv{1,1}$ 
with the minimizer $M=0$. 

%%%%%%%%%%%%%%%%%%%%%%

\subsection{Special case}\label{sec:1para5}
We analyze the optimal POVM in Theorem \ref{thmAN2} 
and compare it with the optimal one for the case of without the nuisance parameters. 
Consider the spectral decomposition of two operators, $\SLD{1}$ and $\SLDdual{1}$: 
\begin{align}
\SLD{1}&=\ds\sum_{x\in\cX_1}\lambda_{\bm{\theta};1}(x) E_{\bm{\theta};1}(x),\\
\SLDdual{1}&=\ds\sum_{x\in\cX^1}\lambda_{\bm{\theta}}^{1}(x) E_{\bm{\theta}}^1(x). 
\end{align} 
Define the following projections measurements: 
\begin{align} 
\Pi_* &=\ds\{ E_{\bm{\theta};1}(x)\}_{x\in\cX_1},\\
\Pi_{\theta_{\rm{I}}}^*&=\ds\{ E_{\bm{\theta}}^1(x)\}_{x\in\cX^1}. 
\end{align}
%Since all the SLD operators $\SLD{i}$ are linearly independent under our assumptions, 
In generally, the optimal PVM $\Pi_* $ is no longer optimal for estimating $\theta_1$ in the presence of the nuisance parameters. This is because one faces two-different parametric models. 
It is straightforward to see that two measurements $\Pi_*$ and $\Pi_{\theta_{\rm{I}}}^*$ become identical 
at $\bm{\theta}$, if and only if $\SLD{1}$ and $\SLDdual{1}$ commute with each other. 
When the model is quasi-classical, a stronger commutation relation 
$[L_{\bm{\theta};1}^{\rm{S}}\,,\, L_{\bm{\theta'}}^{\rm{S};1}]=0$ holds 
for all $\bm{\theta},\bm{\theta}'$, since all SLDs commute with each other. 
Furthermore, SLDs are $\bm{\theta}$-independent. 
A non-trivial example, which is important, is when the SLD Fisher information matrix is block diagonal 
with respect to the partition $(\theta_1,\bm{\theta}_{\rm{N}})$.  
When this global parameter orthogonality condition is satisfied, ${(\sldqfi{\bm{\theta};1,1})}^{-1}=\sldQFIinv{1,1}$ holds. 

First, when all the nuisance parameters are known, we can perform 
the optimal PVM $\Pi_*$ whose Fisher information satisfies $J_{\bm{\theta}}[\Pi_*]=\sldqfi{\bm{\theta};1,1}$. 
Therefore, we can attain the SLD CR bound. 
In the presence of nuisance parameters, however, this PVM is no longer optimal in general. 
The optimal PVM for estimating the parameter of interest is $\Pi_{\theta_{\rm{I}}}^*$ according to Theorem \ref{thmAN2}. 
Since we have information loss \eref{qinfoloss} due to the nuisance parameters as
\begin{equation}\label{qinfoloss_1para}
\Delta C_{\bm{\theta};\mathrm{I}}[W_{\mathrm I}=1|\bm{\theta}_{\rm N}]
=\Delta C^H_{\bm{\theta};\mathrm{I}}[W_{\mathrm I}=1|\bm{\theta}_{\rm N}]
%[W_{\mathrm{I}}|\bm{\theta}_{\mathrm{N}}]
%\Delta C_{\theta_{\rm{I}}}
=\sldQFIinv{1,1}-(\sldqfi{\bm{\theta};1,1})^{-1}\ge0, 
\end{equation}
where the equality holds if and only if $\theta_{\rm{I}}=\theta_1$ and $\bm{\theta}_{\rm{N}}$ 
are orthogonal with respect to the SLD Fisher information matrix, the effect of nuisance parameters is not negligible. 

Applying the global parameter orthogonalization method in section \ref{sec:QpoGlobal}, 
we can always make $\theta_{\rm{I}}=\theta_1$ orthogonal to the rest $\bm{\xi}_{\rm{N}}$. 
Thus, by combining properties of global parameter orthogonalization method, we can show that 
the inverse of the partial SLD Fisher information $J^{\mathrm{S}}_{\bm{\theta}}(\mathrm{I}|{\mathrm{N}})$ is the precision limit. 
This also shows an alternative proof for Theorem \ref{thmAN2}. 

When we further consider three different estimation strategies discussed in section \ref{sec:Qnui1-2}, 
parameter dependence on the optimal PVM should also be examined. 
It is clear that the following sufficient condition suppresses effects of the nuisance parameters completely. 
In this case, the optimal estimation strategy is the repetitive one. 
\begin{align} \nonumber
\mathrm{(i)}&\ \theta_{\rm{I}}\mbox{ is globally orthogonal to } \bm{\theta}_{\rm{N}}\\ \label{pocond1}
\mathrm{(ii)}&\ \Pi_{\theta_{\rm{I}}}^*\mbox{ is independent of }\bm{\theta}\\ \nonumber
&\Leftrightarrow \forall x\in\cX^1, E_{\bm{\theta}}^1(x)\mbox{ is independent of }\bm{\theta}; 
\end{align}
To demonstrate usefulness of the global parameter orthogonalization method, 
consider the quantum exponential family \eqref{Qe-family}. 
The SLDs are calculated as
\be
\SLD{i}=F_i-\del_i\psi_{\bm{\theta}} I. 
\ee
From this expression, we see that the projector onto a subspace of each spectrum is independent of the parameters, 
since $F_i$ is $\bm{\theta}$-independent and the second term is irrelevant. 
All $F_i$ are mutually commutative by definition, and hence they can be simultaneously diagonalizable. 
Any linear combination of the SLDs is also $\bm{\theta}$-independent. 
Note that the SLD operator for the parameter of interest in the new parametrization is expressed as a linear combination 
of the SLDs \eqref{sldxi}.  
We thus see that the optimal PVM about the parameter of interest is independent of the parameters $\bm{\theta}$. 
Therefore, the above two conditions in \eqref{pocond1} are satisfied to conclude that we can attain the precision limit set by 
the partial SLD Fisher information within the repetitive strategy (A1). 

\subsection{Related works} 
Finally, we conclude with a brief discussion on another method of treating one-parameter estimation. 
In \cite{wsu10}, Watanabe {\it et al.}\ showed an optimal estimation strategy for estimating the expectation value 
of an arbitrary observable in the presence of a non-parametric quantum noise. 
A crucial assumption in their work is that one has no prior information about the state under consideration. 
This is to consider a full parameter model ($d=(\dim \cH)^2-1$) as a parametric model, which is D-invariant. 
In addition, the noise model was assumed to be known and was not treated as nuisance. 
Then, the problem can be formulated as estimating a single parameter of interest, which is a linear combination 
of these parameters, whereas the rest of the parameters are nuisance parameters.   
Compared with \cite{wsu10}, our formulation is more general and is applicable for arbitrary model (see the example in Subsection \ref{subsec-example-ob}). 
Another observation is that \cite{wsu10} only proves optimality within separable POVMs. 
In fact, this optimality can also be shown within all possible POVMs as we proved in this review. 

In a recent work \cite{tsang19}, Tsang proposed a framework called quantum semiparametric estimation, 
which offers an alternative approach to determine the precision bound of estimating a single parameter in the presence of (infinitely) many nuisance parameters. 
Compared to the Cram\'er-Rao approach in our review paper, the semiparametric estimation approach does not follow the procedure based on the inverse matrix of the quantum Fisher information matrix.
He derived a lower bound for mean square error 
with under the unbiased condition 
from a geometrical viewpoint.
Although his obtained bound (see \cite[Theorem 6]{tsang19}) is the same as our bound \eqref{vCRbound} in Section 5.2,
his achievement is different from ours in the following way.
(i) For estimators, he imposed the unbiased condition, which are rather unrealistic as already mentioned in Section \ref{sec:Qnui1}
while we consider the locally unbiased condition.
Hence, he did not show the achievability 
nor how to construct the optimal measurement 
while we show the achievability
with the construction of the optimal one under the locally unbiased condition.
(ii) While he characterized the lower bound, he did not gave an explicit form of the lower bound.
But, we give a concrete calculation formula for the lower bound.
(iii) His method can be applied to the case with
infinitely many nuisance parameters while our method can be applied to the case with 
a finite number of nuisance parameters. 
Nevertheless, the semiparametric approach has indeed brought new insights into quantum estimation 
in the presence of nuisance parameters and is worthy of more investigation. 
For instance, a hybrid approach combining the advantages of both aforementioned approaches would definitely be desired in many applications.
 
%%%%%%%%%%%%%%%%%%%%%%%%%%%%%%%%%%%

\section{Examples}\label{sec:example}
In the following section, we give examples to show the effects of nuisance parameters in quantum estimation, and show how to derive quantum CR bounds in the presence of nuisance parameters.
\subsection{A noisy qubit clock.}\label{subsec-example-qubit}
We first revisit the example in the introduction and show how it can be tackled using the results in one-parameter estimation with nuisance parameters.
Recall that the task is to estimate time from identical copies of a two-level atom with known Hamiltonian, 
which is assumed to be $\sigma_z/2=-(1/2)(\ket{0}\bra{0}-\ket{1}\bra{1})$ for simplicity. In the meantime, the qubit also suffers from dephasing noise, and thus its state at time $t$ is 
$\rho_{t,\gamma}=e^{-\gamma t}\ket{\psi_t}\bra{\psi_t}+\left(1-e^{-\gamma t}\right)\frac{I}{2}$
where  $\ket{\psi_t}:=(1/\sqrt{2})(\ket{0}+e^{-it}\ket{1})$ and $\gamma\ge 0$ is the decay rate.  
For the state $\rho_{t,\gamma}$, the SLD quantum Fisher information matrix can be evaluated as
\begin{align}
J_{t,\gamma}^{\rm S}=\left(\begin{array}{cc}e^{-2\gamma t}+\frac{\gamma^2}{e^{2\gamma t}-1} & \frac{\gamma t}{e^{2\gamma t}-1}\\
\frac{\gamma t}{e^{2\gamma t}-1} & \frac{t^2}{e^{2\gamma t}-1}\end{array}\right).
\end{align}
According to Theorem \ref{thmAN2}, the optimal measurement has an error equal to
\begin{align}
J_{t,\gamma}^{{\rm S};1,1}=e^{2\gamma t}.
\end{align}
One can see from this example the effect of the nuisance parameter $\gamma$, since this value is strictly larger than the inverse of $\left(J^{\rm S}_{t,\gamma}\right)_{1,1}$. In addition, we note that the choice of the nuisance parameter is not unique. Indeed, we can perform the change of variables $(t,\gamma)\to(t,p)$, where $p:=(1+e^{-\gamma t})/2$ is the mixedness of the qubit. In the new coordinate, the qubit state becomes $\rho_{t,p}=p\ket{\psi_t}\bra{\psi_t}+\left(1-p\right)\ket{\psi_t^\perp}\bra{\psi_t^\perp}$, 
with $\ket{\psi_t^\perp}$ being orthogonal to $\ket{\psi_t}$. The SLD quantum Fisher information for $(t,p)$ can be evaluated as
\begin{align}
J_{t,p}^{\rm S}=\left(\begin{array}{cc} (2p-1)^2 & 0\\
0& \frac{1}{p(1-p)}\end{array}\right).
\end{align}
One can see from the above matrix that this choice of the nuisance parameter makes it orthogonal to the parameter of interest, as discussed in Section \ref{sec:QpoLocal}.
One can also easily check that 
\begin{align}
J^{{\rm S},1,1}_{t,p}=J^{{\rm S},1,1}_{t,\gamma},
\end{align}
since the choice of nuisance parameters does not affect the precision bound.

\subsection{Estimating a generic observable of a $d$-dimensional system}\label{subsec-example-ob}
The next example is to estimate an observable of a generic qudit state, which has been analyzed by Watanabe et al. in \cite{wsu10,wsu11}. A generic qudit state can be expressed as
\begin{align}
\rho_{\bm{\bm{\theta}}}=\frac{I}{d}+{\bm{\bm{\theta}}}\cdot{\bm H},
\end{align}
where ${\bm H}=(H_1,H_2,\dots,H_{d^2-1})^{\rm T}$ is a vector of traceless Hermitians satisfying $\Tr{H_iH_j}=\delta_{i,j}$ for any $i$ and $j$.
The Hermitians $\{H_j\}$ form a basis for traceless operators, and a generic observable $A$ to measure, assumed without loss of generality to be traceless \footnote{Notice that measuring $A$ is essentially the same as measuring $A-\Tr AI$.}, can be thus expressed as $A={\bm v}\cdot{\bm H}$. 
The parameter of interest  is then the expectation $\expectn{A}:=\sum_i v_i\Tr{\rho_{\bm{\theta}}H_i}$ of $A$ with respect to the qudit state $\rho_{\bm{\bm{\theta}}}$.

For the generic qudit model, the inverse SLD quantum Fisher information matrix can be evaluated as
\begin{align}\label{qudit-SLD}
J_{{\bm{\bm{\theta}}}}^{{\rm S};i,j}=\left\langle\frac{H_i H_j+H_j H_i}{2}\right\rangle-\expectn{H_i}\expectn{H_j}.
\end{align}
\if0
On the other hand, the inverse RLD quantum Fisher information matrix can be evaluated as
\begin{align} 
J_{{\bm{\bm{\theta}}}}^{{\rm R};i,j}=\left\langle H_i H_j\right\rangle-\expectn{H_i}\expectn{H_j}.
\end{align}
Defining the quantum Fisher information difference as the Hermitian matrix  
\begin{align}\label{qudit-QFIdiff}
\left(\Delta J^{-1}_{\bm {\bm{\theta}}}\right)_{i,j}:=-\frac{i}{2}\expectn{[H_i,H_j]},
\end{align}
we can rewrite the RLD quantum Fisher information matrix as
\begin{align}\label{qudit-RLD}
J_{{\bm{\bm{\theta}}}}^{{\rm R};i,j}=J_{{\bm{\bm{\theta}}}}^{{\rm S};i,j}+i\left(\Delta J^{-1}_{\bm {\bm{\theta}}}\right)_{i,j}.
\end{align}
\fi
With the above discussion, we can now analyze the performance of different estimation strategies (see Section \ref{sec:Qnui1-2}). 
 If one adopts the adaptive strategy [see strategy (A2) in Section \ref{sec:Qnui1-2}], the minimum achievable MSE is given by the SLD bound in Theorem \ref{thmAN}, which reads 
\begin{align}
\min_{\hat{\Pi}\in A2}\lim_{n\to\infty}nV_{\expectn{A}}[\hat{\Pi}^{\otimes n}]=
{\bm v}^{\rm T}\left(J_{{\bm{\bm{\theta}}}}^{\rm S}\right)^{-1}{\bm v}
\end{align} 
where the minimization taken over all adaptive strategies (i.e.\ the set $A2$).
As shown in \eqref{e92},
this value equals
the minimum achievable MSE over all strategies, including those that require collective measurements on all copies of the state [see strategy (A3) in section\ \ref{sec:Qnui1-2}].
Fortunately, the variance of the observable $A$, i.e.\ $\expectn{A^2}-\expectn{A}^2$,
equals ${\bm v}^{\rm T}\left(J_{{\bm{\bm{\theta}}}}^{\rm S}\right)^{-1}{\bm v}$.
Hence, when we repeat the measurement of the observable $A$,
this bound can be attained. That is, 
this bound can be attained even in repetitive strategy 
[see strategy (A1) in section\ \ref{sec:Qnui1-2}].
That is, there is no difference among these three settings. 

\if0
On the other hands, the generic qudit model is D-invariant, and thus the Holevo bound equals the RLD bound. 

The minimum achievable MSE over all strategies, including those that require collective measurements on all copies of the state [see strategy (A3) in section\ \ref{sec:Qnui1-2}], is given by the RLD bound (\ref{Eq:rld_crbound}), which reads  

\begin{align}
\min_{\hat{\Pi}^{(n)}\in A3}\lim_{n\to\infty}nV_{\expectn{A}}[\hat{\Pi}^{(n)}]&={\bm v}^{\rm T}\Re\left(J_{{\bm{\bm{\theta}}}}^{\rm R}\right)^{-1}{\bm v}+\left|{\bm v}^{\rm T}\Im\left(J_{{\bm{\bm{\theta}}}}^{\rm R}\right)^{-1}{\bm v}\right|\nonumber\\
&={\bm v}^{\rm T}\left(J_{{\bm{\bm{\theta}}}}^{\rm S}\right)^{-1}{\bm v}+\left|{\bm v}^{\rm T}\Delta J^{-1}_{\bm {\bm{\theta}}}{\bm v}\right|.
\end{align}

Therefore, the gap between the repetitive strategy and the optimal collective strategy is 
\begin{align}
\Delta_{\expectn{A}}:=\min_{\hat{\Pi}\in A1}\lim_{n\to\infty}nV_{\expectn{A}}[\hat{\Pi}^{\otimes n}]-\min_{\hat{\Pi}^{(n)}\in A3}\lim_{n\to\infty}nV_{\expectn{A}}[\hat{\Pi}^{(n)}]%\nonumber\\
=\left|{\bm v}^{\rm T}\Delta J^{-1}_{\bm {\bm{\theta}}}{\bm v}\right| .
\end{align}  
One can see from the above expression that usually $\Delta_{\expectn{A}}>0$, and there is a non-trivial advantage of adapting a collective strategy.
\fi

\subsection{Multiparameter estimation with nuisance parameters: a qubit case.}
As the last example we consider estimation of two parameters of a qubit state. Since a qubit model consists of three real parameters, the last one of them should be regarded as a nuisance parameter. For more cases regarding qubit estimation with nuisance parameters, we refer the readers to \cite{jsNuipaper}.

A generic qubit can be expressed as $\rho_{\bm{\bm{\theta}}}=\frac{1}{2}(I+\theta_1\sigma_1+\theta_2\sigma_2+\theta_3\sigma_3)$
with $\{\sigma_i\}_{i=1,2,3}$ being the Pauli matrices and the vector of parameters satisfying the constraint $\|\bm{\bm{\theta}}\|\le 1$.
We consider the first two parameters as parameters of interest, i.e.\ ${\bm{\bm{\theta}}}_{\rm I}=(\theta_1,\theta_2)$. 
%Physically, since $\theta_3$ captures the mixedness of the state, this case is relevant to the scenario of pure qubit state tomography under depolarizing noise.

The inverse of the SLD Fisher information matrix can be evaluated as
\begin{align}
\left(J^{\rm S}_{\bm{\bm{\theta}}}\right)^{-1}=\left(\begin{array}{ccc}1-\theta_1^2 & -\theta_1\theta_2 & -\theta_1\theta_3\\ -\theta_1\theta_2 &1-\theta_2^2 & -\theta_2\theta_3 \\ -\theta_1\theta_3 & -\theta_2\theta_3 & 1-\theta_3^2\end{array}\right).
\end{align}
Next, we perform parameter orthogonalization by switching to a new coordinate $\bm{\bm{\xi}}$, defined by $\theta_1={\xi}_1$, $\theta_2={\xi}_2$, and $\theta_3={\xi}_3\sqrt{1-{\xi}_1^2-{\xi}_2^2}$. Under the new coordinate $\bm {\bm{\xi}}$, the inverse of the SLD Fisher information matrix has the form
\begin{align}
\left(J^{\rm S}_{\bm{\bm{\xi}}}\right)^{-1}=\left(\begin{array}{ccc}1-{\xi}_1^2 & -{\xi}_1{\xi}_2 & 0\\ -{\xi}_1{\xi}_2 &1-{\xi}_2^2 & 0 \\ 0 & 0 & \frac{1-c(\xi_3)^2}{\dot{c}(\xi_3)^2(1-(\xi_1)^2-(\xi_2)^2)}\end{array}\right),
\end{align}
where $c(\xi_3)$ is an arbitrary differentiable function satisfying the condition $\forall\xi_3,\dot{c}(\xi_3):=dc(\xi_3)/d\xi_3\neq0$. 
After the parameter orthogonalization, the estimation precision for the parameters of interest 
the depends only on the following submatrix
\begin{align}
J^{{\rm S};{\rm I},{\rm I}}_{\bm{\xi}}=
\left(\begin{array}{cc}1-{\xi}_1^2 & -{\xi}_1{\xi}_2  \\ -{\xi}_1{\xi}_2 &1-{\xi}_2^2    \end{array}\right)
=J^{{\rm S};{\rm I},{\rm I}}_{\bm{\theta}},
\end{align}
and the precision limit under any separable measurement (A1 and A2) can be obtained by 
setting the weight matrix in the Gill-Masser bound \eqref{qcrbound3} as 
\[
W=\left(\begin{array}{cc}W_{\rm I} & \begin{array}{c}0 \\0\end{array} \\ 
\begin{array}{cc}0 & 0\end{array} & 0\end{array}\right). 
\]
Owing to the global parameter orthogonalization, this is equivalent to substituting the partial SLD Fisher information matrix 
$J^{\mathrm{S}}_{\bm{\theta}}({\mathrm{I}}|{\mathrm{N}})=(J^{{\rm S};{\rm I},{\rm I}}_{\bm{\theta}})^{-1}$ into 
the Nagaoka bound [$\dim \cH=2$ and $d=2$ in \eqref{qcrbound3}]. 
Explicitly, we have 
\begin{equation}
C^{\rm{N}}_{\bm{\theta};\rm{I}}[W_{\rm{I}},\cM]=\Tr{W_{\rm{I}} 
\left(J^{\mathrm{S}}_{\bm{\theta}}({\mathrm{I}}|{\mathrm{N}})\right)^{-1}}
+2\sqrt{\det W_{\rm{I}}} \sqrt{\det\left(J^{\mathrm{S}}_{\bm{\theta}}({\mathrm{I}}|{\mathrm{N}})\right)^{-1}} . 
\end{equation}

If we consider all possible POVMs to attain the ultimate precision limit for estimating $\theta_1,\theta_2$, 
we can show that the Holevo bound for the parameters of interest \eqref{N-holevo-bound} is given by
\begin{equation}
C_{\bm{\theta};\mathrm{I}}^H[W_{\mathrm{I}},\cM]=
\Tr{W_{\rm{I}} J^{{\rm S};{\rm I},{\rm I}}_{\bm{\theta}}} +2\sqrt{\det W_{\rm{I}}} |\theta_3| .
\end{equation}
Therefore, estimation error by performing collective POVMs (A3) can be lowered by the amount 
\be
C^{\rm{N}}_{\bm{\theta};\rm{I}}[W_{\rm{I}},\cM]- C_{\bm{\theta};\mathrm{I}}^H[W_{\mathrm{I}},\cM]
= 2\sqrt{\det W_{\rm{I}}}\left(  \sqrt{\det J^{{\rm S};{\rm I},{\rm I}}_{\bm{\theta}}}-  |\theta_3|\right). 
\ee
This is positive, since $\sqrt{\det J^{{\rm S};{\rm I},{\rm I}}_{\bm{\theta}}}=\sqrt{1-(\theta_1)^2-(\theta_2)^2}> |\theta_3|$ 
holds for any mixed-state model.  

Last, we discuss information loss \eqref{qinfoloss} in the presence of the nuisance parameter $\theta_{\rm N}=\theta_3$. 
The Holevo bound for the general two-parameter qubit-state model $\cM'=\{\rho_{\rm \theta}\in\cM|\bm{\theta}=(\theta_1,\theta_2)\in\Theta' \}$ 
for a fixed $\theta_3$ is given by the formula \eqref{Hbound_qubit}. 
For the specific parametrization under consideration, 
we can use results in \cite[section V C]{js16} to evaluate information loss as 
\begin{align}
&\Delta C^H_{\bm{\theta};\mathrm{I}}[W_{\mathrm{I}}|\bm{\theta}_{\mathrm{N}}]
=C^H_{\bm{\theta};\mathrm{I}}[W_{\mathrm{I}},\cM]-C^H_{\bm{\theta};\mathrm{I}}[W_{\mathrm{I}},\cM']\\
&=\begin{cases}
(\theta_1\,\theta_2)\left[ \frac{\Tr{W_{\rm I}}}{1-\theta_3^2}I- W_{\rm I} \right]\binom{\theta_1}{\theta_2}
+2\sqrt{\det W_{\rm I}}|\theta_3|\left[1- \sqrt{\frac{1-s_{\bm \theta}^2}{1-\theta_3^2}}\right] & 
\mbox{ for }B_{\bm \theta}[W_{\rm I}]\ge0\\[2ex]
\frac{\theta_3^2}{1-\theta_3^2}(\theta_1\,\theta_2)W_{\rm I} \binom{\theta_1}{\theta_2}
+2|\theta_3|\sqrt{\det W_{\rm I}} \left[ 1-\frac12\,\frac{(1-s_{\bm \theta}^2)|\theta_3|\sqrt{\det W_{\rm I}}}{(\theta_1\,\theta_2)\left[ \Tr{W_{\rm I}}I- W_{\rm I} \right]\binom{\theta_1}{\theta_2}} \right]& \mbox{ for }B_{\bm \theta}[W_{\rm I}]<0
\end{cases},  
\end{align}
where $s_{\bm \theta}^2=\theta_1^2+\theta_2^2+\theta_3^2$ is the square of the Bloch vector and 
$B_{\bm \theta}[W_{\rm I}]$ is defined by
\[
B_{\bm \theta}[W_{\rm I}]:=- \frac{1}{1-\theta_3^2} (\theta_1\,\theta_2)\left[ \Tr{W_{\rm I}}I- W_{\rm I} \right]\binom{\theta_1}{\theta_2}+\sqrt{\frac{1-s_{\bm \theta}^2}{1-\theta_3^2}}|\theta_3|\sqrt{\det W_{\rm I}}.
\] 
In contrast to the single parameter estimation problem in the presence of nuisance parameters \eqref{qinfoloss_1para}, 
information loss is much complex even in this simple qubit model. 
It is worth exploring the structure of information loss to gain a deeper insight into effects of the nuisance parameters 
in quantum estimation theory.

\section{Conclusion and open questions}\label{sec:conclusion}

As discussed in this review, the nuisance parameter problem is a common and practical problem. 
We have derived the ultimate precision limit for the parameters of interest in the presence of the nuisance parameters. 
This bound is not expressed in a closed form except when there is only one single parameter of interest, thus it is hard to understand 
the effects of the nuisance parameters in a simple picture. 
An important concrete question is to derive the necessary and sufficient condition for the zero loss of information \eqref{qinfoloss}. 
Classically, this condition is expressed as the orthogonality condition with respect to the classical Fisher information matrix. 
The quantum case, on the other hand, is much more complicated and deserves further exploration. 

Another important aspect of the nuisance parameter problem is the trade-off between the error of estimating the parameters of interest and the error of estimating the nuisance parameters. This trade-off relation 
is particularly important when dealing with the finite sample case \cite{jsNuipaper}. 
Noting that an optimal POVM minimizing the mean-square error for the parameters of interest depends on 
unknown nuisance parameters, we cannot completely neglect the error of estimating the nuisance parameters. 
The question is then how much knowledge one should acquire on the value of the nuisance parameters for a given sample size.
The nuisance parameter problem also appears in other statistical inference problems and quantum control theory \cite{d2007introduction,wiseman_milburn_2009,jacobs2014quantum}. 
Proper extensions of statistical methods known in classical statistics will be needed to address these problems.

In this review, we have introduced the framework and tools of treating nuisance parameters in quantum state estimation. On this basis, it is natural to consider nuisance parameters in quantum metrology \cite{giovannetti2006quantum,glm11}, which is a vigorous research direction concerning estimating parameters from physical processes instead of quantum states. 
In quantum metrology, the parameters to be estimated can be encoded in physical processes ranging from multiple uses of noiseless \cite{giovannetti2006quantum} and noisy \cite{huelga1997improvement,escher2011general,ddkg12} gates to complex processes with memories \cite{huelga1997improvement,matsuzaki2011magnetic,chin2012quantum,macieszczak2015zeno,bpla19,yang2019}.
Practically, all of these processes are, to some extent, subject to noises depending on unknown parameters that can be treated as nuisance parameters. Suitable extensions of the tools presented in this review will, therefore, be able to quantify the effects of nuisance parameters in quantum metrology. 
Researches in this direction will be timely and promising, as quantum metrology is likely to become one of the earliest applicable quantum technologies.

\section*{Notes added in this version}
After our accepted paper went through the proof, we noticed that \cite{tsang19} was updated as \cite{tsang_v6} with additional results. 
To give a comparison to their latest results, we added two remarks, Remark \ref{remark1} and Remark \ref{remark2}, in this version. 
Newly added Sec. VIII of \cite{tsang_v6} corresponds to section \ref{sec:Qfunction} of this paper in the parametric case. 
For the sake of completeness, Theorem 9 of \cite{tsang_v6} was added as a side remark (Remark \ref{remark1}) to show that it is a simple consequence of our formalism based on the standard argument. 
We also provide additional supplement as Remark \ref{remark2} for the existence 
of the minimum in \eqref{N-holevo-bound_function} for readers' convenience. 
We would like to thank Dr.~Mankei Tsang for additional remarks on our results. 

%%%%%%%%%%%%%%%%%%%%%%%%%%%%%%%%%%%%%%%%%
\section*{Acknowledgement}
MH is grateful to Mr.\ Daiki Suruga and Mr.\ Seunghoan Song for providing helpful comments for this paper. 
JS is partly supported by JSPS Grant-in-Aid for Scientific Research (C) No. 17K05571. 
YY is supported by the Swiss National Science Foundation via the National Center for Competence in Research ``QSIT" as well as via project No.\ 200020\_165843.
MH was supported in part by JSPS Grant-in-Aid for Scientific Research (A) No.17H01280 and for Scientific Research (B) No.16KT0017, and Kayamori Foundation of Informational Science Advancement.

\section*{References}
\bibliographystyle{iopart-num}
\bibliography{ref_arxiv}
%%%%%%%%%%%%%%%%%%%%%%%%%%%%%%%%%%%%%%%%%%
\appendix %%%%%%%%%%%%%%%%%%%%%%%%%%%%%%%%%%%%%%%%%%
%%%%%%%%%%%%%%%%%%%%%%%%%%%%%%%%%%%%%%%%%%

\section{Supplemental materials for classical statistics}\label{sec:AppCstat}
\subsection{Locally unbiased estimators} \label{sec:AppCstat1}
In this appendix, we give more detail discussions on the locally unbiased estimators and the Cram\'{e}r-Rao (CR) inequality. 
For a given $d$-parameter model $\cM=\{p_{\bm{\theta}}\,|\,{\bm{\theta}}\in\Theta\}$, consider its 
$n$th iid extension. 
When considering the asymptotic theory of parameter estimation problems, 
one often considers the asymptotically unbiased estimators. This is defined by 
requiring a sequence of estimators to be the locally unbiased in the asymptotic limit $n\to\infty$. 
Importantly, there always exists such an asymptotically unbiased estimator, e.g., the MLE. 

One of the most fundamental results in the parameter estimation theory is the following 
generalized CR inequality: 
Given an i.i.d. (regular) model, the MSE matrix 
of {\bf any estimator} ${\hat{{\bm{\theta}}}}$ obeys the matrix inequality 
\be\label{crgen}
V_{\bm{\theta}}^{(n)}[{\hat{{\bm{\theta}}}}]\ge \frac{1}{n} B_{\bm{\theta}} J_{\bm{\theta}}^{-1} (B_{\bm{\theta}})^{\mathrm T}+ \bm{b}_{\bm{\theta}} (\bm{b}_{\bm{\theta}})^{\mathrm T},
\ee
where $J_{\bm \theta}[\cM]$ is the Fisher information matrix about the model $\cM$ and 
\begin{align*}
B_{\bm{\theta}}[{\hat{{\bm{\theta}}}}]:= \left[\del_j \Eof{{\hat{\theta}_i}(X^n)}\right],\quad
\bm{b}_{\bm{\theta}}[{\hat{{\bm{\theta}}}}]:=\left[ \Eof{{\hat{\theta}_i}(X^n)}-\theta_i \right]^{\mathrm T},
\end{align*}
are called a derivative of bias, or bias matrix, ($d\times d$ matrix) and a bias (vector), respectively.  
Importantly, the biased terms depend on the estimation error in general. 
Since the locally unbiased estimators satisfy $B_{\bm{\theta}}[{\hat{{\bm{\theta}}}}]=I_d$ (the identity matrix) 
and $\bm{b}_{\bm{\theta}}[{\hat{{\bm{\theta}}}}]=\bm{0}$, the CR inequality 
simplifies to $V_{\bm{\theta}}^{(n)}[{\hat{{\bm{\theta}}}}]\ge (J_{\bm{\theta}}[\cM])^{-1} /n$ for any locally unbiased estimator. 

We next turn our attention to the locally unbiasedness condition for the nuisance parameter problems, 
i.e., ${\bm{\theta}}=(\bm{\theta}_{\mathrm{I}},\bm{\theta}_{\mathrm{N}})$. 
Since we are only interested in estimating the parameter of interest $\bm{\theta}_{\mathrm{I}}$, 
we should only require the locally unbiasedness condition for $\bm{\theta}_{\mathrm{I}}$, 
that is defined as follows. See definition \eqref{lu_cond_class}. 
An estimator $\hat{{\bm{\theta}}}_{\mathrm{I}}$ is locally unbiased estimator for $\bm{\theta}_{\mathrm{I}}$, if 
\be \label{app_lucond}
\Eof{{\hat{\theta}}_i(X^n)}=\theta_i\ \mathrm{and}\ \del_j \Eof{{\hat{\theta}}_i(X^n)}=\delta_{i,j}
\ee
are satisfied for $\forall i\in \{1,\dots,\dI\}$ and $\forall j\in \{1,\dots,d\}$ at a point ${\bm{\theta}}$. 
These conditions are expressed in terms of the biased matrix and bias vector as
\[
B_{\bm{\theta}}=\left(\begin{array}{cc}{I_{\dI}} & {0} \\B_1 & B_2\end{array}\right),\quad
\ \bm{b}_{\bm{\theta}}= \left(\begin{array}{c}{0}\\{ b_3}\end{array}\right),
\]
with $B_1,B_2$ nonzero matrices and $b_3$ a non-zero vector in general.   

Set the bias matrix and vector as the following block forms:
\[
B_{\bm{\theta}}=\left(\begin{array}{cc}
B_{\mathrm{I}} &B_{\mathrm{I,N}}  \\
B_{\mathrm{N,I}} & B_{\mathrm{N}}\end{array}\right),\quad
\ \bm{b}_{\bm{\theta}}= \left(\begin{array}{c} \bm{b}_{\mathrm{I}}\\ \bm{b}_{\mathrm{N}}\end{array}\right),
\]
and define the projector onto the subspace of the parameter of interest by 
\[
P_{\mathrm{I}}=\left(\begin{array}{cc}I_{\dI} &0 \\0&0\end{array}\right). 
\]
The CR inequality \eqref{crgen} after projecting onto the relevant subspace becomes
\begin{align}
V_{\bm{\theta};{\mathrm{I}}}^{(n)}[{\hat{{\bm{\theta}}}}]&=P_{\mathrm{I}} V_{\bm{\theta};{\mathrm{I}}}^{(n)}[{\hat{{\bm{\theta}}}}] P_{\mathrm{I}}%\nonumber\\
\ge \frac{1}{n} P_{\mathrm{I}} B_{\bm{\theta}} J_{\bm{\theta}}[p_{\bm{\theta}}]^{-1} (B_{\bm{\theta}})^{\mathrm T}P_{\mathrm{I}}+ P_{\mathrm{I}} \bm{b}_{\bm{\theta}} (\bm{b}_{\bm{\theta}})^{\mathrm T}P_{\mathrm{I}} \nonumber\\
&=\frac{1}{n}\{ B_{\mathrm{I}} 
J_{\bm{\theta}}^{\mathrm{I},\mathrm{I}}B_{\mathrm{I}}^{\mathrm T} +B_{\mathrm{I,N}}J_{\bm{\theta}}^{\mathrm{N},{\mathrm{I}}}B_{\mathrm{I}}^{\mathrm T}
+B_{\mathrm{I}}J_{\bm{\theta}}^{\mathrm{I},{\mathrm{N}}}B_{\mathrm{I,N}}^{\mathrm T} 
+ B_{\mathrm{I,N}}J_{\bm{\theta}}^{\mathrm{N},{\mathrm{N}}}B_{\mathrm{I,N}}^{\mathrm T}\} +\bm{b}_{\mathrm{I}} (\bm{b}_{\mathrm{I}})^{\mathrm T}, \label{mse_genineq}
\end{align}
where the same partitions for the MSE matrix and the inverse of the Fisher information matrix are used. 
Therefore, if we consider the locally unbiased estimator for $\bm{\theta}_{\mathrm{I}}$, i.e.\ $B_{\rm I}=I$, $B_{\mathrm{I},\mathrm{N}}=0$ and $\bm{}b_{\rm I}=0$, we get the result: 
\[
V_{\bm{\theta};{\mathrm{I}}}^{(n)}[{\hat{{\bm{\theta}}}}]\ge\frac{1}{n} J_{\bm{\theta}}^{\mathrm{I},{\mathrm{I}}}. 
\] 

Finally, we mention an important property of locally unbiasedness condition. 
Since we are only interested in estimating parameters of interest $\bm{\theta}_{\mathrm{I}}$, 
it should not matter how we reparametrize the nuisance parameters. 
Consider the following transformation (See also the method of parameter orthogonalization in section \ref{sec:NuiPO}):
\begin{align}
\bm{\theta}_{\mathrm{I}}&=(\theta_i({\bm{\xi}}))=({\xi}_i)\ \mbox{for $i=1,2,\dots,\dI$},\\
\bm{\theta}_{\mathrm{N}}&=(\theta_j({\bm{\xi}}))\quad\mbox{for $j=\dI+1,\dI+2,\dots,d$}. 
\end{align}
With this parametrization, we can show that locally unbiasedness condition \eqref{app_lucond} 
remains unchanged. In other words, arbitrary reparametrization of the nuisance parameter 
does not affect the locally unbiasedness condition for the parameter of interest (Lemma \ref{lem_lucond}). 

\subsection{Three interpretations of classical CR bound} \label{sec:AppCstat2}
In this appendix, we give three different derivations of the classical results \eqref{ccrineq1} and \eqref{ccrineq2}. 
The first one is given in the main text. We try to estimate all parameters under the locally unbiasedness, 
since this is what we can do best. 

The second interpretation is due to Bhapker and others. See \cite{bs94} and references therein. 
This derivation is based on evaluating a Fisher-like information quantity by finding the worst case tangent space. 
The tangent space of the statistical model (manifold) is a vector space spanned by the score functions $\del_i\log p_{\bm{\theta}}(x)$. 
Under the same assumptions and setting as before, we define an information matrix
\begin{align}
J_{\bm{\theta};{\mathrm{I}}}(M)&:=
\big[E_{\bm{\theta}}[u_{\bm{\theta};i}(X|M)\,u_{\bm{\theta};j}(X|M) ] \big]_{i,j\in\{1,2,\dots,\dI\}},\\ 
u_{\bm{\theta};i}(x|M)&:=\del_i \log p_{\bm{\theta}}(x)-\sum_{j=\dI+1}^d  m_{i,j}\del_j\log p_{\bm{\theta}}(x). \label{eqApp:effscore}
\end{align}
Here, $M=[m_{i,j}]$ is a $\dI\times \dN$ real matrix ($\dN=d-\dI$), which can depends on both ${\bm{\theta}}$ and $x$, 
and $u_{\bm{\theta};i}(x|M)$ represents an effective score function in the presence of the nuisance parameter. 
We next define the Fisher information matrix for the parameter of interest by 
minimizing the above information matrix over all possible rectangular matrices $M$: 
\be \nonumber
J_{\bm{\theta};{\mathrm{I}}}:=
\min_{M} \{ J_{\bm{\theta};{\mathrm{I}}}(M) \}, 
\ee
where the minimization is understood in the sense of a matrix inequality. 
Working out some algebras, we can show 
\be
J_{\bm{\theta};{\mathrm{I}}}=J_{\bm{\theta};{\mathrm{I}},{\mathrm{I}}}-J_{\bm{\theta};{\mathrm{I}},{\mathrm{N}}}
\left(J_{\bm{\theta};{\mathrm{N}},{\mathrm{N}}}\right)^{-1}
J_{\bm{\theta};{\mathrm{N}},{\mathrm{I}}}, 
\ee
with the optimal 
$M_*= \arg\min_{M}  J_{\bm{\theta};{\mathrm{I}}}(M)= J_{\bm{\theta};{\mathrm{I}},{\mathrm{N}}}
(J_{\bm{\theta};{\mathrm{N}},{\mathrm{N}}})^{-1}$. 
This is exactly the same as the partial Fisher information 
$J_{\bm{\theta}}({\mathrm{I}}|{\mathrm{N}})$ [\eqref{cpFI}]. 

Note that this method can be extended to a singular model as well. Suppose that 
nuisance parameters are not linearly independent. This situation results in a singular Fisher 
information matrix and we cannot invert the matrix $J_{\bm{\theta};{\mathrm{N}},{\mathrm{N}}}$. 
However, one can use any generalized inverse of $J_{\bm{\theta};{\mathrm{N}},{\mathrm{N}}}$ to 
define the above effective logarithmic likelihood function for the parameter of interest. 

The third derivation is based on the projection method, which is intimately related to 
a geometrical aspect of parameter estimation problems \cite{amari,ANbook}, see also \cite{zr94}. 
Note that the tangent space 
$T_{\bm{\theta}}(\cM)=\mathrm{Span}\{\del_i\log p_{\bm{\theta}}\}_{i=1,\dots,d}$ 
at ${\bm{\theta}}=(\bm{\theta}_{\mathrm{I}},\bm{\theta}_{\mathrm{N}})$ 
cannot be expressed as a direct sum of two tangent spaces,  
$T_{\bm{\theta};{\mathrm{I}}}(\cM)=\mathrm{Span}\{\del_i\log p_{\bm{\theta}}\}_{i=1,\dots,\dI}$ and 
$T_{\bm{\theta};{\mathrm{N}}}(\cM)=\mathrm{Span}\{\del_i\log p_{\bm{\theta}}\}_{i=\dI+1,\dots,d}$, 
unless two parameter are orthogonal with respect to the Fisher information. 
As we discussed in Appendix \ref{sec:AppCstat1}, the reparametrization of the nuisance parameter 
does not matter as long as we wish to estimate the parameter of interest under the locally unbiasedness condition. 
We can always find a new coordinate system such that $T_{\bm{\theta}}(\cM)=T_{\bm{\theta};{\mathrm{I}}}(\cM)\oplus 
T_{\bm{\theta};{\mathrm{N}}}(\cM)$. 
Geometrically speaking, we are introducing a foliation structure for the statistical model \cite{amari,ANbook}. 
Owing to this geometrical structure, the nuisance parameter degree of freedom can be 
used to define an ancillary submanifold. A condition $\bm{\theta}_{\mathrm{I}}=c_{\mathrm{I}}$ (constant) defines 
a submanifold of $\cM$ for each $\bm{\theta}_{\mathrm{I}}$. 
The problem is then equivalent to inferring statistical submodels defined by $\bm{\theta}_{\mathrm{I}}=c_I$ condition. 
A standard orthogonalization is given by \eqref{localortho}, and it is straightforward 
to see that the Fisher information matrix in the new coordinate system becomes
identical to the partial Fisher information matrix \eqref{cpFI}. Achievability and 
efficiency can also be easily analyzed in the language of information geometry \cite{ANbook,amari}.  

\subsection{Parameter transformation and estimating a function of parameters}\label{sec:AppCstat3}
In this appendix, we summarize how the change of parameters reflects the CR inequality 
and its application to estimate a function of parameters. For simplicity, we only concern the case 
when the sample size is one without loss of generality. 

Let us start with a statistical model with $d$ parameters. 
$\cM=\{ p_{\bm \theta}\,|\, \bm{\theta}\in\Theta\}$. 
If we transform the parameter $\bm{\theta}=(\theta_1,\theta_2,\ldots,\theta_d)$ to 
a new parameter $\bm{\xi}=(\xi_1,\xi_2,\ldots,\xi_d)$, the model is now parametrized as
$\cM=\{ p_{\bm \xi}\,|\, \bm{\xi}\in\Xi\}$. Geometrically speaking, this corresponds to introduce a new coordinate system to a point $p\in\cM$. To have a well-defined parametrization in the new parameter $\bm{\xi}$, 
we need impose several conditions. 
Among them, the mapping $\bm{\theta}\mapsto\bm{\xi}$ needs to be $C^{r}$ diffeomorphism for sufficiently large $r$. 
In other words, it is a one-to-one mapping, and 
each function $\xi_\alpha(\bm{\theta})$ for $\alpha=1,2,\ldots,d$ is $C^{r}$-class. 
Further, its inverse function $\theta_i(\bm{\xi})$ for $i=1,2,\ldots,d$ is also $C^{r}$-class. 
Important consequence of this requirement is that the Jacobi matrix for this transformation is full rank and is invertible. 
Here, the Jacobi matrix is defined by
\be
\frac{\del \bm{\theta}}{\del \bm{\xi}}:= \left[ \frac{\del \theta_i}{\del \xi_\alpha} \right],
\ee
where $i$ and $\alpha$ correspond to the row and column indices, respectively. 
Its inverse matrix is 
\be
\frac{\del \bm{\xi}}{\del \bm{\theta}}:= \left[ \frac{\del \xi_\alpha}{\del \theta_i} \right],
\ee
with the column index $i$ and the row index $\alpha$. 

Under this transformation, the partial derivatives $\frac{\del }{\del \theta_i}$ and $\frac{\del }{\del \xi_\alpha}$ are transformed as
\begin{align}
\frac{\del }{\del \xi_\alpha}&= \sum_{i=1}^d \frac{\del \theta_i}{\del \xi_\alpha} \frac{\del }{\del \theta_i},\\
\frac{\del }{\del \theta_i}&= \sum_{\alpha=1}^d \frac{\del \xi_\alpha}{\del \theta_i} \frac{\del }{\del \xi_\alpha}. 
\end{align}
Accordingly, the classical Fisher information matrix is transformed as
\begin{equation}
J_{\bm \xi}= \frac{\del \bm{\theta}}{\del \bm{\xi}} \ J_{\bm \theta}\: \left(\frac{\del \bm{\theta}}{\del \bm{\xi}}\right)^{\rm T}.
\end{equation}

Let $\hat{\bm{\xi}}=(\hat{\xi}_1,\hat{\xi}_2,\ldots,\hat{\xi}_d)$ be an estimator for the new parameter. 
One naively expects that a good estimator for $\bm{\theta}$ is also a good estimator for $\bm{\xi}$ 
when it is transformed . 
However, this statement is true only in the asymptotic limit. 
Importantly, the unbiasedness condition in the $\bm{\xi}$ parametrization takes a different form as
\be
E_{\bm{\xi}}[ \hat{\xi}_\alpha(X)]=\xi_\alpha, 
\ee 
for $\alpha=1,2,\ldots,d$. 
As a consequence, an unbiased estimator $\hat{\bm{\theta}}$ for $\bm{\theta}$ is no longer unbiased for $\bm{\xi}$ 
when transformed into the new parametrization, i.e., 
the estimator $\bm{\xi}\circ\hat{\bm{\theta}}$ is biased. 
There are several methods known in statistics to remove bias \cite{lc,bnc,ANbook}. 
Owing to the continuous mapping theorem \cite{lc,bnc}, if $\hat{\bm{\theta}}$ is weakly consistent, 
$\bm{\xi}( \hat{\bm{\theta}})$ converges to $\bm{\xi}(\bm{\theta})$ in probability. 
The above statement about non-invariance of unbiasedness also holds for the locally unbiasedness condition. 

The CR inequality for estimating the new parameter $\bm{\xi}$ is expressed as
\be\label{cr_xi}
V_{\bm{\xi}}[\hat{{\bm{\xi}}}]\ge (J_{\bm \xi})^{-1}
= \left(\frac{\del \bm{\xi}}{\del \bm{\theta}}\right)^{\rm T}\, (J_{\bm \theta})^{-1}\;\frac{\del \bm{\xi}}{\del \bm{\theta}} ,
\ee
for all locally unbiased estimators at $\bm{\xi}$. We can also derive the generalized version of the CR inequality \eqref{crgen} 
when $\hat{\bm{\xi}}$ is not (locally) unbiased. 
From this expression, it holds that the weighted trace of the MSE matrix is bounded by
\begin{align}
\Tr{W V_{\bm{\xi}}[{\hat{{\bm{\xi}}}}]}&\ge \Tr{ \frac{\del \bm{\xi}}{\del \bm{\theta}}\;W\left(\frac{\del \bm{\xi}}{\del \bm{\theta}}\right)^{\rm T}\, (J_{\bm \theta})^{-1}}\\
&= \Tr{\widetilde{W}  (J_{\bm \theta})^{-1}}. 
\end{align}
Thereby, we immediately see that the parameter transformation corresponds to the change in the weight matrix 
$\widetilde{W}:=  \frac{\del \bm{\xi}}{\del \bm{\theta}}\;W(\frac{\del \bm{\xi}}{\del \bm{\theta}})^{\rm T}$. 
This fact is an important property of the parameter transformation in the context of quantum state estimation. 

Suppose we are interested in estimating a vector-valued function of $\bm{\theta}$,
\be
\bm{g}(\bm{\theta}):=\left(g_1(\bm{\theta}), g_2(\bm{\theta}),\ldots, g_K(\bm{\theta})\right), 
\ee
where $K$ should be smaller or equal to the number of parameters $d$ for mathematical convenience. 
$g_k(\bm{\theta})$ ($k=1,2,\ldots,K$) are also assumed to be differentiable and continuous. 
Define a rectangular matrix 
\be
G_{\bm \theta}:=\left[ \frac{\del g_k(\bm{\theta})}{\del \theta_i} \right], 
\ee
where $k=1,2,\ldots,K$ is the row index and $i=1,2,\ldots,d$ is the column index. 
Let $\hat{\bm{g}}=(\hat{g}_1, \hat{g}_2,\ldots, \hat{g}_K)$ be an estimator estimating the vector-valued function. 
We can define the locally unbiasedness condition at $\bm{\theta}$ by 
\be
E_{\bm{\theta}}[ \hat{g}_k(X)]=g_i(\bm{\theta}),\ \mathrm{and}\ \frac{\del}{\del \theta_i} E_{\bm{\theta}}[\hat{g}_k(X)]
=\frac{\del g_k}{\del \theta_i}, 
\ee
for $\forall k=1,2,\ldots,K$ and $\forall i=1,2,\ldots,d$. 
Following the same argument to derive \eqref{cr_xi}, 
it is straightforward to derive the CR inequality for a locally unbiased estimator as
\be
V_{\bm{\theta}}[\hat{\bm{g}}]\ge \left(G_{\bm \theta}\right)^{\rm T}\, (J_{\bm \theta})^{-1}\;G_{\bm \theta}, 
\ee
where $V_{\bm{\theta}}[\hat{\bm{g}}]:=\left[E_{\bm \theta}[\left(\hat{g}_k(X)-g_k(\bm{\theta})\right) \left(\hat{g}_{k'}(X)-g_{k'}(\bm{\theta})\right) ]  \right]$ denotes the MSE matrix for estimating the vector-valued function. 

In general, it is not easy to construct a locally unbiased estimator from $\hat{\bm \theta}$. 
For biased estimators instead, we obtain the following CR inequality for any estimators $\hat{\bm{g}}$. 
\be\label{cr_function}
V_{\bm{\theta}}[\hat{\bm{g}}]\ge 
 \left(B_{\bm \theta}\right)^{\rm T}\, (J_{\bm \theta})^{-1}\;B_{\bm \theta}+ \bm{b}_{\bm{\theta}} (\bm{b}_{\bm{\theta}})^{\mathrm T} .
\ee
Here, the $K\times d$ rectangular matrix $B_{\bm{\theta}}$ and the $K$-column vector $\bm{g}$ (the bias vector) are defined by
\begin{align}
B_{\bm{\theta}}[\hat{\bm{g}}]&:= \left[\frac{\del}{\del \theta_j} E_{\bm \theta}[\hat{g}_k(X)]\right]
=\left[\frac{\del}{\del \theta_j} b_{\bm{\theta};k}[\hat{\bm{g}}]\right]+G_{\bm \theta},\\
b_{\bm{\theta};k}[\hat{\bm{g}}] &:=E_{\bm{\theta}}[ \hat{g}_k(X)]-g_k(\bm{\theta}). 
\end{align}
In contrast to the usual CR inequality, the achievability of this bound depends on 
the nature of the vector-valued function $\bm{g}$. 
See for example \cite{rm97}. 
Note that the right hand side of the above CR inequality \eqref{cr_function} still depends on the estimator $\hat{\bm{g}}$, 
unless it is unbiased. 

As a special case, consider a scalar function $g(\bm{\theta})$. 
Then, the CR inequality for estimating $g(\bm{\theta})$ is expressed as
\be\label{cr_singlefunction}
V_{\bm{\theta}}[\hat{g}]\ge 
 \bm{v}_{\bm \theta}^{\rm T}\, (J_{\bm \theta})^{-1}\; \bm{v}_{\bm \theta},
\ee
for any locally unbiased estimator $\hat{g}$ with $\bm{v}_{\bm \theta}^{\rm T}:=\left(\frac{\del g(\bm{\theta})}{\del \theta_1},\frac{\del g(\bm{\theta})}{\del \theta_2},\ldots,\frac{\del g(\bm{\theta})}{\del \theta_d}\right) $. 

\section{Supplemental materials for quantum statistics}\label{sec:AppQstat}

\subsection{CR inequality}\label{PfCR}
This subsection shows the SLD and RLD CR inequalities
\eqref{CRSLDF} and \eqref{CRRLDF}.
Also, it
derives the equality condition of the SLD CR inequality \eqref{CRSLDF}.

First, we show the SLD CR inequality \eqref{CRSLDF}.
Let $\hat{\Pi}$ be
a locally unbiased estimator at ${\bm{\theta}}$.
For two $d$-dimensional real vectors ${\bm a},{\bm b}$, 
we show the following inequality
\be
\Big({\bm b}^{\rm T} V_{\bm{\theta}}[\hat{\Pi}] {\bm b}\Big) 
\Big({\bm a}^{\rm T} J_{{\bm{\theta}}}^{\rm S} {\bm a}\Big)
\ge
({\bm b}^{\rm T} {\bm a})^2.\label{EH1}
\ee
Define the Hermitian matrices 
$O_{\bm b}:=
\sum_{x \in {\cal X}} 
({\bm b}^{\rm T}(\hat{{\bm{\theta}}}(x)-{\bm{\theta}})) \Pi_x$
and 
$L_{\bm a}:=\sum_{j=1}^d a_j L_{{\bm{\theta}};j}^{\rm S}$.
The relations
\begin{align}
&{\bm b}^{\rm T} {\bm a}= 
\frac{1}{2}\tr {O_{\bm b} (\rho_{\bm{\theta}} L_{\bm a} + L_{\bm a} \rho_{\bm{\theta}} )},
\quad
{\bm a}^{\rm T} J_{{\bm{\theta}}}^{\rm S} {\bm a}
=
\frac{1}{2}\tr {L _{\bm a}(\rho_{\bm{\theta}} L_{\bm a} + L_{\bm a} \rho_{\bm{\theta}} )} ,\label{EH2}\\
&
\begin{aligned}
{\bm b}^{\rm T} V_{\bm{\theta}}[\hat{\Pi}] {\bm b}
-\tr {\rho_{\bm{\theta}} O_{\bm b}^2}
&=\sum_{x \in {\cal X}} 
\tr {\rho_{\bm{\theta}}
(O_{\bm b}-({\bm b}^{\rm T} (\hat{{\bm{\theta}}}(x)-{\bm{\theta}}))I)
\Pi_x
(O_{\bm b}-({\bm b}^{\rm T} (\hat{{\bm{\theta}}}(x)-{\bm{\theta}}))I)} \\
&\ge 0 
\end{aligned}
\label{EH3}
\end{align}
hold.
We apply Schwartz inequality for the inner product $X,Y\mapsto 
\langle X, Y \rangle_{\rho_{\bm{\theta}}}^{\rm S}:=
\frac{1}{2}\tr {X (\rho_{\bm{\theta}} Y +
Y \rho_{\bm{\theta}} )}$
to the case with $X= O_{\bm b}$ and $Y=L_{\bm a}$.
The combination with \eqref{EH2} and \eqref{EH3}
implies \eqref{EH1}.

The substitution of ${\bm a}=(J_{{\bm{\theta}}}^{\rm S})^{-1} {\bm b}$ into 
\eqref{EH1}
yields the inequality
\be
{\bm b}^{\rm T} V_{\bm{\theta}}[\hat{\Pi}] {\bm b} \ge
{\bm b}^{\rm T} (J_{{\bm{\theta}}}^{\rm S})^{-1} {\bm b}.\label{EH4}
\ee
Since ${\bm b}$ is an arbitrary $d$-dimensional real vector, we obtain
\eqref{CRSLDF} \cite{helstrom}.
The RLD CR inequality \eqref{CRRLDF} can be shown as follows.
Replacing 
$J_{{\bm{\theta}}}^{\rm S}$, $L_{{\bm{\theta}};j}^{\rm S}$, 
and the inner product $\langle X, Y \rangle_{\rho_{\bm{\theta}}}^{\rm S}$
by $J_{{\bm{\theta}}}^{\rm{R}}$, $L_{{\bm{\theta}};j}^{\rm R}$, and the inner product
$\langle X, Y \rangle_{\rho_{\bm{\theta}}}^{\rm R}:=
\tr X^\dagger \rho_\theta Y$
and extending the range of vectors 
${\bm a}$ and ${\bm b}$ to $d$-dimensional complex vectors, we have 
the inequality
\be
\bar{{\bm b}}^{\rm T} V_{\bm{\theta}}[\hat{\Pi}] {\bm b} \ge
\bar{{\bm b}}^{\rm T} (J_{{\bm{\theta}}}^{\rm R})^{-1} {\bm b}.\label{EH4RLD}
\ee
because the components of $J_{{\bm{\theta}}}^{\rm{R}}$ have complex numbers in general \cite{holevo}.
Since ${\bm b}$ is an arbitrary $d$-dimensional complex vector, we obtain
\eqref{CRRLDF} \cite{holevo}. 

Next, we show the equality condition of \eqref{CRSLDF}.
In the following, we denote 
$\sum_{j=1}^d(J_{{\bm{\theta}}}^{\rm S})^{-1}_{i,j} L_{{\bm{\theta}};j}^{\rm S}$ 
by $L_{{\bm{\theta}}}^{\mathrm{S};i}$.
The equality in the above application of Schwartz inequality
holds iff
$O_{\bm b}$ is a constant times of $L_{(J_{{\bm{\theta}}}^{\rm S})^{-1} {\bm b}}$
for any vector ${\bm b}$. 
The combination of this equality condition and 
the locally unbiased condition implies that 
$O_{\bm b}=L_{(J_{{\bm{\theta}}}^{\rm S})^{-1} {\bm b}}$ with any vector ${\bm b}$,
i.e.,
$\sum_{x \in {\cal X}} 
(\hat{\theta}_i(x)-\theta_i) \Pi_x=L_{{\bm{\theta}}}^{\mathrm{S};i}$ with $i=1,\ldots,d$.
Therefore, 
the equality in \eqref{CRSLDF} holds if and only if
(i) the equality in \eqref{EH3} 
holds for any vector ${\bm b}$,
and (ii) the equation
$\sum_{x \in {\cal X}} 
(\hat{\theta}_i(x)-\theta_i) \Pi_x=L_{{\bm{\theta}}}^{\mathrm{S};i}$ 
holds with $i=1,\ldots,d$.

When we can choose SLDs $L_{{\bm{\theta}};i}^{\rm S}$ for $i=1, \ldots, d$ 
such that these SLDs $L_{{\bm{\theta}};i}^{\rm S}$ are commutative with each other,
we choose a POVM $\hat{\Pi}$ as 
the simultaneous spectral decomposition of
$L_{{\bm{\theta}}}^{\mathrm{S};i}+\theta_i$ with $i=1,\ldots,d$.
Then, the condition (ii) holds.
Since 
${\Pi}_x$ is the projection to 
the common eigenspaces of 
$L_{{\bm{\theta}}}^{\mathrm{S};1}, \ldots,L_{{\bm{\theta}}}^{\mathrm{S};d}$,
the equality in \eqref{EH3} holds for any vector ${\bm b}$.
Hence, the equality in \eqref{CRSLDF} holds \cite{ANbook}.

Conversely, we assume that 
a locally unbiased estimation $\hat{\Pi}$ satisfies the equality in \eqref{CRSLDF}.
Then, the equation
$\sum_{x \in {\cal X}} 
(\hat{\theta}_i(x)-\theta_i) \Pi_x=L_{{\bm{\theta}}}^{\mathrm{S};i}$ 
holds with $i=1,\ldots,d$.
When $\rho_{\bm{\theta}} $ is strictly positive, 
the equality in \eqref{EH3} 
implies the relation
$(O_{\bm b}-({\bm b}^{\rm T}(\hat{{\bm{\theta}}}(x)-{\bm{\theta}}))I)
\Pi_x
(O_{\bm b}-({\bm b}^{\rm T} (\hat{{\bm{\theta}}}(x)-{\bm{\theta}}))I)
=0$.
This relation with any vector ${\bm b}$ guarantees that 
${\Pi}_x$ is the projection to 
the common eigenspaces of $L_{{\bm{\theta}}}^{\mathrm{S};1}, \ldots,L_{{\bm{\theta}}}^{\mathrm{S};d}$ \cite{ANbook}.

However, when $\rho_{\bm{\theta}} $ is not strictly positive, 
the relation
$(O_{\bm b}-({\bm b}^{\rm T}(\hat{{\bm{\theta}}}(x)-{\bm{\theta}}))I)
\Pi_x
(O_{\bm b}-({\bm b}^{\rm T}(\hat{{\bm{\theta}}}(x)-{\bm{\theta}}))I)
=0$ does not hold in general.
Hence, we cannot apply this discussion to the equality in \eqref{EH3}.
However, we can say the following even in this case.
The equality holds in \eqref{CRSLDF}
iff we can choose SLDs $L_{{\bm{\theta}};i}^{\rm S}$ on a sufficiently large extended Hilbert space ${\cal H}'$
for $i=1, \ldots, d$ 
such that these SLDs $L_{{\bm{\theta}};i}^{\rm S}$ are commutative with each other. 

To show the above equivalence relation, we assume that a locally unbiased estimator $\hat{\Pi}$ 
satisfies the equality in \eqref{CRSLDF}.
Then, we extend the Hilbert space 
${\cal H}'$ with a projection $P$ to the original space ${\cal H}$
to satisfy the following conditions.
There exists a locally unbiased estimator $\hat{\Pi}'$ 
on ${\cal H}'$ such that $\hat{\Pi}'_x$ is a projection and 
$P{\Pi}'_x P={\Pi}_x$ for $x \in {\cal X}$.
Notice that ${\Pi}_x$ is not a projection in general.
In the following, our discussion is made on the larger Hilbert space ${\cal H}'$
based on the following equivalent class;
Two Hermitian matrices $X$ and $Y$ are equivalent 
when the norm $\|X-Y\|_{\bm{\theta}}:=\sqrt{\langle X-Y, X-Y \rangle_{\rho_{\bm{\theta}}}^{\rm S}}$
is zero.
Since the equality in \eqref{EH3} holds,
the equality condition of Schwartz inequality guarantees that
$\sum_{x \in {\cal X}} 
(\hat{\theta}_i(x)-\theta_i) \Pi_x'$ equals $L_{{\bm{\theta}}}^{\mathrm{S};i}$ 
in the sense of the above equivalent class 
for $i=1,\ldots,d$.
Thus,
$
{L_{{\bm{\theta}};j}^{\rm S}}':=
\sum_{i=1}^d
J_{{\bm{\theta}};i,j}^{\rm S}\sum_{x \in {\cal X}} 
(\hat{\theta}_i(x)-\theta_i) \Pi_x'$ equals $L_{{\bm{\theta}};j}^{\rm S}$ 
in the sense of the above equivalent class 
for $i=1,\ldots,d$.
${L_{{\bm{\theta}};1}^{\rm S}}',\ldots {L_{{\bm{\theta}};d}^{\rm S}}'$
 are commutative with each other,
 and can be regarded as SLDs.
Thus, 
we can conclude the following.
When the equality in \eqref{CRSLDF} holds,
we can choose SLDs $L_{{\bm{\theta}};i}^{\rm S}$ for $i=1, \ldots, d$ 
such that these SLDs $L_{{\bm{\theta}};i}^{\rm S}$ are commutative with each other.

\subsection{Useful lemmas}
In this subsection, we give several known lemmas concerning the theory of quantum state estimation.  

First lemma concerns the classical Fisher information about 
a projection measurement for a linear combination of the SLD operators, see for example \cite{GM00,yamagata}. 
Given an $d$-parameter model $\cM=\{\rho_{\bm{\theta}}|{\bm{\theta}}\in\Theta\}$, 
consider a set of the SLD operators $\{\SLD{1},\SLD{2},\dots,\SLD{d} \}$, 
and define the following Hermitian operator. 
\be\label{def_Lv}
L_{\bm{v}}:=\sum_{i=1}^d v_i\SLD{i},
\ee
where $\bm{v}=(v_1,\ldots, v_d)^{\rm T}\in\bbr^d$ is an arbitrary $d$-dimensional vector. 
Then, we can consider a projection measurement $\Pi_v$ defined by this observable. 
\begin{lemma}\label{lem_optPVM}
Let $\bm{v}=(v_1,\ldots, v_d)^{\rm T}$ be an arbitrary $d$-dimensional real vector, 
and define the operator $L_{\bm{v}}$ by \eqref{def_Lv}. 
The Fisher information matrix about a projection measurement $\Pi_{\bm{v}}$ 
for the operator $L_{\bm{v}}$ satisfies 
\be
\bm{v}^{\rm T}J_{\bm{\theta}}[\Pi_v] \bm{v}= \bm{v}^{\rm T}\sldQFI \bm{v}.
%\sum_{i,j=1}^d v_i J_{\bm{\theta}}[\Pi_v] v_j= \sum_{i,j=1}^d v_i \sldQFI v_j. 
\ee
\end{lemma}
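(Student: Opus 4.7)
The plan is to show that both sides of the claimed identity equal the single scalar quantity $\tr{\rho_{\bm{\theta}} L_{\bm v}^2}$, by a direct computation on each side.

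First I would handle the right-hand side, which is essentially a definitional calculation. Using the expression \eqref{DEFSLDF} for $J_{{\bm{\theta}};i,j}^{\rm S}$ and the bilinearity of the trace, the quadratic form $\bm{v}^{\rm T} J_{\bm \theta}^{\rm S} \bm{v}$ expands as
\begin{equation*}
\sum_{i,j=1}^d v_i v_j \cdot \tfrac{1}{2}\tr{(L_{{\bm{\theta}};i}^{\rm S} L_{{\bm{\theta}};j}^{\rm S}+L_{{\bm{\theta}};j}^{\rm S} L_{{\bm{\theta}};i}^{\rm S})\rho_{\bm{\theta}}}
=\tr{L_{\bm v}^2\,\rho_{\bm{\theta}}},
\end{equation*}
which is the desired closed form on the SLD side.

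Next I would address the left-hand side, where the key input is the spectral decomposition $L_{\bm v}=\sum_{x\in\cX}\lambda_x E_x$ and the resulting probability distribution $p_{\bm{\theta}}(x):=\tr{\rho_{\bm{\theta}} E_x}$. The main computational step is to evaluate the directional derivative $\del_{\bm v}p_{\bm{\theta}}(x):=\sum_i v_i\,\del_i p_{\bm{\theta}}(x)$. Using the SLD equation \eqref{DEFSLD}, one has $\del_{\bm v}\rho_{\bm{\theta}}=\tfrac12(L_{\bm v}\rho_{\bm{\theta}}+\rho_{\bm{\theta}} L_{\bm v})$, so that
\begin{equation*}
\del_{\bm v}p_{\bm{\theta}}(x)=\tfrac12\tr{(L_{\bm v}\rho_{\bm{\theta}}+\rho_{\bm{\theta}} L_{\bm v})E_x}.
\end{equation*}
Because $E_x$ is the spectral projector of $L_{\bm v}$ associated with $\lambda_x$, we have $L_{\bm v} E_x=E_x L_{\bm v}=\lambda_x E_x$, and hence $\del_{\bm v}p_{\bm{\theta}}(x)=\lambda_x\, p_{\bm{\theta}}(x)$. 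This clean identity is, I expect, the crucial step; everything else is bookkeeping. Substituting into the classical Fisher information of the induced distribution gives
\begin{equation*}
\bm{v}^{\rm T} J_{\bm{\theta}}[\Pi_{\bm v}]\bm{v}
=\sum_{x\in\cX}\frac{(\del_{\bm v}p_{\bm{\theta}}(x))^2}{p_{\bm{\theta}}(x)}
=\sum_{x\in\cX}\lambda_x^2\,p_{\bm{\theta}}(x)
=\tr{L_{\bm v}^2\,\rho_{\bm{\theta}}},
\end{equation*}
where the last equality uses $L_{\bm v}^2=\sum_x\lambda_x^2 E_x$. Comparing the two expressions finishes the argument.

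The only technical subtlety I anticipate concerns outcomes with $p_{\bm{\theta}}(x)=0$: such terms are formally $0/0$ in the Fisher-information sum, but the identity $\del_{\bm v}p_{\bm{\theta}}(x)=\lambda_x p_{\bm{\theta}}(x)$ shows the numerator vanishes as well, so the standard convention of excluding these terms (or setting the summand to zero) is harmless. This is a minor issue rather than a real obstacle; the substantive content is the eigenprojector identity $L_{\bm v}E_x=\lambda_x E_x$, which collapses the SLD-based directional derivative of the Born probability to a multiplicative factor, thereby equating the classical Fisher information induced by the PVM of $L_{\bm v}$ with the SLD quantum Fisher information in the direction $\bm v$.
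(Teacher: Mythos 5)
Your proof is correct and follows essentially the same route as the paper's: expand the quadratic form of the SLD Fisher information to $\tr{\rho_{\bm{\theta}}L_{\bm v}^2}$, and use the eigenprojector identity $L_{\bm v}E_x=\lambda_x E_x$ to reduce the directional derivative of the Born probabilities to $\lambda_x p_{\bm{\theta}}(x)$, so that the induced classical Fisher information in direction $\bm v$ also equals $\tr{\rho_{\bm{\theta}}L_{\bm v}^2}$ (the paper writes the same computation in the $\sldin{\cdot}{\cdot}$ inner-product notation). Your remark on zero-probability outcomes is a harmless addition; under the paper's full-rank regularity assumption such outcomes do not occur.
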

\begin{proof}
Let $\del_i:=\del/\del\theta_i$ be the partial derivative with respect to $\theta_i$. 
For a POVM $\Pi=\{\Pi_x\}_{x\in\cX}$, let $p_{\bm{\theta}}(x|\Pi)=\tr{\rho_{\bm{\theta}}\Pi_x}=\sldin{I_d}{\Pi_x}$ 
be the probability distribution, and the $i$th score function 
$u_{{\bm{\theta}};i}(x)=\del_i \log p_{\bm{\theta}}(x)$ is expressed as
\be
u_{{\bm{\theta}};i}(x)=\frac{\tr{\del_i\rho_{\bm{\theta}}\Pi_x}}{p_{\bm{\theta}}(x|\Pi)}=\frac{\sldin{\SLD{i}}{\Pi_x}}{\sldin{I_d}{\Pi_x}}, 
\ee
where the relation $\tr{\del_i\rho_{\bm{\theta}} X}=\sldin{\SLD{i}}{X}$ for $X\in\lofh$ holds from definition of the SLD operator. 
Using this representation of the classical score function, we can express the Fisher information matrix as
\begin{align}
J_{{\bm{\theta}};i,j}[\Pi]=\sum_{x\in\cX} p_{\bm{\theta}}(x|\Pi) u_{{\bm{\theta}};i}(x) u_{{\bm{\theta}};j}(x)
=\sum_{x\in\cX}\frac{\sldin{\SLD{i}}{\Pi_x}\sldin{\SLD{j}}{\Pi_x}}{\sldin{I_d}{\Pi_x}}. 
\end{align}
Now, set $\Pi=\Pi_{\bm{v}}$ and denote the spectral decomposition of the operator $L_{\bm{v}}$ by $L_{\bm{v}}=\sum_{x\in\cX}\lambda_x E_x$. 
Then, we have 
\begin{align*}
&\sum_{i,j=1}^d v_i (J_{\bm \theta}[\Pi_{\bm v}])_{i,j} v_j=\sum_{x\in\cX}\frac{\sldin{L_{\bm{v}}}{E_x}\sldin{L_{\bm{v}}}{E_x}}{\sldin{I_d}{E_x}}% \\
=\sum_{x\in\cX}\frac{\lambda_x^2(\sldin{I_d}{E_x})^2}{\sldin{I_d}{E_x}}  \\
=&\sum_{x\in\cX}\lambda_x^2\sldin{I_d}{E_x}
=\sldin{I_d}{L_{\bm{v}}^2}=\sldin{L_{\bm{v}}}{L_{\bm{v}}}
=\sum_{i,j=1}^d v_i (J_{\bm \theta}^{\rm S})_{i,j} v_j. %\qquad\square
\end{align*}
\end{proof}

We remark that the measurement $\Pi_{\bm{v}}$ depends on the choice of the vector ${\bm{v}}$, 
and hence this lemma does not prove the relationship $J_{\bm{\theta}}[\Pi]=\sldQFI$ 
unless all SLD operators commute with each other. 
In the context of the nuisance parameter problem, an important case is 
to estimate the single parameter $\theta_1$ in the presence of the nuisance parameters $\bm{\theta}_{\rm{N}}$. 
By setting ${\bm{v}}=\bm{e}_1=(1,0,\dots,0)$, we immediately obtain $J_{{\bm{\theta}};1,1}[\Pi_{\bm{e}_1}]=J_{{\bm{\theta}};1,1}^{\rm S}$. 
Similarly, we can show that the Fisher information for each diagonal element 
can attain $J_{{\bm{\theta}};ii}$ by the projection measurement $L_{\bm{e}_i}$ 
with the standard basis vector $\bm{e}_i$. 

Next corollary shows that the MSE matrix is bounded by the quantum MSE matrix. 
For a POVM $\Pi=\{\Pi_x\}_{x\in\cX}$ and an estimator $\hat{{\bm{\theta}}}=(\hat{\theta}_1,\ldots, \hat{\theta}_d):\cX\to\Theta\subset\bbr^d$, 
we introduce a $d$-valued observable:
\be
\Pi(\hat{{\bm{\theta}}}):=\sum_{x\in\cX} \hat{{\bm{\theta}}}(x)\Pi_x,  
\ee
and denote its $i$th element by $\hat{\Pi}_i=\sum_{x\in\cX} {\hat{\theta}_i}(x)\Pi_x$. 
The symmetrized quantum covariance matrix \cite{ANbook,petz} is defined by 
\be
V_{\bm{\theta}}^Q[\Pi(\hat{{\bm{\theta}}})]:= \big[ \sldin{\hat{\Pi}_i -\theta_iI_d}{\hat{\Pi}_j-\theta_jI_d} \big]. 
\ee
It is easy to verify that this matrix is a $d\times d$ real positive-semidefinite matrix. 

\if0
The following lemma due to Holevo is fundamental \cite{holevo}.
\begin{lemma}\label{lem_holevo}
Given a POVM $\Pi=\{\Pi_x\}_{x\in\cX}$ on $\cH$ and a real value function $f:\cX\to\bbr$, 
the operator inequality 
\be
\sum_{x\in\cX}f(x)^2 \Pi_x\ge {\Pi}_f^2,
\ee
holds, where ${\Pi}_f:=\sum_{x\in\cX}f(x)\Pi_x$ is Hermitian but is not positive in general. 
\end{lemma}
\begin{proof}
With these notations, $[{\Pi}_f-f(x)I_d]\Pi_x[{\Pi}_f-f(x)I_d]$ 
is a positive operator. Summing over the index $x\in\cX$ gives 
\be
\sum_{x\in\cX}[{\Pi}_f-f(x)I_d]\Pi_x[{\Pi}_f-f(x)I_d]\ge0 
\lra 
\sum_{x\in\cX}f(x)^2 \Pi_x- {\Pi}_f^2\ge0. %\quad\square
\ee
\end{proof}
\fi

Now, we recall \eqref{EH3}.
Since
${\bm b}^{\rm T} V_{\bm{\theta}}^Q[\Pi(\hat{{\bm{\theta}}})] {\bm b}
=\tr {\rho_{\bm{\theta}} O_{\bm b}^2}$, \eqref{EH3} yields the following corollary. 

\begin{corollary} \label{cor_holevo}
Given a quantum parametric model, the MSE matrix and SLD covariance matrix satisfy 
the matrix inequality for any estimator $\hat{\Pi}=(\Pi,\hat{{\bm{\theta}}})$: 
\be
V_{\bm{\theta}}[\hat{\Pi}]\ge V_{\bm{\theta}}^Q[\Pi(\hat{{\bm{\theta}}})]. 
\ee 
\end{corollary}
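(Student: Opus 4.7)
The plan is to derive the corollary as a direct corollary of equation \eqref{EH3} already established in Appendix \ref{PfCR}, since the hard work has already been done there. A matrix inequality $V_{\bm{\theta}}[\hat{\Pi}]\ge V_{\bm{\theta}}^Q[\Pi(\hat{{\bm{\theta}}})]$ is equivalent to the scalar inequality ${\bm b}^{\rm T} V_{\bm{\theta}}[\hat{\Pi}] {\bm b} \ge {\bm b}^{\rm T} V_{\bm{\theta}}^Q[\Pi(\hat{{\bm{\theta}}})] {\bm b}$ holding for every real vector ${\bm b}\in\bbr^d$, so I would fix such a ${\bm b}$ and work with scalars.

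First, I would rewrite both sides in a common form using the Hermitian operator
\[
O_{\bm b}:=\sum_{x\in\cX}({\bm b}^{\rm T}(\hat{{\bm{\theta}}}(x)-{\bm{\theta}}))\,\Pi_x
=\sum_{i=1}^d b_i\hat{\Pi}_i-({\bm b}^{\rm T}{\bm{\theta}})I,
\]
which was already introduced in \eqref{EH3}. Expanding the definition of $V_{\bm{\theta}}^Q[\Pi(\hat{{\bm{\theta}}})]$ via the symmetric inner product $\langle\cdot,\cdot\rangle^{\mathrm S}_{\rho_{\bm{\theta}}}$ and collecting terms, I would verify the identity
\[
{\bm b}^{\rm T} V_{\bm{\theta}}^Q[\Pi(\hat{{\bm{\theta}}})]{\bm b}
=\sum_{i,j=1}^d b_i b_j\langle \hat{\Pi}_i-\theta_i I,\hat{\Pi}_j-\theta_j I\rangle^{\mathrm S}_{\rho_{\bm{\theta}}}
=\tr{\rho_{\bm{\theta}} O_{\bm b}^2},
\]
where the last equality uses Hermiticity of $O_{\bm b}$ and linearity. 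This is exactly the identity already invoked in the text preceding the corollary.

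Next, I would note that by the Born rule
\[
{\bm b}^{\rm T} V_{\bm{\theta}}[\hat{\Pi}]{\bm b}
=\sum_{x\in\cX}\tr{\rho_{\bm{\theta}}\Pi_x}\bigl({\bm b}^{\rm T}(\hat{{\bm{\theta}}}(x)-{\bm{\theta}})\bigr)^2.
\]
Substituting these two expressions into \eqref{EH3} immediately yields
\[
{\bm b}^{\rm T} V_{\bm{\theta}}[\hat{\Pi}]{\bm b}-{\bm b}^{\rm T} V_{\bm{\theta}}^Q[\Pi(\hat{{\bm{\theta}}})]{\bm b}
=\sum_{x\in\cX}\tr{\rho_{\bm{\theta}}\bigl(O_{\bm b}-c_x I\bigr)\Pi_x\bigl(O_{\bm b}-c_x I\bigr)}\ge 0,
\]
where $c_x:={\bm b}^{\rm T}(\hat{{\bm{\theta}}}(x)-{\bm{\theta}})$. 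Non-negativity of each summand is clear from $\Pi_x\ge 0$ and $\rho_{\bm{\theta}}\ge 0$ together with cyclicity of the trace (the summand equals $\tr{A^\dagger A}$ with $A=\rho_{\bm{\theta}}^{1/2}(O_{\bm b}-c_x I)\Pi_x^{1/2}$). Since the resulting scalar inequality holds for every ${\bm b}\in\bbr^d$, the desired matrix inequality follows.

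The main conceptual obstacle is merely the bookkeeping that identifies ${\bm b}^{\rm T}V_{\bm{\theta}}^Q[\Pi(\hat{{\bm{\theta}}})]{\bm b}$ with $\tr{\rho_{\bm{\theta}} O_{\bm b}^2}$; once that is done, the result is an immediate rearrangement of \eqref{EH3}. In essence, the corollary isolates the operator-theoretic content of the proof of the SLD CR bound: the MSE is always at least as large as the symmetrized quantum covariance of the estimator operators, with the extra slack coming precisely from the non-projectivity of the POVM elements $\Pi_x$.
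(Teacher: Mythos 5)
Your proof is correct and follows essentially the same route as the paper: the paper derives the corollary by recalling \eqref{EH3} and the identity ${\bm b}^{\rm T} V_{\bm{\theta}}^Q[\Pi(\hat{{\bm{\theta}}})] {\bm b}=\tr{\rho_{\bm{\theta}} O_{\bm b}^2}$, exactly as you do. You merely spell out the bookkeeping (the expansion of the symmetrized covariance and the positivity of each summand in \eqref{EH3}) that the paper leaves implicit.
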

\if0
\begin{proof}
Let $v=(v_i)\in\bbr^d$ be arbitrary real vector and applying Lemma \ref{lem_holevo} 
together with the formula $X\ge 0\Rightarrow \tr{X\rho}$, we get
\begin{align}
\sum_{i,j=1}^d v_i V_{{\bm{\theta}};i,j}[\hat{\Pi}]v_j
&=\sum_{x\in\cX} [\sum_i v_i({\hat{\theta}_i}(x)-\theta_i)]^2\tr{\rho_{\bm{\theta}}\Pi_x}\\
&\ge \tr{\rho_{\bm{\theta}} \big[\sum_{x\in\cX} \sum_i v_i({\hat{\theta}_i}(x)-\theta_i)\Pi_x \big]^2}\\
&=\tr{\rho_{\bm{\theta}} \big[ \sum_i v_i(\hat{\Pi}_i-\theta_iI_d) \big]^2}\\
&=\sldin{\sum_i v_i(\hat{\Pi}_i-\theta_iI_d)}{\sum_j v_j(\hat{\Pi}_j-\theta_jI_d)}\\
&=\sum_{i,j=1}^d v_i  V_{{\bm{\theta}};ij}^Q[\Pi(\hat{{\bm{\theta}}})] v_j. 
\end{align}
\end{proof}
\fi
We remark here that the above construction for the SLD covariance matrix 
can also be extended to other inner products on $\cH$, known as the quantum covariance matrix \cite{petz}. 
In particular, the RLD covariance matrix can be defined similarly by using the right inner product. 

The last lemma is also well known. 
\begin{lemma} \label{lemm_crgen}
The SLD quantum covariance matrix for arbitrary estimator $\hat{\Pi}$ 
satisfies the generalized SLD quantum CR inequality:
\be\label{sldCR_gen}
V_{\bm{\theta}}^Q[\Pi(\hat{{\bm{\theta}}})]\ge B_{\bm{\theta}}[\hat{\Pi}] (\sldQFI)^{-1} B_{\bm{\theta}}[\hat{\Pi}]^{\mathrm{T}},  
\ee
where $d\times d$ matrix $B_{\bm{\theta}}[\hat{\Pi}]$ denotes the bias matrix, which is defined by
\be
B_{\bm{\theta}}[\hat{\Pi}]:=\big[ \sldin{\SLD{i}}{\hat{\Pi}_j} \big]. 
\ee
\end{lemma}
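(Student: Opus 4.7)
The plan is to prove this by extending the Schwartz-type argument \eqref{EH1}–\eqref{EH4} from the scalar case to a matrix-valued inequality, using the Gram matrix of a suitably chosen family of operators together with a Schur complement. The key observation is that on the space of bounded Hermitian operators the SLD form $\sldin{\cdot}{\cdot}$ is a genuine real inner product whenever $\rho_{\bm{\theta}}$ is strictly positive, so the Gram matrix of any finite family of Hermitian operators is positive semidefinite.

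Concretely, I would consider the $2d$ Hermitian operators
\be
X_i := \hat{\Pi}_i - \theta_i I \quad (i=1,\ldots,d), \qquad Y_j := \SLD{j} \quad (j=1,\ldots,d),
\ee
and form the $(2d)\times(2d)$ block Gram matrix
\be
G = \begin{pmatrix} [\sldin{X_i}{X_j}] & [\sldin{X_i}{Y_j}] \\ [\sldin{Y_i}{X_j}] & [\sldin{Y_i}{Y_j}] \end{pmatrix} \ge 0.
\ee
By construction the upper-left block equals $V_{\bm{\theta}}^Q[\Pi(\hat{\bm{\theta}})]$ and the lower-right equals $\sldQFI$. For the off-diagonal blocks, I would combine the defining identity \eqref{DEFSLD} with the basic identity $\sldin{I}{\SLD{j}}=\tr{\rho_{\bm{\theta}}\SLD{j}}=\del_j\tr{\rho_{\bm{\theta}}}=0$ to get
\be
\sldin{\SLD{i}}{X_j}=\sldin{\SLD{i}}{\hat{\Pi}_j}=\tr{\hat{\Pi}_j\,\del_i\rho_{\bm{\theta}}}=\del_i E_{\bm{\theta}}[\hat\theta_j(X)|\Pi],
\ee
which is exactly the $(i,j)$-entry of $B_{\bm{\theta}}[\hat{\Pi}]$; by symmetry of $\sldin{\cdot}{\cdot}$ on Hermitian operators, the other off-diagonal block is the transpose. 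I would then apply the Schur complement to $G\ge 0$ with respect to the invertible block $\sldQFI>0$, producing the matrix inequality $V_{\bm{\theta}}^Q[\Pi(\hat{\bm{\theta}})] \ge B_{\bm{\theta}}[\hat{\Pi}](\sldQFI)^{-1}B_{\bm{\theta}}[\hat{\Pi}]^{\mathrm T}$. An equivalent ``completion of squares'' route is to introduce $R_i := X_i-\sum_j M_{i,j}\SLD{j}$ for a real matrix $M$, note that $[\sldin{R_i}{R_j}]\ge 0$ for every $M$, and choose $M=B_{\bm{\theta}}[\hat{\Pi}](\sldQFI)^{-1}$ so that all cross terms collapse.

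There is no substantial conceptual obstacle: invertibility of $\sldQFI$ comes from the regularity assumption, and Gram-matrix positivity is immediate. The only real care is bookkeeping: keeping straight which index of $B_{\bm{\theta}}$ corresponds to parameter differentiation and which to estimator component, so that the final expression comes out with the correct transpose convention. The proof also automatically recovers Corollary~\ref{cor_holevo} as the specialization $M=0$, and reduces to the standard SLD CR bound \eqref{CRSLDF} when $\hat{\Pi}$ is locally unbiased, since in that case $B_{\bm{\theta}}[\hat{\Pi}]=I_d$ and $V_{\bm{\theta}}^Q\le V_{\bm{\theta}}[\hat{\Pi}]$.
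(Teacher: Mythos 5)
Your proposal is correct and is essentially the paper's own proof: the paper likewise forms the $2d\times 2d$ Gram matrix of the operators $\hat{\Pi}_i-\theta_i I$ and $\SLD{j}$ with respect to the SLD inner product, notes its positivity, and invokes the Schur-complement criterion (Theorem 1.3.3 of Bhatia) to obtain the inequality. Your completion-of-squares variant and the transpose bookkeeping you flag are only cosmetic differences (indeed the paper itself is slightly loose about which off-diagonal block is $B$ versus $B^{\mathrm T}$), so nothing further is needed.
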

\begin{proof}
Here, we follow \cite{petz} that utilizes a property of positive matrix theory. 
Define the following $2d\times 2d$ block matrix based on the symmetric inner product. 
\be
M:=\left(\begin{array}{cc} V_{\bm{\theta}}^Q[\Pi(\hat{{\bm{\theta}}})] & B_{\bm{\theta}}[\hat{\Pi}]  \\ B_{\bm{\theta}}[\hat{\Pi}]^{\mathrm{T}} & \sldQFI\end{array}\right). 
\ee
Noting that this matrix is also expressed as
\begin{align}
M&=\big[ \sldin{m_a}{m_b} \big]_{a,b=1,2,\dots,2d},\\
m_a&=\begin{cases}
\hat{\Pi}_a-\theta_a I_d&\ (a=1,2,\dots,d),\\
\SLD{a-n}&\ (a=d+1,\dots,2d)
\end{cases},
\end{align}
we see $M$ is a positive semi-definite matrix. 
From the standard argument in the positive matrix theory, 
$M$ is positive if and only if 
$V_{\bm{\theta}}^Q[\Pi(\hat{{\bm{\theta}}})]- B_{\bm{\theta}}[\hat{\Pi}] (\sldQFI)^{-1} B_{\bm{\theta}}[\hat{\Pi}]^{\mathrm{T}}$ 
is positive (see, for example, Theorem 1.3.3 of \cite{bhatia}). 
\end{proof}

\subsection{Characterization of quantum parametric models} \label{sec:Qstat_model}
In this Appendix, we discuss characterization of the D-invariant model and the asymptotically classical model. 
%Details of proofs can be found in \cite{js16,js18_clmodel}. 
Consider a $d$-parameter model $\cM=\{\rho_{\bm{\theta}}\,|\,{\bm{\theta}}\in\Theta\}$ satisfying regularity conditions. 
The following lemma is known \cite{js16,js18_clmodel}.
% that the Holevo bound coincides with the RLD CR bound if and only if the model is D-invariant. 

\begin{lemma}\label{DRH}
A model is D-invariant 
if and only if
the Holevo bound $C_{\bm{\theta}}^H[W,\cM]$ is identical to the RLD-CR bound $C_{\bm{\theta}}^{\rm{R}}[W,\cM]$
for any weight matrix $W>0$.
\end{lemma}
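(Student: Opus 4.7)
The plan is to prove the two implications separately, with the baseline fact that
$C_{\bm{\theta}}^H[W,\cM]\ge C_{\bm{\theta}}^R[W,\cM]$ holds for every weight matrix $W>0$, regardless of D-invariance. This baseline is obtained as follows: for any admissible Hermitian vector $\bm X=(X_1,\ldots,X_d)$ in the Holevo minimization, the constraint $\tr{\partial_j\rho_{\bm{\theta}} X_i}=\delta_{i,j}$ rewrites as $\rldin{X_i}{\RLDdual{j}}=\delta_{i,j}$, so a Schwarz argument in the right inner product gives the matrix inequality $Z_{\bm{\theta}}(\bm X)\ge (\rldQFI)^{-1}$. Applying the general monotonicity $A\ge B \Rightarrow \Tr{W\Re A}+\Tr{|W^{1/2}\Im A\,W^{1/2}|}\ge \Tr{W\Re B}+\Tr{|W^{1/2}\Im B\,W^{1/2}|}$ (for Hermitian complex $A,B$) then yields the desired inequality.

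For the forward direction, assume $\cM$ is D-invariant at $\bm{\theta}$. Using the remark cited just before the lemma that the Holevo minimization may be restricted to $X_i$ lying in the complex span of the SLDs of a D-invariant extension, I would restrict the minimization to $X_i=\sum_j c_{ij}\SLD{j}$ with complex coefficients $c_{ij}$ chosen so that the Hermitian projection condition and the constraint $\tr{\partial_j\rho_{\bm{\theta}} X_i}=\delta_{i,j}$ are simultaneously met. A direct calculation using $\partial_j\rho_{\bm{\theta}}=\rho_{\bm{\theta}}\RLD{j}$, together with the decomposition $\RLD{i}=\SLD{i}-\mathrm{i}\,\cD{\rho_{\bm{\theta}}}(\SLD{i})$ derived from the definitions of SLD, RLD and the commutation operator $\cD{\rho_{\bm{\theta}}}$, shows that the optimal candidate is essentially $X_i^\star=\RLDdual{i}$ with its non-Hermitian part relocated into the imaginary part of $Z_{\bm{\theta}}$. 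D-invariance is precisely what guarantees that $\cD{\rho_{\bm{\theta}}}(\SLD{i})$ is expressible in $T_{\bm{\theta}}(\cM)$, so that $\RLDdual{i}$ stays within the allowed (complexified) SLD span. Substituting this choice yields $Z_{\bm{\theta}}(\bm X^\star)=(\rldQFI)^{-1}$, and the Holevo objective at $\bm X^\star$ coincides with $C_{\bm{\theta}}^R[W,\cM]$, closing the chain.

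For the converse, I would argue by contraposition: assume $\cM$ is not D-invariant at $\bm{\theta}$, meaning $\cD{\rho_{\bm{\theta}}}(\SLD{k})\notin T_{\bm{\theta}}(\cM)$ for some index $k$. The strategy is to exhibit a weight matrix $W$ (natural candidates are close to the rank-one matrix $\bm e_k\bm e_k^{\mathrm T}$, or a well-chosen perturbation thereof) for which the Holevo minimizer cannot reach $Z_{\bm{\theta}}(\bm X)=(\rldQFI)^{-1}$ while remaining Hermitian and admissible. The obstruction is that the RLD-optimal choice $\RLDdual{k}$, projected onto the complexified SLD tangent space of $\cM$, loses a component orthogonal to $T_{\bm{\theta}}(\cM)$, forcing any admissible $\bm X$ to satisfy $Z_{\bm{\theta}}(\bm X)>(\rldQFI)^{-1}$ in a direction weighted by $W$. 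Taking a rank-one limit $W\to \bm v\bm v^{\mathrm T}$ converts this strict operator gap into a strict numerical gap $C_{\bm{\theta}}^H[W,\cM]>C_{\bm{\theta}}^R[W,\cM]$, contradicting the hypothesis.

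The main obstacle will be the converse direction, specifically making the "probing weight matrix" construction rigorous. The delicate point is to control the non-holomorphic $|W^{1/2}\Im Z_{\bm{\theta}}(\bm X)\,W^{1/2}|$ term under the Hermitian constraint: both the real and imaginary parts of $Z_{\bm{\theta}}$ change when one projects the RLD optimum onto Hermitian span, and one must verify that the net effect is a strict increase of the Holevo objective for the chosen $W$. A natural way to organize this, which I would adopt, is to invoke the minimal D-invariant extension $\cM'\supset\cM$ (as in the HM08 reduction mentioned in the excerpt): inside $\cM'$ the Holevo bound equals the RLD bound, and the extra generators of $\cM'$ relative to $\cM$ index exactly the directions in which $\cM$ fails D-invariance, providing the precise book-keeping needed to isolate the strict inequality.
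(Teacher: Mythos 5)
You should note first that the paper itself does not prove this lemma: it is quoted as a known result with citations (\cite{js16,js18\_clmodel}), and only the equivalent algebraic characterizations of D-invariance (e.g.\ $\SLDdual{i}=\RLDdual{i}$ for all $i$, equivalently $Z^{\mathrm S}_{\bm\theta}=(\rldQFI)^{-1}$) are listed alongside it. Judged on its own merits, your baseline inequality and the forward implication are essentially sound: the Gram/Schwarz argument in the right inner product gives $Z_{\bm\theta}({\bm X})\ge(\rldQFI)^{-1}$ for every admissible ${\bm X}$, and the functional $Z\mapsto\Tr{W\Re Z}+\Tr{|W^{1/2}\Im Z\,W^{1/2}|}$ is monotone because it equals $\min\{\Tr{WP}:P\ \mbox{real symmetric},\ P\ge Z\}$; and under D-invariance the admissible Hermitian choice is simply $X_i=\SLDdual{i}=\RLDdual{i}$, whose Gram matrix is exactly $(\rldQFI)^{-1}$, giving $C^H=C^R$. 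One imprecision: you cannot ``restrict to complex combinations of the SLDs'' or ``relocate the non-Hermitian part of $\RLDdual{i}$'' — the Holevo minimization is over Hermitian $X_i$ only, and membership of $\RLDdual{i}$ in the complexified SLD span is not enough; what D-invariance buys is precisely the identity $\RLDdual{i}=\SLDdual{i}$, i.e.\ Hermiticity.

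The genuine gap is in the converse. Your argument tacitly assumes that equality of the two scalar bounds for a given $W$ forces the minimizer to attain $Z_{\bm\theta}({\bm X})=(\rldQFI)^{-1}$, so that non-Hermiticity of some $\RLDdual{k}$ yields a numerical gap; but the Holevo objective is monotone without being strictly monotone — the slack hidden in the $\Tr{|W^{1/2}\Im Z\,W^{1/2}|}$ term can absorb an operator-order increase $Z\gneq(\rldQFI)^{-1}$ at no cost in value — so ``$Z({\bm X})$ strictly exceeds $(\rldQFI)^{-1}$ in a $W$-weighted direction'' does not give $C^H[W]>C^R[W]$. Moreover, in the rank-one limit $W\to{\bm v}{\bm v}^{\mathrm T}$ the imaginary terms on \emph{both} sides vanish (${\bm v}^{\mathrm T}\Im Z\,{\bm v}=0$ for real ${\bm v}$), rank-one $W$ is outside the allowed class $W>0$, and strictness is not preserved in limits, so the probe as stated proves nothing. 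What is missing is the reduction of non-D-invariance to a \emph{real-part} gap: since $Z^{\mathrm S}_{\bm\theta}\ge(\rldQFI)^{-1}$ and $\Re Z^{\mathrm S}_{\bm\theta}=(\sldQFI)^{-1}$, and a positive semidefinite Hermitian matrix with vanishing real part is zero, failure of D-invariance is equivalent to $(\sldQFI)^{-1}\neq\Re(\rldQFI)^{-1}$. With that in hand one can finish correctly without ever forcing $Z=(\rldQFI)^{-1}$: pick ${\bm v}$ with ${\bm v}^{\mathrm T}[(\sldQFI)^{-1}-\Re(\rldQFI)^{-1}]{\bm v}>0$, take $W={\bm v}{\bm v}^{\mathrm T}+\epsilon I>0$, and use $C^H[W]\ge C^{\mathrm S}[W]=\Tr{W(\sldQFI)^{-1}}$ while $\Tr{|W^{1/2}\Im(\rldQFI)^{-1}W^{1/2}|}\to0$ as $\epsilon\to0$, so $C^H[W]>C^R[W]$ for small $\epsilon$. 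Your sketch contains neither the real-part reduction nor the use of the SLD bound, and the step it relies on instead would fail as written.
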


The next fact concerns the asymptotic achievability of the SLD CR bound \cite{js16,rjdd16,js18_clmodel}: 

\begin{lemma}\label{DSH}
A model is asymptotically classical 
if and only if
the Holevo bound $C_{\bm{\theta}}^H[W,\cM]$ is identical to the SLD-CR bound $C_{\bm{\theta}}^{\rm{S}}[W,\cM]$
for any weight matrix $W>0$.
\end{lemma}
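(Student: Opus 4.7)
The plan is to prove both directions by analyzing the Holevo bound at the specific candidate $X_i = \SLDdual{i}$ (the SLD dual basis) and to compare a general feasible $\bm X$ against this candidate via the decomposition $X_i = \SLDdual{i} + Y_i$ with $Y_i$ orthogonal (in the symmetric SLD inner product) to the SLD tangent space. The key identities I will rely on are $\Re Z_{\bm{\theta}}(\bm X)_{i,j} = \sldin{X_i}{X_j}$ and $\Im Z_{\bm{\theta}}(\bm X)_{i,j} = \tfrac{i}{2}\tr{\rho_{\bm{\theta}}[X_i,X_j]}$ for Hermitian $X_i$.

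For the ``only if'' direction, I first note that the universal inequality $C^H_{\bm{\theta}}[W,\cM] \ge C^{\rm S}_{\bm{\theta}}[W,\cM]$ follows because the Holevo constraint $\tr{\partial_j\rho_{\bm{\theta}} X_i}=\delta_{i,j}$ is precisely $\sldin{\SLD{j}}{X_i}=\delta_{i,j}$, so writing $X_i = \SLDdual{i}+Y_i$ with $\sldin{\SLD{j}}{Y_i}=0$ yields $\Re Z_{\bm{\theta}}(\bm X) = (\sldQFI)^{-1} + [\sldin{Y_i}{Y_j}] \ge (\sldQFI)^{-1}$, while the trace-norm term is non-negative. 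Now assume asymptotic classicality. Choose $X_i=\SLDdual{i}$, which satisfies the Holevo constraint. Then $\Re Z_{\bm{\theta}}(\bm L) = (\sldQFI)^{-1}$, and since each $\SLDdual{i}$ is a real linear combination of the $\SLD{k}$'s, the condition $\tr{\rho_{\bm{\theta}}[\SLD{i},\SLD{j}]}=0$ for all $i,j$ extends bilinearly to $\tr{\rho_{\bm{\theta}}[\SLDdual{i},\SLDdual{j}]}=0$, so $\Im Z_{\bm{\theta}}(\bm L) = 0$. The Holevo functional at $\bm L$ therefore equals $C^{\rm S}_{\bm{\theta}}[W,\cM]$, and combined with the universal lower bound this gives $C^H_{\bm{\theta}}[W,\cM] = C^{\rm S}_{\bm{\theta}}[W,\cM]$ for every $W>0$.

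For the ``if'' direction, suppose $C^H_{\bm{\theta}}[W,\cM]=C^{\rm S}_{\bm{\theta}}[W,\cM]$ for every $W>0$. Let $\bm X^*$ attain the minimum (the minimum exists by a compactness argument analogous to Remark \ref{remark2}). Since both $\Tr{W\Re Z_{\bm{\theta}}(\bm X^*)}\ge \Tr{W(\sldQFI)^{-1}}$ and $\Tr{|W^{1/2}\Im Z_{\bm{\theta}}(\bm X^*)W^{1/2}|}\ge 0$ hold and their sum equals $\Tr{W(\sldQFI)^{-1}}$, both inequalities are simultaneously tight for every $W>0$. The first equality, holding for all $W>0$, upgrades to $\Re Z_{\bm{\theta}}(\bm X^*) = (\sldQFI)^{-1}$, which via the decomposition above forces the Gram matrix $[\sldin{Y_i}{Y_j}]$ to vanish. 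Full-rankness of $\rho_{\bm{\theta}}$ turns $\sldin{Y_i}{Y_i}=0$ into $Y_i=0$, so $X_i^*=\SLDdual{i}$. The second equality then reads $W^{1/2}\Im Z_{\bm{\theta}}(\bm L)W^{1/2}=0$ for all $W>0$, giving $\Im Z_{\bm{\theta}}(\bm L)=0$, i.e., $\tr{\rho_{\bm{\theta}}[\SLDdual{i},\SLDdual{j}]}=0$. Inverting the (nonsingular) transition between the bases $\{\SLDdual{i}\}$ and $\{\SLD{i}\}$ yields $\tr{\rho_{\bm{\theta}}[\SLD{i},\SLD{j}]}=0$, which is asymptotic classicality.

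The main obstacle is the uniqueness argument in the ``if'' direction: one must ensure that the equality $\Re Z_{\bm{\theta}}(\bm X^*)=(\sldQFI)^{-1}$ truly pins down the minimizer up to operators that cannot affect $\Im Z_{\bm{\theta}}(\bm X^*)$. The orthogonal decomposition in the symmetric inner product handles this cleanly when $\rho_{\bm{\theta}}$ is strictly positive (the standing assumption in the review), but for rank-deficient states one would need to work modulo the kernel of $\rho_{\bm{\theta}}$, and the commutation condition would have to be interpreted on the support. A secondary point requiring care is that one must vary $W$ over the full positive cone in order to separately extract equality of the real parts and vanishing of the absolute imaginary part; a single $W$ would not suffice.
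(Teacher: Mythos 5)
Your proposal is essentially correct, but note that the paper itself does not prove Lemma \ref{DSH}: it states it as a known result with citations (to \cite{js16,rjdd16,js18_clmodel}) and, in the same appendix, lists equivalent characterizations of asymptotic classicality. Your argument is the standard direct one from that literature: the decomposition $X_i=\SLDdual{i}+Y_i$ with $Y_i$ orthogonal to the SLD tangent space, the identities $\Re Z_{\bm{\theta}}({\bm X})=\big[\sldin{X_i}{X_j}\big]$ and $\Im Z_{\bm{\theta}}({\bm X})_{i,j}\propto\tr{\rho_{\bm{\theta}}[X_i,X_j]}$, evaluation at ${\bm X}={\bm L}=(\SLDdual{i})$ for sufficiency, and the rigidity argument $\Re Z=(\sldQFI)^{-1}\Rightarrow Y_i=0$ (using full rank of $\rho_{\bm{\theta}}$) plus $\Im Z({\bm L})=0$ for necessity. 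This is a sound, self-contained proof of the statement the paper only cites.

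One correction to your closing remark: it is not true that ``a single $W$ would not suffice.'' For a fixed strictly positive $W$, equality $C^H_{\bm{\theta}}[W,\cM]=C^{\rm S}_{\bm{\theta}}[W,\cM]$ already forces both terms to be tight at the minimizer for that $W$, and then $\Tr{W\bigl(\Re Z_{\bm{\theta}}({\bm X}^*)-(\sldQFI)^{-1}\bigr)}=0$ with $\Re Z_{\bm{\theta}}({\bm X}^*)-(\sldQFI)^{-1}\ge 0$ and $W>0$ gives the matrix equality outright, while vanishing of the trace norm gives $W^{1/2}\Im Z_{\bm{\theta}}({\bm X}^*)W^{1/2}=0$ and hence $\Im Z_{\bm{\theta}}({\bm X}^*)=0$ by invertibility of $W$. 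So the ``if'' direction needs only one $W_0>0$ — consistent with the paper's listed equivalent condition ``$\exists W_0>0$, $C_{\bm{\theta}}^H[W_0,\cM]=C_{\bm{\theta}}^{\rm S}[W_0,\cM]$'' — and this also removes the apparent difficulty that the minimizer ${\bm X}^*$ depends on $W$, which your phrasing (fixing ${\bm X}^*$ while varying $W$) glosses over. With that adjustment the proof is complete; your caveat about rank-deficient states is moot under the paper's standing full-rank assumption.
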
 

Below, we shall list several equivalent conditions for the D-invariant model and the asymptotically classical model. 

First, let $\cD{\rho_{\bm{\theta}}}$ be the commutation operator at ${\bm{\theta}}$ defined by \eqref{def:CommOp}. 
Then, we have the equivalent relations for the D-invariant model. 
\begin{enumerate}
\item The model $\cM$ is D-invariant at ${\bm{\theta}}$. 
\item $\forall i$, $\cD{\rho_{\bm{\theta}}}(\SLD{i})\in T_{\bm{\theta}}(\cM)$. (Definition)
\item $\forall i$, $\cD{\rho_{\bm{\theta}}}(\RLD{i})\in \tilde{T}_{\bm{\theta}}(\cM)$.
\item $\forall i$, $\SLDdual{i}=\RLDdual{i}$. 
\item $(\rldQFI)^{-1}=Z^{\mathrm{S}}_{\bm{\theta}}$. 
\item $(\sldQFI)^{-1}=Z^{\mathrm{R}}_{\bm{\theta}}$. 
\item $\forall i,j$, $\cD{\rho_{\bm{\theta}}}(\SLD{i})\bot \SLDdual{j}-\RLDdual{j}\mbox{ with respect to }\sldin{\cdot}{\cdot}$.
\item $\forall i,j$, $\cD{\rho_{\bm{\theta}}}(\RLD{i})\bot \SLDdual{j}-\RLDdual{j}\mbox{ with respect to }\sldin{\cdot}{\cdot}$.
\end{enumerate}
In the above result, we use the following definitions: 
The RLD tangent space at ${\bm{\theta}}$ is defined by the complex span of the RLD operators: 
\begin{equation}\label{rldT}
\tilde{T}_{\bm{\theta}}(\cM)=\mathrm{span}_\bbc \{\RLD{i}\}_{i=1}^d. 
\end{equation}
The Hermitian complex matrices $Z^{\mathrm{S}}_{\bm{\theta}}$ and $Z^{\mathrm{R}}_{\bm{\theta}}$ are defined by 
\begin{align}
Z^{\mathrm{S}}_{\bm{\theta}}:=\Big[ \rldin{\SLDdual{i}}{\SLDdual{j}} \Big],\quad 
Z^{\mathrm{R}}_{\bm{\theta}}:=\Big[ \sldin{\RLDdual{i}}{\RLDdual{j}} \Big].  
\end{align}

Next, we list equivalent conditions for the asymptotically classical model. 
\begin{enumerate}
\item The model $\cM$ is asymptotically classical at ${\bm{\theta}}$. 
\item $\forall i,j$, $\tr{\rho_{\bm{\theta}} [\SLD{i}\,,\,\SLD{j}]}=0$. 
\item $\Im Z^{\mathrm{S}}_{\bm{\theta}} =0$. 
\item $(\sldQFI)^{-1}=Z^{\mathrm{S}}_{\bm{\theta}}$. 
\item $\exists W_0>0$, $C_{\bm{\theta}}^H[W_0,\cM]= C_{\bm{\theta}}^{\rm S}[W_0,\cM]$. 
\item $\forall i,j$, $\cD{\rho_{\bm{\theta}}}(\SLD{i})\bot \SLD{j}\mbox{ with respect to }\sldin{\cdot}{\cdot}$. 
\item $\forall i,j$, $\cD{\rho_{\bm{\theta}}}(\SLD{i})\bot \RLD{j}\mbox{ with respect to }\rldin{\cdot}{\cdot}$. 
\item $\forall i,j$, $\cD{\rho_{\bm{\theta}}}(\SLD{i})\bot \SLD{j}\mbox{ with respect to }\rldin{\cdot}{\cdot}$. 
\item $\forall i,j$, $\cD{\rho_{\bm{\theta}}}(\RLD{i})\bot \SLD{j}\mbox{ with respect to }\rldin{\cdot}{\cdot}$. 
\end{enumerate}

\section{Proofs}\label{sec:AppPr}

\subsection{Proof of Lemma \ref{lem_lucond}} \label{sec:AppPr1}
First, it is straightforward to see that the transformation \eqref{nui_change0} 
preserves the first condition of \eqref{lu_cond}. 
By elementary calculus, we can show that the partial derivatives are transformed as 
\begin{align} \label{del_change1}
\frac{\del}{\del {\xi}_i}&= \sum_{j=1}^d\frac{\del\theta_j}{\del{\xi}_i}\frac{\del}{\del \theta_j} 
=\frac{\del}{\del \theta_i}-\sum_{j=\dI+1}^d\frac{\del\theta_j}{\del{\xi}_i}\frac{\del}{\del \theta_j}  \quad(i=1,2,\dots,\dI),\\ %\nonumber
\frac{\del}{\del {\xi}_j}&= \sum_{i=1}^d\frac{\del\theta_i}{\del{\xi}_j}\frac{\del}{\del \theta_i} 
=\frac{\del}{\del \theta_j}\quad(j=\dI+1,\dots,d). 
\label{del_change2}
\end{align} 
The second condition of \eqref{lu_cond} is verified as follows. 
For $i,j=1,2,\ldots,\dI$, using \eqref{del_change1} reads
\begin{align}
\frac{\del}{\del{\xi}_j}E_{\bm{\xi}}[{\hat{\theta}_i}(X)]&=
\frac{\del}{\del{\theta}_j}E_{\bm{\theta}}[{\hat{\theta}_i}(X)]- \sum_{k=\dI+1}^d\frac{\del\theta_k}{\del{\xi}_j}\frac{\del}{\del \theta_k}E_{\bm{\theta}}[{\hat{\theta}_i}(X)]\\
&=\frac{\del}{\del{\theta}_j}E_{\bm{\theta}}[{\hat{\theta}_i}(X)]=\delta_{i,j}. 
\end{align}
The second term in the first line vanishes because of the assumption of locally unbiasedness. 
For $i=1,2,\ldots,\dI$ and $j=\dI+1,\dI+2,\ldots,d$, we can directly check 
\be
\frac{\del}{\del\xi_j}E_{\bm{\xi}}[{\hat{\theta}_i}(X)]=\frac{\del}{\del\theta_j}E_{\bm{\theta}}[{\hat{\theta}_i}(X)]=0= \delta_{i,j}. 
\ee
Therefore, we prove the relation 
$\frac{\del}{\del{\xi}_j}E_{\bm{\xi}}[{\hat{\theta}_i}(X)]=0$ for $i=1,2,\dots,\dI$ and $j=1,2,\dots, d$. 

\subsection{Derivation of expression \eqref{MICRbound2}}\label{sec:AppPr2}
Let us define another bound by
\[
\overline{C}_{\bm{\theta};{\mathrm{I}}}[W_{\mathrm{I}},\cM]:=\min_{\Pi\mathrm{: POVM}} 
\Tr{W_{\mathrm{I}} J_{\bm \theta}^{\rm{I},\rm{I}}[\Pi]},
%J^{\bm{\theta};{\mathrm{I}},{\mathrm{I}}}[\Pi] },
\] 
then, we will prove $
\overline{C}_{\bm{\theta};{\mathrm{I}}}[W_{\mathrm{I}},\cM]
={C}_{\bm{\theta};{\mathrm{I}}}[W_{\mathrm{I}},\cM]$. 
The proof here is almost same line of argument as \cite{nagaoka89}. 
Using the CR inequality for any locally unbiased estimator for $\bm{\theta}_{\mathrm{I}}$, 
we have 
\begin{align*}
C_{\bm{\theta};{\mathrm{I}}}[W_{\mathrm{I}},\cM]= \min_{\hat{\Pi}_{\mathrm{I}}\mathrm{\,:l.u.\,at \,}\bm{\theta}} 
\Tr{W_{\mathrm{I}}V_{\bm{\theta};{\mathrm{I}}}[\hat{\Pi}_{\mathrm{I}}]}
\ge \Tr{W_{\mathrm{I}} 
J_{\bm{\theta}}^{\mathrm{I},{\mathrm{I}}}[\Pi] }
\end{align*}
Since this is true for all POVMs, we have the relation $C_{\bm{\theta};{\mathrm{I}}}[W_{\mathrm{I}},\cM]
\ge \overline{C}_{\bm{\theta};{\mathrm{I}}}[W_{\mathrm{I}},\cM]$. 
To prove the other direction, note that given a POVM $\Pi$ we can always construct a locally unbiased estimator 
at $\bm{\theta}_0=\big(\theta_1(0),\dots,\theta_d(0)\big)$. 
For example, 
\be\label{lu_est}
\hat{\theta}_i(X)= \theta_i(0)+\sum_{j=1}^d \left((J_{\bm{\theta}_0}[\Pi])^{-1}\right)_{ji}  \left.\frac{\del \log p_{\bm{\theta}}(X|\Pi)}{\del\theta_j}\right|_{\bm{\theta}_0}. 
\ee
Since this estimator is also locally unbiased for the parameter of interest $\bm{\theta}_{\mathrm{I}}$, 
and the MSE matrix about $\hat{\Pi}=(\Pi,\hat{{\bm{\theta}}})$ satisfies $V_{{\bm{\theta}}}[\hat{\Pi }]=J_{\bm{\theta}}^{-1}[\Pi]$. 
In turn, we have a relationship 
$V_{\bm{\theta};{\mathrm{I}}}[\hat{\Pi}_{\mathrm{I}}]
=J_{\bm{\theta}}^{\mathrm{I},{\mathrm{I}}}[\Pi]$ for the parameter of interest. 
Thus, we obtain $C_{\bm{\theta};{\mathrm{I}}}[W_{\mathrm{I}},\cM]\le \overline{C}_{\bm{\theta};{\mathrm{I}}}[W_{\mathrm{I}},\cM]$. 
This proves 
$C_{\bm{\theta};{\mathrm{I}}}[W_{\mathrm{I}},\cM]= 
\overline{C}_{\bm{\theta};{\mathrm{I}}}[W_{\mathrm{I}},\cM]$.

\subsection{Proofs for properties in section \ref{sec:QpoLocal}} \label{sec:AppPO}
\noindent{\it Property 1: The partial SLD Fisher information matrix under parameter change.}\\
The partial SLD Fisher information defined by \eqref{qpSLDinfo}: 
\[
J^{\mathrm{S}}_{\bm{\theta}}({\mathrm{I}}|{\mathrm{N}})
=\sldqfi{\bm{\theta};{\mathrm{I}},{\mathrm{I}}}-
J_{\bm{\theta};\rm{I},\rm{N}}^{\rm S}
%\sldqfi{\bm{\theta}_{\mathrm{I}}\bm{\theta}_{\mathrm{N}}} 
{\big(J^{\mathrm{S}}_{\bm{\theta};{\mathrm{N}},{\mathrm{N}}}\big)}^{-1}
\sldqfi{\bm{\theta};{\mathrm{N}},{\mathrm{I}}}
\]
is  invariant under any reparametrization of the nuisance parameters of the form \eref{nui_change0}
and is transformed as the same manner as the usual Fisher information matrix.\\
\begin{proof}
Let us consider the following change of the parameters, 
\be\label{ParaChange}
{\bm{\theta}}=(\bm{\theta}_{\mathrm{I}},\bm{\theta}_{\mathrm{N}})\mapsto{\bm{\xi}}=\big({\bm{\xi}}_{\mathrm{I}}(\bm{\theta}_{\mathrm{I}}),{\bm{\xi}}_{\mathrm{N}}(\bm{\theta}_{\mathrm{I}},\bm{\theta}_{\mathrm{N}}) \big). 
\ee 
The Jacobian matrix for this coordinate transformation is block diagonal as 
\[
T=\frac{\del{\bm{\theta}}}{\del{\bm{\xi}}}=\left(\begin{array}{cc} \frac{\del\bm{\theta}_{\mathrm{I}}}{\del{\bm{\xi}}_{\mathrm{I}}}&
 \frac{\del\bm{\theta}_{\mathrm{N}}}{\del{\bm{\xi}}_{\mathrm{I}}} \\[1ex]
0 & \frac{\del\bm{\theta}_{\mathrm{N}}}{\del{\bm{\xi}}_{\mathrm{N}}}\end{array}\right)
=: \left(\begin{array}{cc} t_\mathrm{I}& t  \\[1ex]
0 & t_\mathrm{N}\end{array}\right). 
\]
Let us express the SLD Fisher information matrix as
\be\nonumber 
{J}_{\bm{\xi}}^\mathrm{S}=\left(\begin{array}{cc}
J^{\mathrm{S}}_{{\bm{\xi}};{\mathrm{I}},{\mathrm{I}}}& 
\sldqfi{{\bm{\xi}};{\mathrm{I}},{\mathrm{N}}} \\[0ex] 
\sldqfi{{\bm{\xi}};{\mathrm{N}},{\mathrm{I}}}
& \sldqfi{{\bm{\xi}};{\mathrm{N}},{\mathrm{N}}} \end{array}\right), 
\ee
and use the transformation relation 
\begin{align*}
{J}_{\bm{\xi}}^\mathrm{S}&=T \sldQFI T^{\mathrm{T}}\\
&=\left(\begin{array}{cc} 
t_\mathrm{I}\sldqfi{\bm{\theta};{\mathrm{I}},{\mathrm{I}}}\left(t_\mathrm{I}\right)^{\mathrm{T}} 
+t_\mathrm{I}\sldqfi{\bm{\theta};{\mathrm{I}},{\mathrm{N}}} \left(  t \right)^{\mathrm{T}} 
+ t\sldqfi{\bm{\theta};{\mathrm{N}},{\mathrm{I}}}\left( t_\mathrm{I} \right)^{\mathrm{T}} 
+t\sldqfi{\bm{\theta};{\mathrm{N}},{\mathrm{N}}}\left( t \right)^{\mathrm{T}}  &
\left( t_\mathrm{I} \sldqfi{\bm{\theta};{\mathrm{I}},{\mathrm{N}}} 
+  tJ^{\mathrm{S}}_{\bm{\theta};{\mathrm{N}},{\mathrm{N}}} \right) \left( t_\mathrm{N} \right)^{\mathrm{T}} \\[1ex]
t_\mathrm{N}\left( \sldqfi{\bm{\theta};{\mathrm{N}},{\mathrm{I}}}
\left( t_\mathrm{I} \right)^{\mathrm{T}}
+ \sldqfi{\bm{\theta};{\mathrm{N}},{\mathrm{I}}}
\left( t \right)^{\mathrm{T}} \right)&
t_\mathrm{N}J^{\mathrm{S}}_{\bm{\theta};{\mathrm{N}},{\mathrm{N}}}\left( t_\mathrm{N} \right)^{\mathrm{T}}  
\end{array}\right). 
\end{align*}
Then, by the direct calculation, we obtain
\begin{align*}
{J}^{\mathrm{S}}_{\bm{\xi}}({\mathrm{I}}|{\mathrm{N}})
&= \sldqfi{{\bm{\xi}};{\mathrm{I}},{\mathrm{I}}}-
\sldqfi{{\bm{\xi}};{\mathrm{I}},{\mathrm{N}}} 
\left(J^{\mathrm{S}}_{{\bm{\xi}};{\mathrm{N}},{\mathrm{N}}}\right)^{-1}
\sldqfi{{\bm{\xi}};{\mathrm{N}},{\mathrm{I}}}\\
&=
t_\mathrm{I}\sldqfi{\bm{\theta};{\mathrm{I}},{\mathrm{I}}}\left(t_\mathrm{I}\right)^{\mathrm{T}} 
+t_\mathrm{I}\sldqfi{\bm{\theta};{\mathrm{I}},{\mathrm{N}}} \left(  t \right)^{\mathrm{T}} 
+ t\sldqfi{\bm{\theta};{\mathrm{N}},{\mathrm{I}}}\left( t_\mathrm{I} \right)^{\mathrm{T}} 
+t\sldqfi{\bm{\theta};{\mathrm{N}},{\mathrm{N}}}\left( t \right)^{\mathrm{T}}\\
&\quad-\left( t_\mathrm{I} 
\sldqfi{\bm{\theta};{\mathrm{I}},{\mathrm{N}}} 
+  tJ^{\mathrm{S}}_{\bm{\theta};{\mathrm{N}},{\mathrm{N}}} \right)
\left(  J^{\mathrm{S}}_{\bm{\theta};{\mathrm{N}},{\mathrm{N}}}\right)^{-1} 
\left( \sldqfi{\bm{\theta};{\mathrm{N}},{\mathrm{I}}}
\left( t_\mathrm{I} \right)^{\mathrm{T}}
+ \sldqfi{\bm{\theta};{\mathrm{N}},{\mathrm{I}}}
\left( t \right)^{\mathrm{T}} \right)\\
&=t_\mathrm{I}  
\left( \sldqfi{\bm{\theta};{\mathrm{I}},{\mathrm{I}}}-
\sldqfi{\bm{\theta};{\mathrm{I}},{\mathrm{N}}} 
{\big(J^{\mathrm{S}}_{\bm{\theta};{\mathrm{N}},{\mathrm{N}}}\big)}^{-1}
\sldqfi{\bm{\theta};{\mathrm{N}},{\mathrm{I}}}\right) \left( t_\mathrm{I} \right)^{\mathrm{T}}\\
&=\frac{\del\bm{\theta}_{\mathrm{I}}}{\del{\bm{\xi}}_{\mathrm{I}}}  
J^{\mathrm{S}}_{\bm{\theta}}({\mathrm{I}}|{\mathrm{N}})
\left(\frac{\del\bm{\theta}_{\mathrm{I}}}{\del{\bm{\xi}}_{\mathrm{I}}}\right)^\mathrm{T}. 
\end{align*}
This shows the statement. 
\end{proof}

{\it Property 2:  After parameter orthogonalization, the SLD operator about the parameter of interest in the new parametrization 
is expressed as }
\be\label{sldxi-theta}
L^{\mathrm{S}}_{{\bm{\xi}};1}=(\sldQFIinv{1,1})^{-1}\SLDdual{1}.
\ee
\begin{proof} 
Inserting $\SLD{i}=\sum_j \sldqfi{{\bm{\theta}};j,i}\SLDdual{j}$ into expression \eqref{sldxi} 
with ${\alpha}=1$, we have $L^{\mathrm{S}}_{{\bm{\xi}};1}=\sum_{i,j}\frac{\del \theta_i}{\del {\xi}_1}\sldqfi{{\bm{\theta}};j,i}\SLDdual{j}$, where the summation over the index $i$ vanishes except for $i=1$ due to assumption \eqref{qdiffeq}. 
Then, we get $L^{\mathrm{S}}_{{\bm{\xi}};1}=\sum_{i}\frac{\del \theta_i}{\del {\xi}_1}\sldqfi{{\bm{\theta}};1,i}\SLDdual{1}$, and thus $L^{\mathrm{S}}_{{\bm{\xi}};1}$ is proportional to $\SLDdual{1}$. The proportionality factor is 
determined by $\sldin{L^{\mathrm{S}}_{{\bm{\xi}};1}}{\SLDdual{1}}=\frac{\del \theta_1}{\del {\xi}_1}=1$. 
\end{proof}

\noindent
{\it Property 3: The partial SLD Fisher information of the parameter of interest after the parameter orthogonalization is preserved.}\\
Although the parameter orthogonalization method enables us to have the relation 
$J^{\mathrm{S};1,1}_{\bm{\xi}}=(\sldqfi{{\bm{\xi}};1,1})^{-1}$ in the new parameterization, 
it preserves the partial SLD Fisher information for the parameter of interest as 
\be
J^{\mathrm{S};1,1}_{\bm{\xi}}=\sldQFIinv{1,1}. 
\ee
That is, the precision limit for the parameter of interest does not change as should be. 
(See also Theorem \ref{thmAN2} in section \ref{sec:1para2}.) 
\begin{proof}
The simplest way to show this relation is to 
compute $\sldqfi{{\bm{\xi}};1,1}=\langle L^{\mathrm{S}}_{{\bm{\xi}};1},L^{\mathrm{S}}_{{\bm{\xi}};1}\rangle_{\rho_{\bm{\xi}}}$. 
Using the property 1, we have $\sldqfi{{\bm{\xi}};1,1}=(\sldQFIinv{1,1})^{-2}\sldin{\SLDdual{1}}{\SLDdual{1}}
=(\sldQFIinv{1,1})^{-2}\sldQFIinv{1,1}=(\sldQFIinv{1,1})^{-1}$. 
\end{proof}

\subsection{Proof for Theorem \ref{thmAN2}} \label{sec:AppPr4}
We first show that the inequality $V_{\bm{\theta};{\mathrm{I}}}[\hat{\Pi}] \ge (\sldQFI)^{-1}_{1,1}$ holds for all locally unbiased 
estimators for $\bm{\theta}_{\mathrm{I}}=\theta_1$. We then show its achievability by constructing an optimal estimator explicitly. 

Given an $\hat{\Pi}_{\mathrm{I}}=(\Pi,\hat{{\bm{\theta}}}_{\mathrm{I}})$ for the parameter of interest, 
we add an arbitrary estimator for the nuisance parameter $\hat{{\bm{\theta}}}_{\mathrm{N}}$, 
for example $\hat{{\bm{\theta}}}_{\mathrm{N}}$ can be a constant function. 
Suppose an estimator $\hat{\Pi}=(\Pi,\hat{{\bm{\theta}}})$ for the parameter ${\bm{\theta}}=(\bm{\theta}_{\mathrm{I}},\bm{\theta}_{\mathrm{N}})$ is locally unbiased 
for $\bm{\theta}_{\mathrm{I}}$ at ${\bm{\theta}}$, the bias matrix defined in Lemma \ref{lemm_crgen} takes of the form:
\be
B_{\bm{\theta}}[\hat{\Pi}]=\left(\begin{array}{cc}{I_{\dI}} & {0} \\B_1 & B_2\end{array}\right), 
\ee 
with some matrices $B_1,B_2$. 
The generalized SLD quantum CR inequality in Lemma \ref{lemm_crgen} then gives 
the $\dI\times \dI$ block matrix of 
\begin{align}
P_{\mathrm{I}}V_{\bm{\theta}}^Q[\Pi(\hat{{\bm{\theta}}})]P_{\mathrm{I}}&\ge P_{\mathrm{I}}  B_{\bm{\theta}}[\hat{\Pi}] (\sldQFI)^{-1} B_{\bm{\theta}}[\hat{\Pi}]^{\mathrm T}P_{\mathrm{I}}
=P_{\mathrm{I}} (\sldQFI)^{-1} P_{\mathrm{I}}=
J_{\bm{\theta}}^{\mathrm{S};\rm{I,I}},
\end{align}
where $P_{\mathrm{I}}=\sum_{i=1}^{\dI} e_ie_i^{\mathrm T}$ denotes the projector onto a subspace of the first $k$ element, 
i.e., the subspace for the parameters of interest. 
Combining this with Corollary \ref{cor_holevo}, we show that any locally unbiased estimator for the parameters of interest 
satisfies the matrix inequality $V_{\bm{\theta};{\mathrm{I}}}[\hat{\Pi}]\ge 
J_{\bm{\theta}}^{\mathrm{S};\rm{I,I}}$.
In particular, by letting $\bm{\theta}_{\mathrm{I}}=\theta_1$ and $\bm{\theta}_{\mathrm{N}}=(\theta_2,\dots,\theta_d)$, we obtain the converse part of this theorem. 

To make our discussion clear, we perform the parameter orthogonalization method with respect to the 
SLD Fisher information matrix. Then, the model in the new parametrization ${\bm{\xi}}=({\bm{\xi}}_{1},{\bm{\xi}}_{\mathrm{N}})$ is an orthogonal model 
according to this partition. Let us consider a projection measurement $\Pi^*=\{\Pi_x\}$ 
composed of the spectral decomposition of the SLD operator $L^{\mathrm{S}}_{{\bm{\xi}};1}$ and 
an estimator 
\be
\hat{{\bm{\xi}}}_1(x)={\xi}_1+g_{\bm{\xi}}^{1,1}\frac{\del \ell_{{\bm{\xi}}}(x)}{\del{\xi}_1},
\ee
with $\ell_{{\bm{\xi}}}(x)=\log\tr{\rho_{\bm{\xi}}\Pi_x}$. 
It is straightforward to show that this estimator ${\hat{\Pi}}^*_{\mathrm{I}}=(\Pi^*,\hat{\theta}_{1})$ for 
the parameter of interest is locally unbiased for $\xi_{\rm I}=\xi_{1}=\theta_1$ at ${\bm{\theta}}$. 
The $(1,1)$ component of the MSE matrix is easily computed by 
\begin{align}
V_{\bm{\theta};{\mathrm{I}}}[{\hat{\Pi}}^*_{\mathrm{I}}]=\sum_{x\in\cX} (J_{\bm{\xi}}^{\mathrm{S};1,1})^2 \big(\frac{\del \ell_{{\bm{\xi}}}(x)}{\del{\xi}_1}\big)^2\tr{\rho_{\bm{\xi}}\Pi_x}
= (J_{\bm{\xi}}^{\mathrm{S};1,1})^2J_{{\bm{\xi}};1,1}[\Pi^*]
= (J_{\bm{\xi}}^{\mathrm{S};1,1})^2 \sldqfi{{\bm{\xi}};1,1}=J_{\bm{\xi}}^{\mathrm{S};1,1},  
\end{align}
where Lemma \ref{lem_optPVM} is used to get the second equality. 
The last line follows from the fact that the model is orthogonal. 
Therefore, using property 3) of section \ref{sec:QpoGlobal}, we obtain 
\be
V_{\bm{\theta};{\mathrm{I}}}[{\hat{\Pi}}^*_{\mathrm{I}}]=J_{\bm{\xi}}^{\mathrm{S};1,1}=\sldQFIinv{1,1}. 
\ee
The statement about the optimal estimator is also immediate if we use property 2) of section \ref{sec:QpoGlobal}. 

\subsection{Proof of Ineq. \eqref{WVCW}}\label{AC-3}
Assume that a sequence of estimators $\{\Pi^{(n)}\}_{n=0}^{\infty}$ satisfies the local asymptotic covariance condition at ${\bm{\theta}}$.
We denote the limiting distribution family and the Fisher information matrix by $\{P_{t}\}_{t}$ and $ J$. We will show the following.
There exists a POVM $M_0$ on ${\cal H}$ such that
\begin{align}
 J\le J_{\bm{\theta}}^{M_0}.
\label{HJ10}
\end{align}
Here, $J_{\bm \theta}^M=J_{\bm \theta}[M]$ is the Fisher information matrix of the family of 
the resultant distributions when the measurement corresponding to $M$ is applied.
The local asymptotic covariance condition guarantees the unbiased condition for the family of the distributions 
$\{P_{{\bm t}}\}_{{\bm t}}$ on $\mathbb{R}^d$ when 
the variable on $\mathbb{R}^d$ is considered as an estimator of ${\bm t}$.
Therefore, Cram\'{e}r-Rao inequality shows that
$V_{\bm{\theta}_0}[\{\Pi^{(n)}\}_{n=0}^{\infty}]
\ge (J_{\bm{\theta}}^{M_0})^{-1}$.
Hence, using \eqref{MICRbound} and this inequality, we obtain \eqref{WVCW}. 

In the following, we show \eqref{HJ10}.
Define
\begin{align}
P_{\bm{\theta}_0,{\bm t}}^{(n)}(B):=
\tr {\rho_{\bm{\theta}_0+\frac{{\bm t}}{\sqrt{n}}} \Pi( \{ \hat{{\bm{\theta}}} | 
(\hat{{\bm{\theta}}}-\bm{\theta}_0)\sqrt{n}- {\bm t} \in B
 \}  ) }.
\end{align}
Let $F(P,Q)$ be the fidelity between two distributions $P$ and $Q$.
Lemma 20 of \cite{YCH18} shows that
\begin{align}
F(P_{\bm{\theta}_0,0},P_{\bm{\theta}_0,{\bm t}})
\ge
\limsup_{n\to \infty}
F(P_{\bm{\theta}_0,0}^{(n)},P_{\bm{\theta}_0,{\bm t}}^{(n)}).
\end{align}

Let ${\cal M}$ be the set of extremal points in the set of POVMs on ${\cal H}$.
Let ${\cal P}({\cal M})$ be the set of probability distributions on 
${\cal M}$.
Hence, any POVM can be written as an element of ${\cal P}({\cal M})$.
We denote the set of outcomes of POVM by ${\cal Y}$.
An adaptive measurement can be written as 
a set of $\{f_k\}_{k=1}^n$
functions $f_k: {\cal Y}^{k-1}\to{\cal P}({\cal M})$.
Assume that the norm of $t$ is smaller than a certain value $R$.

Let $P_{{\bm t},1}^{(n),k} $ be the distribution of 
the initial $k$ outcomes when the true state is $\rho_{{\bm{\theta}}+{\bm t}/\sqrt{n}}$.
Let $F_{{\bm t},M}$ be the fidelity between 
$P^M_{{\bm{\theta}}} $ and $P^M_{{\bm{\theta}}+{\bm t}} $,
where $P^M_{{\bm{\theta}}} $ is the output distribution with the POVM $M$ and the state $\rho_{\bm{\theta}}$.
We inductively define $P_{{\bm t},2}^{(n),k}$
as
\be
P_{{\bm t},2}^{(n),k}(d y^{k-1}):=
\sqrt{P_{0,1}^{(n),k}(d y^{k-1})} \sqrt{P_{{\bm t},1}^{(n),k}(dy^{k-1})}/
\prod_{k'=1}^k \int_{{\cal Y}^{k'-1}} F_{{\bm t}/\sqrt{n},f_{k'}(y^{k'-1})} 
P_{{\bm t},2}^{(n),k'}(d y^{k'-1}).
\ee
These definitions are quite similar to the definitions in \cite{adaptive-channel}.
The fidelity $F(P_{\bm{\theta}_0,0}^{(n)},P_{\bm{\theta}_0,{\bm t}}^{(n)})$
equals 
\begin{align}
\prod_{k=1}^n \int_{{\cal Y}^{k-1}} F_{{\bm t}/\sqrt{n},f_k(y^{k-1})} 
P_{{\bm t},2}^{(n),k}(d y^{k-1}).
\end{align}

Since the set ${\cal M}$ and the range of ${\bm t}$ are compact, 
the difference $8(1- F_{{\bm t}/\sqrt{n},M})n -
{\bm t}^Y  J_{{\bm{\theta}}}^M{\bm t}$
converges to zero uniformly with respect to $M$ and ${\bm t}$.
That is, the difference is uniformly upper bounded by $a_n$, and $a_n$ goes to zero.
Hence, 
\begin{align}
\log \prod_{k=1}^n \int_{{\cal Y}^{k-1}} F_{{\bm t}/\sqrt{n},f_k(y^{k-1})} 
P_{{\bm t},2}^{(n),k}(d y^{k-1}) %\nonumber\\
=&
\sum_{k=1}^n 
\log (\int_{{\cal Y}^{k-1}} F_{{\bm t}/\sqrt{n},f_k(y^{k-1})} 
P_{{\bm t},2}^{(n),k}(d y^{k-1})) \label{NN1}\\
\cong &
-\sum_{k=1}^n 
\int_{{\cal Y}^{k-1}}
\frac{1}{8n }{\bm t}^{\rm T} J_{{\bm{\theta}}}^{f_k(y^{k-1})}{\bm t}
P_{{\bm t},2}^{(n),k}(d y^{k-1})\label{NN2}
\end{align} 
where the difference between \eqref{NN1} and \eqref{NN2}
is upper bounded by $a_n$.
When ${\bm t}$ goes to zero, 
\begin{align}
\max_{k}1- F(P_{{\bm t},2}^{(n),k},P_{0,1}^{(n),k})
\le
1-F(\rho_{{\bm{\theta}}}^{\otimes n},
\rho_{{\bm{\theta}}+\epsilon {\bm t}/\sqrt{n}}^{\otimes n})
\cong
\frac{1}{8} {\bm t}^{\rm T} J_{{\bm{\theta}}}^{\rm S} {\bm t} \epsilon^2 .
\end{align}
This value goes to zero as $\epsilon \to 0$.
This fact means that
the difference between 
$P_{{\bm t},2}^{(n),k}$ and $P_{0,1}^{(n),k}$
is upper bonded uniformly with respect to $k$.
Therefore,
\begin{align}
&\frac{1}{\epsilon^2}\log \prod_{k=1}^n \int_{{\cal Y}^{k-1}} 
F_{\epsilon {\bm t}/\sqrt{n},f_k(y^{k-1})} 
P_{{\bm t},2}^{(n),k}(d y^{k-1})\nonumber \\
\cong &
-\sum_{k=1}^n 
\int_{{\cal Y}^{k-1}}
\frac{1}{8n } {\bm t}^{\rm T} J_{{\bm{\theta}}}^{f_k(y^{k-1})}{\bm t}
P_{0,1}^{(n),k}(d y^{k-1})\label{NN3}.
\end{align}
Now, we define
$M^{(n)}:=
\frac{1}{n}\sum_{k=1}^n 
\int_{{\cal Y}^{k-1}}
f_k(y^{k-1})
P_{0,1}^{(n),k}(d y^{k-1})$.
Notice that $f_k(y^{k-1})$ expresses a distribution on ${\cal M}$. 
\begin{align}
&\frac{1}{8}{\bm t}^{\rm T} J {\bm t}
=
-\lim_{\epsilon \to 0}
\log F(P_{\bm{\theta}_0,0},P_{\bm{\theta}_0,\epsilon {\bm t}})
\ge
-\lim_{\epsilon \to 0}
\log
\limsup_{n\to \infty}
F(P_{\bm{\theta}_0,0}^{(n)},P_{\bm{\theta}_0,{\bm t}}^{(n)}) \nonumber\\
=&
\lim_{n \to \infty}\frac{1}{8 } {\bm t}^{\rm T} J_{{\bm{\theta}}}^{M^{(n)}}{\bm t}
\label{NN3}.
\end{align}
Since the set ${\cal M}$ is compact,
there exist a POVM $M_0$ and a subsequence $\{M^{(n_l)}\}$
such that $M^{(n_l)}\to M_0$.
Hence, we obtain 
\begin{align}
&\frac{1}{8}{\bm t}^{\rm T} J{\bm t}
\ge \frac{1}{8 }{\bm t}^{\rm T} J_{{\bm{\theta}}}^{M_0}{\bm t}
\label{NN4}.
\end{align}
Since ${\bm t}$ is an arbitrary, we obtain \eqref{HJ10}.

\subsection{Proofs of \eqref{Ho-ine} and \eqref{Ho-ine-2}}\label{AC-4}
Let $\Pi$ be a locally unbiased estimator at ${\bm{\theta}}$.
We choose the operator $X_i:=\int (x_i-\theta_i) \Pi(dx)$.
The locally unbiased condition for $\Pi$ implies the condition
$\tr{\frac{\del}{\del\theta_j}\rho_{{\bm{\theta}}}X_i}=\delta_{i,j}$ for 
$i,j=1, \ldots, d$.

Next, we show the matrix inequality
\be\label{Hol3}
V_{\bm{\theta}}[\Pi] \ge Z_{\bm{\theta}}({\bm X}).
\ee

Let ${\bm a}=(a_i)$ be an arbitrary vector in $\mathbb{C}^d$.
We choose the operator 
$A:=\int \sum_i \bar{a}_i(x_i-\theta_i) \Pi(dx)$.
Then, in the same way as \eqref{EH3}, we have
\if0
\begin{align}
& \int |\sum_i \bar{a}_i(x_i-\theta_i)|^2 \Pi(dx)- AA^\dagger \nonumber\\
=&
\int ( (\sum_i \bar{a}_i(x_i-\theta_i))-A)
( (\sum_i \bar{a}_i(x_i-\theta_i))-A)^\dagger   \Pi(dx)
\ge 0.
\end{align}
\fi
\begin{align}
{\bm a}^\dagger V_{\bm{\theta}}[\Pi] {\bm a}
\ge
\tr{\rho_{\bm{\theta}} AA^\dagger}
={\bm a}^\dagger Z_{\bm{\theta}}({\bm X}){\bm a}.
\end{align}
Since ${\bm a}$ is an arbitrary vector in $\mathbb{C}^d$, we obtain 
\eqref{Hol3}.

Using \eqref{Hol3}, we have
\be
W^{1/2}V_{\bm{\theta}}[\Pi]W^{1/2} \ge W^{1/2}Z_{\bm{\theta}}({\bm X})W^{1/2}.
\ee
Thus,
\be
W^{1/2}V_{\bm{\theta}}[\Pi]W^{1/2} - 
W^{1/2} (\Re Z_{\bm{\theta}}({\bm X}))W^{1/2}
\ge 
W^{1/2} (\Im Z_{\bm{\theta}}({\bm X}))W^{1/2}.
\ee
Given an antisymmetric matrix $C$,
the minimum of the trace of symmetric matrices $B$ to satisfy 
the matrix inequality $B \ge i C$ is $\Tr{ |C|}$.
Hence, we have
\be
\Tr{W^{1/2}V_{\bm{\theta}}[\Pi]W^{1/2} - 
W^{1/2} (\Re Z_{\bm{\theta}}({\bm X}))W^{1/2}}
\ge 
\Tr{|W^{1/2} (\Im Z_{\bm{\theta}}({\bm X}))W^{1/2}|},
\ee
which implies
\be \label{Hol-5}
\Tr{ WV_{\bm{\theta}}[\Pi]} \ge
\Tr{W^{1/2} (\Re Z_{\bm{\theta}}({\bm X}))W^{1/2}}
+\Tr{|W^{1/2} (\Im Z_{\bm{\theta}}({\bm X}))W^{1/2}|}.
\ee
Taking the minimum with respect to ${\bm X}$,
we obtain \eqref{Ho-ine}.

We can show the inequality \eqref{Ho-ine-2} in the same way as 
\eqref{Ho-ine}.
Consider the model with nuisance parameters.
Let $\Pi$ be a locally unbiased estimator at ${\bm{\theta}}$.
We choose the operator $X_i:=\int (x_i-\theta_i) \Pi(dx)$
for $i=1, \ldots, \dI$.
Then, 
the vector ${\bm X}=(X_1,\ldots, X_{\dI})$ satisfies 
the condition
$\tr{\frac{\del}{\del\theta_j}\rho_{{\bm{\theta}}}X_i}=\delta_{i,j}$ for 
$i=1, \ldots, \dI$ and $j=1,\ldots, d$.
Since we have \eqref{Hol-5} in the same way,
taking the minimum with respect to ${\bm X}$,
we obtain \eqref{Ho-ine-2}.

%%%%%%%%%%%%%%%%%%%%%%%%%%%%%%%%%%%%%%%

\end{document}